\documentclass[%
aps,
%prl,
twocolumn,
superscriptaddress,
%groupedaddress,
%unsortedaddress,
%runinaddress,
%frontmatterverbose, 
%preprint,
%preprintnumbers,
%nofootinbib,
%nobibnotes,
%bibnotes,
amsmath,amssymb,
%prr,
pra,
%prb,
%rmp,
%prstab,
%prstper,
floatfix,
]{revtex4-2}
\usepackage{url}
\usepackage{graphicx}% Include figure files
\usepackage{placeins}
\usepackage{dcolumn}% Align table columns on decimal point
\usepackage{bm}% bold math
\usepackage[colorlinks, linkcolor=red, anchorcolor=green, citecolor=blue]{hyperref}
%\usepackage[mathlines]{lineno}% Enable numbering of text and display math
%\linenumbers\relax % Commence numbering lines

%\usepackage[showframe,%Uncomment any one of the following lines to test 
%%scale=0.7, marginratio={1:1, 2:3}, ignoreall,% default settings
%%text={7in,10in},centering,
%%margin=1.5in,
%%total={6.5in,8.75in}, top=1.2in, left=0.9in, includefoot,
%%height=10in,a5paper,hmargin={3cm,0.8in},
%]{geometry}

%%%%%%%%%%%%%%%%%%%%%%%%YX's Coding Command%%%%%%%%%%%%%%%%%%%%%%%%%%%%
%%%%%%%%%%%%%%%%%%%%%%%%%%%%%%%%%%%%%%%%%%%%%%%%%%%%%%%%%%%%%%%%%%%%%%%

%%%%%%%%%%%%%%%%%%%%%%%%%%%%%%%%%%%%%%%%%%%%%%%%%%%%

\usepackage{physics}
\usepackage{dsfont}
\usepackage{tikz,pgfplots}
\usepackage{enumerate}
\usepackage{subfigure}
\usepackage{bbold}
\usepackage{makecell}
\usepackage{array}
\makeatletter
\newcommand{\thickhline}{%
    \noalign {\ifnum 0=`}\fi \hrule height 1pt
    \futurelet \reserved@a \@xhline
}
\newcolumntype{"}{@{\hskip\tabcolsep\vrule width 1pt\hskip\tabcolsep}}
\makeatother
\usepackage{multirow}
\usepackage{tabu}
\usepackage{textcomp,booktabs}
\usepackage{colortbl}
\usepackage[T1]{fontenc}
\definecolor{mygray}{gray}{0.9}
\definecolor{mypink}{rgb}{0.99,0.91,0.95}
\definecolor{mycyan}{cmyk}{0.3,0,0,0}

\newcommand{\id}{\text{id}}

\newcommand{\mE}{\mathcal{E}}
\newcommand{\mF}{\mathcal{F}}
\newcommand{\mG}{\mathcal{G}}

\newcommand{\mQ}{\mathcal{Q}}
\newcommand{\mH}{\mathcal{H}}

\newcommand{\T}{\mathbf{T}}

\newcommand{\1}{\mathbb{1}}

\newcommand{\tvec}{\text{vec}}
\newcommand{\tmat}{\text{mat}}
\newcommand{\swap}{\text{SWAP}}

\usepackage{amsthm}
\newtheorem*{thm*}{Theorem}
\newtheorem{thm}{Theorem}
\newtheorem{cor}{Corollary}
\newtheorem{lem}{Lemma}

%%%%%%%%%%%%%%%%%%%%%%%%%%%%%%%%%%%%%%%%%%%%%%%%%%%%%%%%%%%%%%%%%%%%%%%
%%%%%%%%%%%%%%%%%%%%%%%%%%%%%%%%%%%%%%%%%%%%%%%%%%%%%%%%%%%%%%%%%%%%%%%

\begin{document}

%\preprint{APS/123-QED}

\title{
Superchannel without Tears:
A Generalized Occam's Razor for Quantum Processes
}

\author{Yunlong Xiao}
\email{mathxiao123@gmail.com}
\affiliation{Institute of High Performance Computing (IHPC), Agency for Science, Technology and Research (A*STAR), 1 Fusionopolis Way, \#16-16 Connexis, Singapore 138632, Republic of Singapore}
%\affiliation{Quantum Innovation Centre (Q.InC), Agency for Science Technology and Research (A*STAR), 2 Fusionopolis Way, Innovis \#08-03, Singapore 138634, Republic of Singapore}

\newcommand{\hilb}{\mathcal{H}}

\date{\today}
             
%%%%%%%%%%%%%%%%%%%%%%%%%%%%%%%%%%%%%%%%%%%%%%%%%%%%
\begin{abstract} 
Quantum channels function as the operational primitives of quantum theory, while superchannels describe the most general transformations acting upon them.
Yet the prevailing framework for superchannels is both internally inconsistent, owing to the coexistence of distinct Choi operator constructions, and structurally incomplete, lacking the analogue of representations that ground channel theory.
We resolve these issues by combining tensor-network methods with a generalized Occam's razor introduced here, establishing a unified foundation for superchannels.
Our framework establishes the connections between competing Choi formulations, develops the Kraus, Stinespring, and Liouville representations for superchannels, and provides a simplified derivation of the realization theorem that identifies the minimal memory required to implement a given transformation. 
These structural tools also enable characterizations of superchannels that destroy quantum correlations or causal structure, opening a systematic route to non-Markovian quantum dynamics.
\end{abstract}
%%%%%%%%%%%%%%%%%%%%%%%%%%%%%%%%%%%%%%%%%%%%%%%%%%%%

\maketitle
%\tableofcontents

%\newpage

%%%%%%%%%%%%%%%%%%%%%%%%%%%%%%%%%%%%%%%%%%%%%%%%%%%%%%%%%%%%%%%%%%%%%%%
%%%%%%%%%%%%%%%%%%%%%%%%%%%%%%%%%%%%%%%%%%%%%%%%%%%%%%%%%%%%%%%%%%%%%%%

\section{Introduction}\label{sec:Intro}

Quantum channels provide not only the elementary ``Lego pieces'' of quantum theory and emerging quantum technologies, but also the fundamental carriers of quantum resources that enable phenomena with no classical analogue. 
In quantum communication~\cite{Gisin2007}, the noise channel itself is the resource we must contend with~\cite{holevo1973bounds,PhysRevA.55.1613,PhysRevA.56.131,PhysRevLett.83.3081,4626055,7115934,PhysRevLett.124.120502}; 
in fault-tolerant quantum computing~\cite{548464,Steane1999,PRXQuantum.5.020101}, non-Clifford gates such as the T gate supply the universality~\cite{gottesman1997stabilizercodesquantumerror,gottesman1998heisenbergrepresentationquantumcomputers,Nielsen_Chuang_2010}; 
and in quantum algorithms~\cite{RevModPhys.68.733,RevModPhys.82.1,Montanaro2016,childs2017lecture,Dalzell2025}, oracles act as coherent encoding of functions that can be queried in superposition~\cite{Grover1996,Shor1997,Simon1997}. 
What unifies all these ingredients is that they are, at their core, quantum channels -- completely positive trace-preserving (CPTP) maps whose behavior sets the objective limits of what quantum systems can achieve when we use them only as given. 
Decades of work have endowed the explorations of quantum channels with a mature and versatile toolkit, spanning the Choi–Jamio\l kowski isomorphism, Kraus decompositions, Stinespring dilation, and the Liouville superoperator, each offering a distinct window into the structure and behavior of the underlying quantum dynamics.

However, when we begin to manipulate these channels as dynamical resources in their own right, we move beyond the objective limit and toward a more subjective one, where higher-order transformations allow us to reshape, refine, or even repurpose the dynamics themselves. 
This transition brings us to the framework of superchannels, which formalize such transformations and capture, in full generality, how one quantum process can convert a quantum channel into another.
Physically, a quantum superchannel can be realized as two sequential channels linked by a quantum memory, enabling interactions across multiple time points. 
Mathematically, however, the theory is anchored almost entirely in its Choi operators. Unlike channels, superchannels admit two different Choi operator constructions; 
one obtained by feeding unnormalized maximally entangled states into all inputs~\cite{PhysRevLett.101.060401}, the other by applying the superchannel to a basis of linear maps~\cite{8678741}. 
Both have been widely used, yet their relationship has remained unclear, leaving the current framework conceptually inconsistent.
Beyond these Choi formulations, other structural representations remain comparatively underdeveloped, highlighting that the current framework for superchannels remains incomplete at a fundamental level.

\begin{figure*}[t]
    \centering   
    \includegraphics[width=1\textwidth]{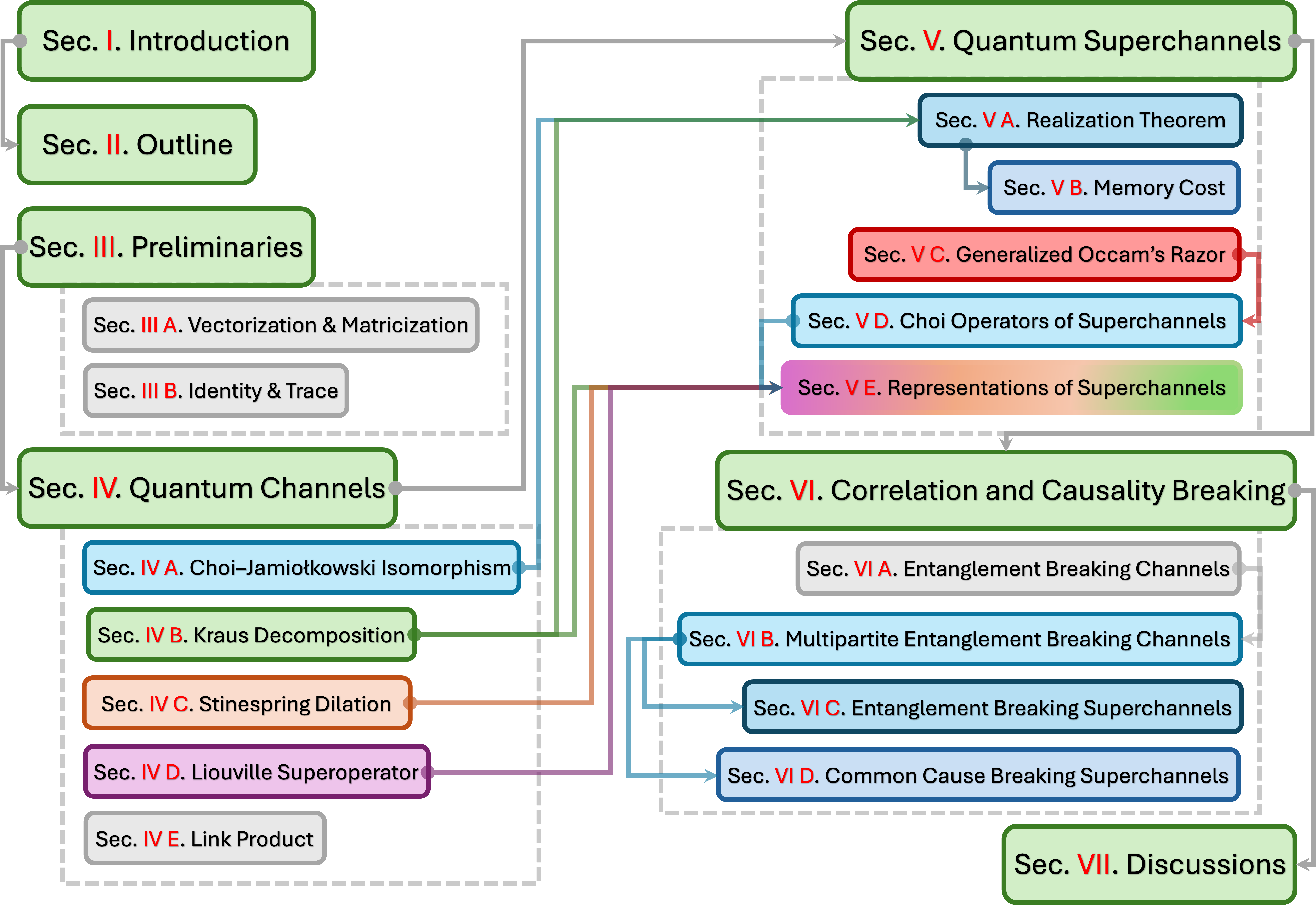}
    \caption{(Color online) \textbf{Schematic Overview of the Work}.
    Sec.~\ref{sec:Intro} sets the introduction, and Sec.~\ref{sec:Outline} outlines the structure of the work. 
    Secs.~\ref{sec:Pre} and~\ref{sec:QChannels} establish the foundational tools: tensor-network notation and the unified representations of quantum channels. 
    Sec.~\ref{sec:QSuperchannels} develops the superchannel framework -- resolving the Choi-level inconsistency, completing the structural representations, and determining memory cost through the realization theorem. 
    Sec.~\ref{sec:EB} applies this framework to superchannels that break quantum correlations or causal structure, while Sec.~\ref{sec:Discussions} provides the concluding perspective.
   }
    \label{fig:Outline}
\end{figure*}

In this work, we resolve both the inconsistency and the incompleteness of the superchannel theory. 
Two conceptual ``razors'' guide our approach. 
The first is the use of tensor-network representations as a structural Occam's razor: 
they provide a transparent language that simplifies the realization theorem for superchannels and offers clear physical insight into the memory cost required to simulate them. The second is a generalized Occam's razor introduced here, which asserts that when two competing formulations exhibit comparable complexity, preference should be given to the one that aligns most naturally with established theories, for no framework stands entirely independent of the broader theoretical landscape.
We show that the two Choi constructions of superchannels, proposed in Ref.~\cite{PhysRevLett.101.060401} and~\cite{8678741}, are equivalent up to system permutations, and this equivalence, together with the generalized Occam's razor, selects the formulation obtained by feeding unnormalized maximally entangled states into all input systems, as it is rooted directly in well-established channel theory. 
This resolves the inconsistency of the superchannel theory. With this foundation in place, we construct the Kraus decomposition, Stinespring dilation, and Liouville superoperator for superchannels, completing the structural toolkit. 
As an application, we introduce superchannels that break quantum correlations or causality, and provide complete characterizations for them. 

%%%%%%%%%%%%%%%%%%%%%%%%%%%%%%%%%%%%%%%%%%%%%%%%%%%%%%%%%%%%%%%%%%%%%%%
%%%%%%%%%%%%%%%%%%%%%%%%%%%%%%%%%%%%%%%%%%%%%%%%%%%%%%%%%%%%%%%%%%%%%%%

\section{Outline}\label{sec:Outline}

This section offers a concise overview of the paper's structure and central results, serving as a guide for navigating the development of our framework.
Fig.~\ref{fig:Outline} presents a schematic summary.
The work is organized into three parts, each with its own thematic focus:

\begin{itemize}
  \item Foundational Preparations. 
  In the first part, we lay the structural foundations of our framework, introducing the essential preparations and conventions on which the later developments rely.
  Sec.~\ref{sec:Pre} introduces the tensor-network notation and conventions (see Sec.~\ref{subsec:Vec}) that underpin the technical development of our framework, together with a set of diagrammatic rules (see Sec.~\ref{subsec:Id&Tr}).
  Building on this foundation,
  Sec.~\ref{sec:QChannels} present a unified graphical treatment of the four equivalent representations of quantum channels -- the Choi–Jamio\l kowski isomorphism (see Sec.~\ref{subsec:Choi}), Kraus decomposition (see Sec.~\ref{subsec:Kraus}), Stinespring dilation (see Sec.~\ref{subsec:Stinespring}), and the Liouville superoperator (see Sec.~\ref{subsec:Liouville}) -- as well as the composition of channels via the link product (see Sec.~\ref{subsec:Link_Product}).
  \item Superchannel Framework.
  In the second part, we address both the inconsistency and the incompleteness of the theory of superchannels. 
  Sec.~\ref{subsec:Realization} employs our first ``razor'', tensor-network methods, to provide an alternative and more transparent derivation of the realization theorem, and to furnish an intuitive route for determining the memory cost required to simulate a given superchannel, further developed in Sec.~\ref{subsec:Memory_Cost}. 
  Before proceeding, Sec.~\ref{subsec:Bi_Channels} introduces our second ``razor'', the generalized Occam's razor, which guides the choice of a canonical Choi operator on which the complete representational framework is built. 
  In Sec.~\ref{subsec:SC_Choi}, we show that the two commonly used Choi constructions for superchannels are equivalent up to system permutations, and the generalized Occam's razor singles out the formulation that takes a bipartite-channel perspective as the natural canonical choice. 
  With this foundation in place, Sec.~\ref{subsec:SC_Representations} develops the remaining three structural representations, Kraus decomposition (see Thm.~\ref{thm:Superchannel_Krasu}), Stinespring dilation (see Thm.~\ref{thm:Superchannel_Stinespring}), and the Liouville superoperator (see Thm.~\ref{thm:Superchannel_Liouville}), thereby completing the representational toolkit for superchannels.
  \item Correlation and Causality Breaking.
  In the final part, we apply the framework developed in this work to investigate superchannels that break quantum correlations and causality. 
  Sec.~\ref{subsec:EBC} revisits the notion of entanglement breaking channels, after which Sec.~\ref{subsec:MEBC} extends the discussion to multipartite settings. 
  Building on these ideas, Sec.~\ref{subsec:EBS} introduces a new form of entanglement breaking superchannel motivated by practical quantum memory architectures and establishes their structural characterizations. 
  Finally, Sec.~\ref{subsec:CCBS} turns from correlations to causality, where we introduce and analyze the concept of common cause breaking superchannels.
\end{itemize}

The framework developed here places superchannel theory on firm foundations, and for readers new to the subject, this work may also serve as an invitation to the wonderland of superchannels and higher-order quantum information processing.

%%%%%%%%%%%%%%%%%%%%%%%%%%%%%%%%%%%%%%%%%%%%%%%%%%%%%%%%%%%%%%%%%%%%%%%
%%%%%%%%%%%%%%%%%%%%%%%%%%%%%%%%%%%%%%%%%%%%%%%%%%%%%%%%%%%%%%%%%%%%%%%

\section{Preliminaries}\label{sec:Pre}

In this work, we adopt an intuitive yet rigorous perspective on quantum processes, with a focus on quantum channels and superchannels. 
To keep the underlying structure transparent, we use the tensor-network formalism, whose diagrammatic language allows us to express the key constructions in a clean and compact way. 
This section introduces the essential concepts and conventions that will serve as the foundation for the developments that follow.
Readers already comfortable with these notions may wish to skip this section and proceed directly to the next.

%%%%%%%%%%%%%%%%%%%%%%%%%%%%%%%%%%%%%%%%%%%%%%%%%%%%%%%%%%%%%%%%%%%%%%%
%%%%%%%%%%%%%%%%%%%%%%%%%%%%%%%%%%%%%%%%%%%%%%%%%%%%%%%%%%%%%%%%%%%%%%%

\subsection{Vectorization and Matricization}\label{subsec:Vec}

Central to our framework is the use of {\it vectorization} and {\it matricization}, expressed in the language of tensor networks, which offers a transparent and visually intuitive way to represent and manipulate quantum processes. 
This viewpoint is particularly useful when establishing the realization theorem for superchannels (see Thm.~\ref{thm:RT}) and analyzing structural features of quantum channels (see Sec.~\ref{sec:QChannels}). 
Alongside familiar operations such as the partial trace and partial transpose, we introduce their higher-order counterparts, {\it partial vectorization} (see Eq.~\eqref{TN:Partial_Vectorization}) and {\it partial matricization} (see Eq.~\eqref{TN:Partial_Matricization}), which serve as technical building blocks throughout the remainder of the work. 

Consider an operator $M$, taken here simply as a matrix, that maps a vector in the Hilbert space $\mH_{A}$ to one in $\mH_{B}$. 
In the tensor-network representation, such an operator is drawn as a box with two legs, an input leg on the left and an output leg on the right, following the left-to-right flow of time used throughout this work.
This is illustrated on the left-hand side of Eq.~\eqref{TN:Vectorization} below. 
\begin{align}\label{TN:Vectorization}
    \raisebox{0ex}{\includegraphics[height=7em]{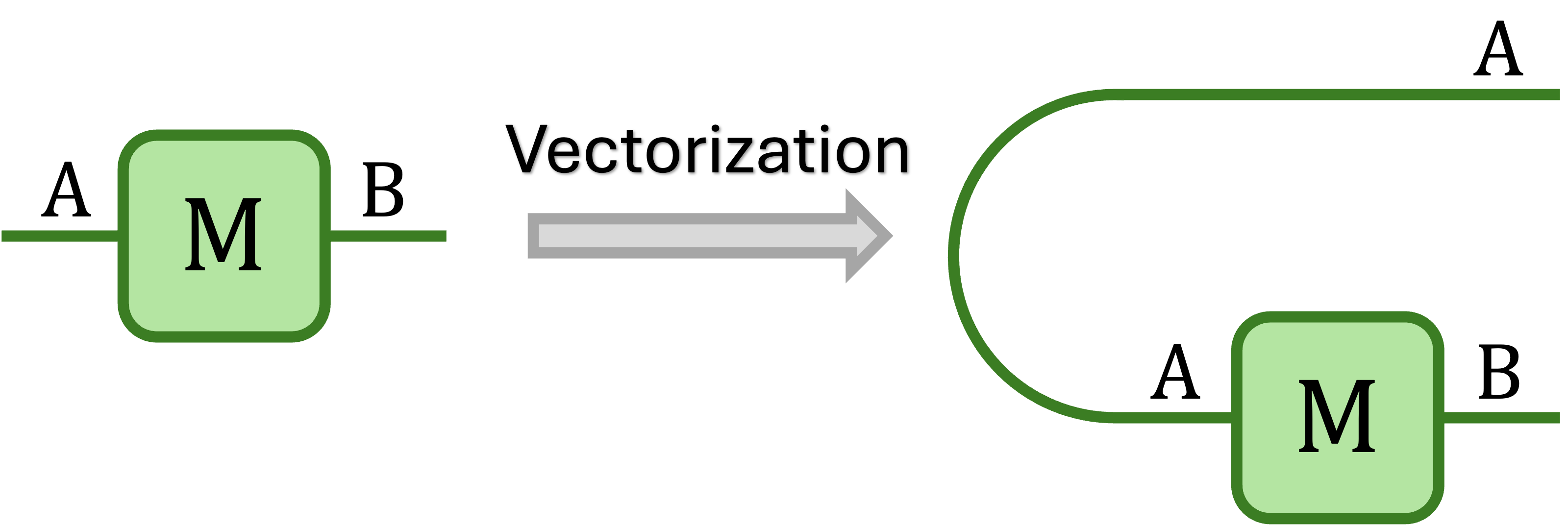}}
\end{align}
The curved arc on the right-hand side of the diagram depicts the unnormalized maximally entangled state
\begin{align}\label{eq:UMES}
    \ket{\Gamma}_{AA}:=\sum_{i}\ket{ii}_{AA}.
\end{align}
For simplicity, we will henceforth use $A$ and $B$ to denote the corresponding Hilbert spaces. 
The vectorization of $M$, denoted $\tvec(M)$, is defined as
\begin{align}
    \tvec(M):=
    \left(\1_{A}\otimes M\right) \ket{\Gamma}_{AA},
\end{align}
where $\1$ is the identity matrix. 
The corresponding tensor-network representation is shown on the right-hand side of Eq.~\eqref{TN:Vectorization}.
Note that, in some papers, this construction is alternatively written as $|M\rangle\!\rangle$.
A basic identity we will use repeatedly is the following
\begin{align}\label{TN:T}
    \raisebox{0ex}{\includegraphics[height=16em]{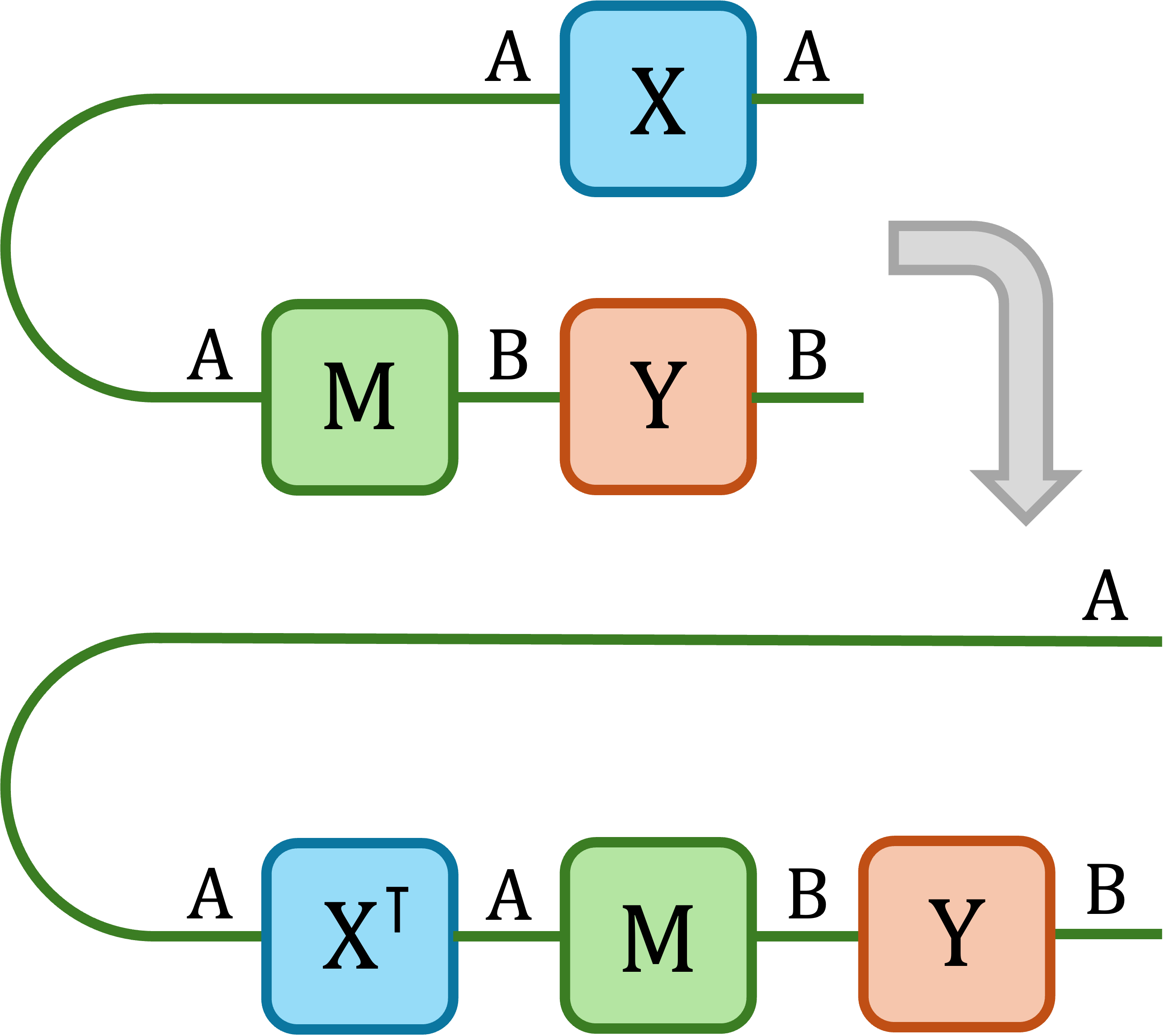}}
\end{align}
when a matrix is moved across a curved arc in the diagrammatic calculus, it is converted into its transpose. 
In algebraic form, this reads
\begin{align}\label{eq:T}
    (X\otimes Y)\tvec(M)=
    \tvec(YMX^{\T}).
\end{align}

More structure emerges when $M$ acts on multiple subsystems. 
A representative case is shown on the left-hand side of Eq.~\eqref{TN:Partial_Vectorization}, where the map sends vectors from the systems $A_1 A_2$ to $B_1 B_2$. 
Tensor product symbol $\otimes$ between subsystems are omitted.
\begin{align}\label{TN:Partial_Vectorization}
    \raisebox{0ex}{\includegraphics[height=9em]{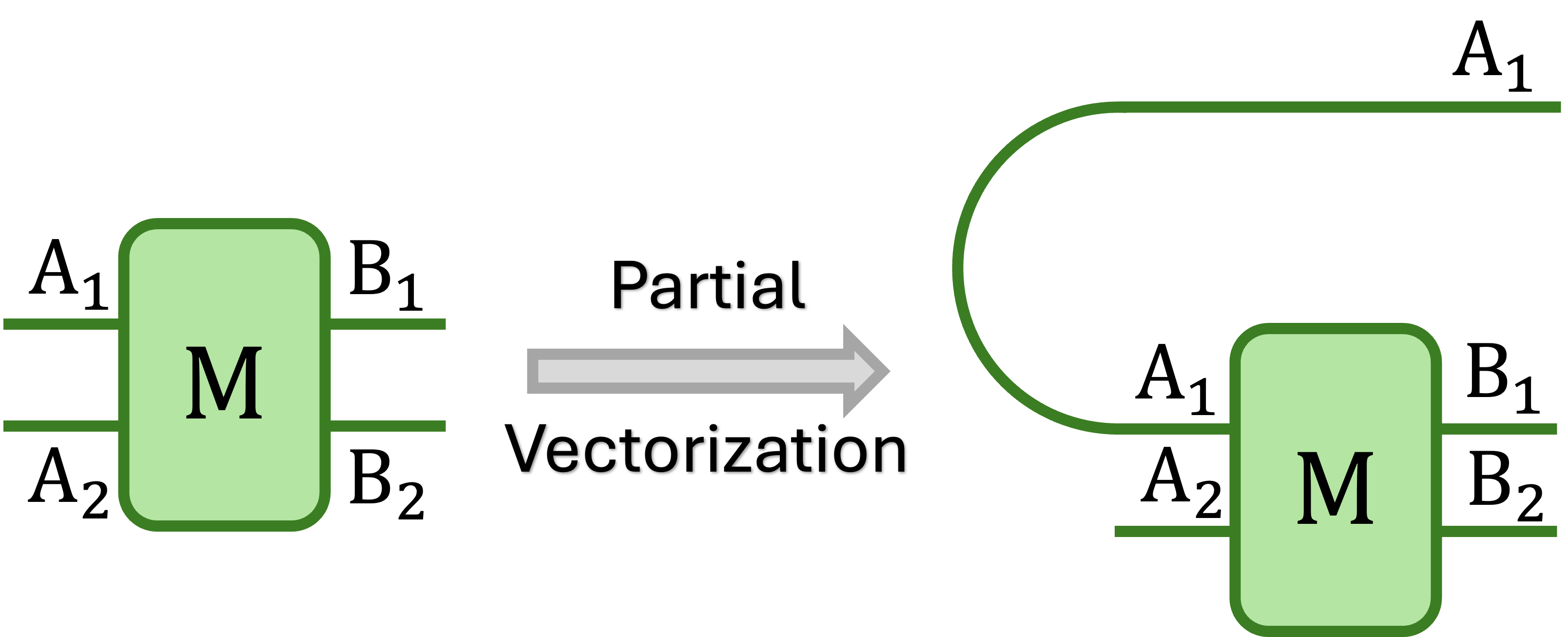}}
\end{align}
In this multiple subsystems scenario, we may also choose to vectorize only a selected subsystem, say $A_1$, yielding the partially vectorized operator $\tvec_{A_1}(M)$. 
\begin{align}
    \tvec_{A_1}(M):=
    \left(\1_{A_1}\otimes M \right) \ket{\Gamma}_{A_1A_1},
\end{align}
Its tensor-network representation is illustrated on the right-hand side of Eq.~\eqref{TN:Partial_Vectorization}, highlighting how subsystem-specific vectorization naturally extends the construction.
Remark that, here $\tvec_{A_1}(M)$ remains an operator, now acting on vectors in $A_2$ and producing outputs on the systems $A_1B_1B_2$.
Applying the same vectorization procedure to subsystem $A_2$ leads to the {\it full vectorization} of $M$, 
\begin{align}
    \tvec(M)
    :=
    \left(\1_{A_1A_2}\otimes M\right) \ket{\Gamma}_{A_1A_1}\otimes\ket{\Gamma}_{A_2A_2},
\end{align}
as demonstrated in Eq.~\eqref{TN:Full_Vectorization}.
\begin{align}\label{TN:Full_Vectorization}
    \raisebox{0ex}{\includegraphics[height=9em]{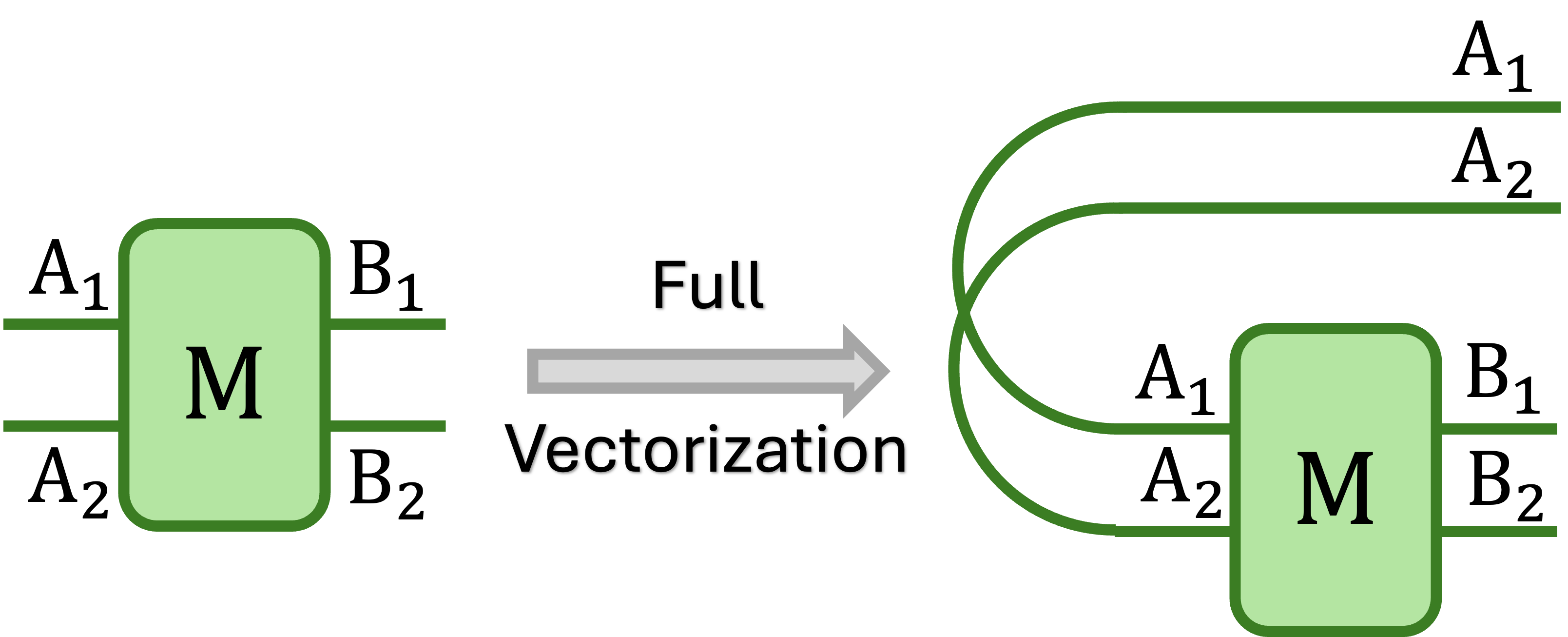}}
\end{align}

Up to this point, all forms of vectorization introduced in Eqs.~\eqref{TN:Vectorization},~\eqref{TN:Partial_Vectorization},~\eqref{TN:Full_Vectorization}, whether partial or full, should be understood not only as mappings from a matrix to a vector or another matrix, but as bijective constructions. 
Each vectorized object uniquely corresponds to the original expression on the left-hand side of Eqs.~\eqref{TN:Vectorization},~\eqref{TN:Partial_Vectorization},~\eqref{TN:Full_Vectorization} and can be recovered from it without any ambiguity.

More precisely, we define the inverse of vectorization, referred to as {\it matricization} and denoted $\tmat$.
Given a vector $\tvec(M)$ defined on systems $AB$ (see Eq.~\eqref{TN:Vectorization}), we matricize it with respect to subsystem $A$ by
\begin{align}\label{eq:Matricization}
    \tmat_{A}(\tvec(M)):=\bra{\Gamma}_{AA}
    \left(\1_{A}\otimes \tvec(M)\right)=M,
\end{align}
The resulting object is an operator that maps vectors from system $A$ to system $B$, as depicted in the tensor-network representation below
\begin{align}\label{TN:Matricization}
    \raisebox{0ex}{\includegraphics[height=10em]{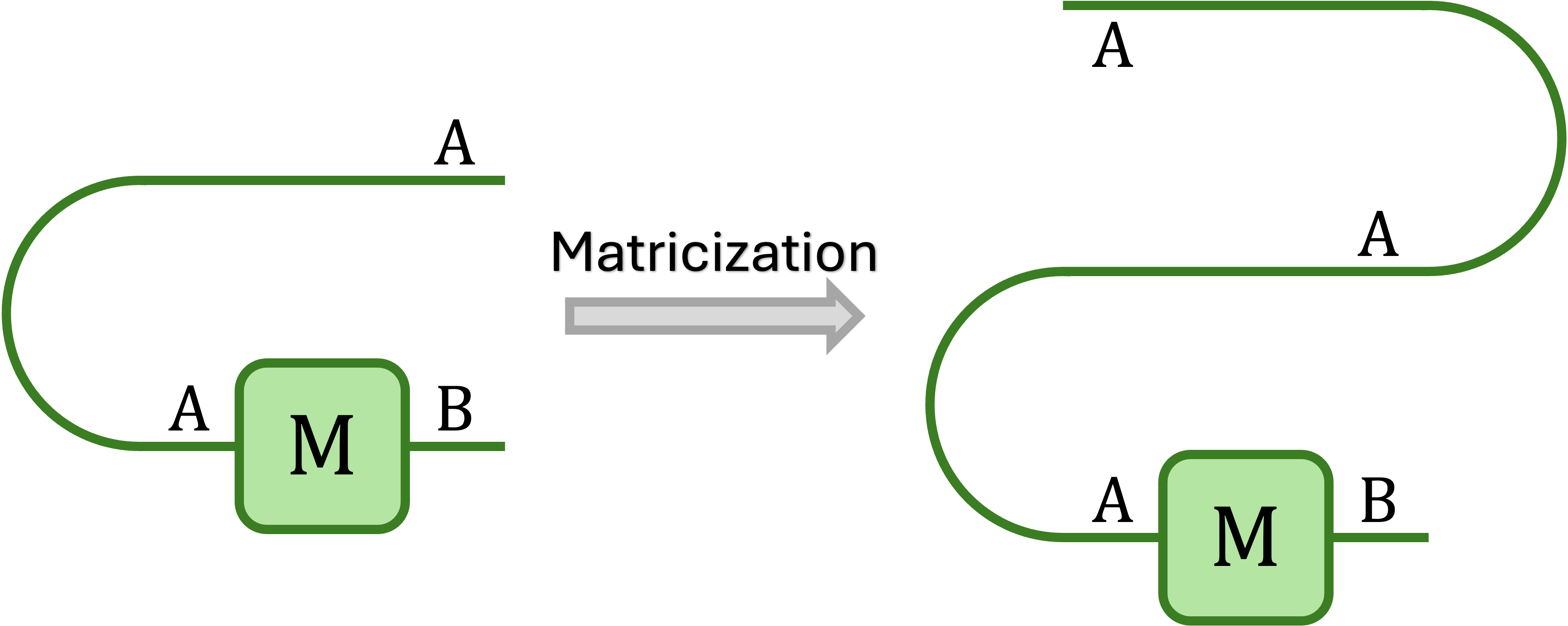}}
\end{align}
For the identity in Eq.~\eqref{eq:Matricization}, we have used the snake equation, which we will set out explicitly in Sec.~\ref{subsec:Id&Tr}.

Following the same idea, we can extend this notion to {\it partial matricization} for operators acting on multiple subsystems.
For the operator $M$ in Eq.~\eqref{TN:Partial_Vectorization}, we define its partial matricization over system $B_1$ as
\begin{align}\label{eq:Partial_Matricization}
    \tmat_{B_1}(M):=\bra{\Gamma}_{B_1B_1}
    \left(\1_{B_1}\otimes M\right),
\end{align}
its tensor-network representation is shown in the following diagram
\begin{align}\label{TN:Partial_Matricization}
    \raisebox{0ex}{\includegraphics[height=9em]{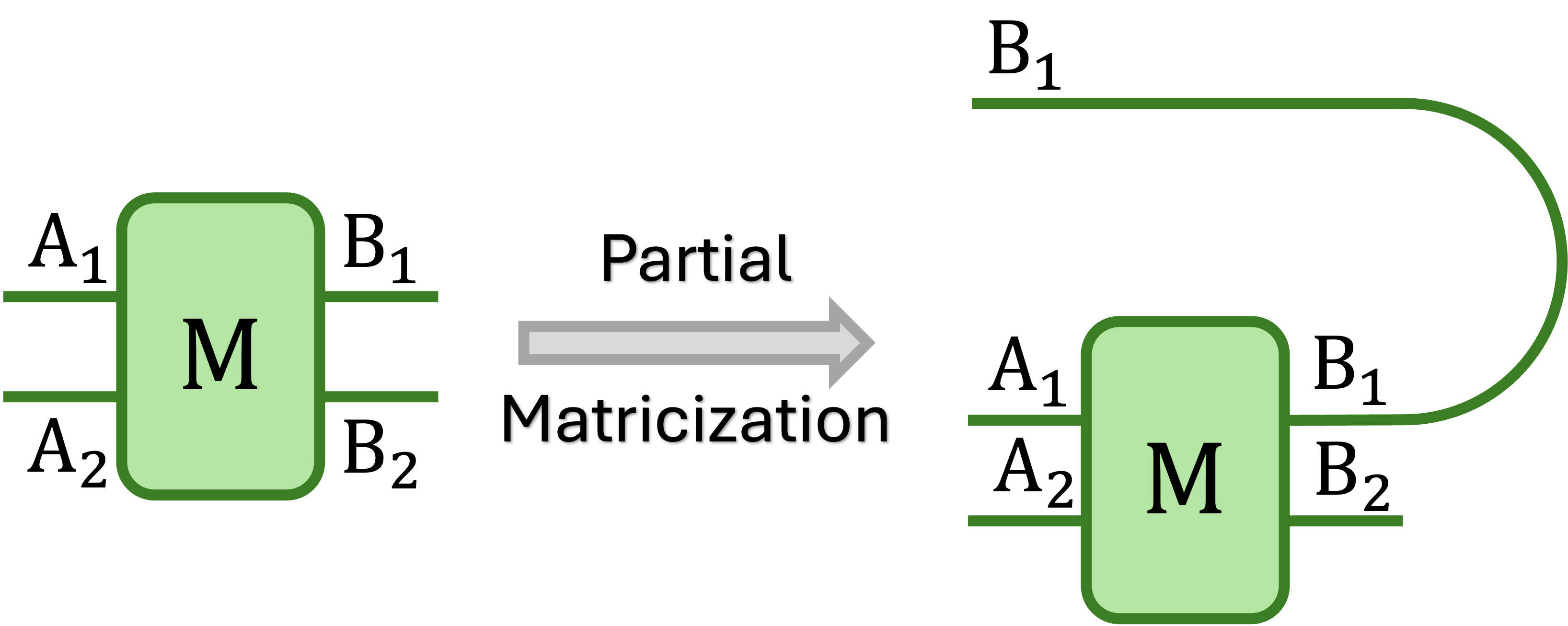}}
\end{align}
In an analogous manner, one can also define the {\it full matricization}. 
These notations will be essential in later sections, especially in our investigation of the Kraus operators of quantum superchannels;
see Sec.~\ref{subsec:SC_Representations} for further details.

%%%%%%%%%%%%%%%%%%%%%%%%%%%%%%%%%%%%%%%%%%%%%%%%%%%%%%%%%%%%%%%%%%%%%%%
%%%%%%%%%%%%%%%%%%%%%%%%%%%%%%%%%%%%%%%%%%%%%%%%%%%%%%%%%%%%%%%%%%%%%%%

\subsection{Identity and Trace}\label{subsec:Id&Tr}

Beyond vectorization, we introduce two additional notions and their representations in tensor-networks: the identity operator, and the trace operation. 
The identity operator is captured by the well-known snake (or zig-zag) equation. 
\begin{align}\label{TN:Snake}
    \raisebox{0ex}{\includegraphics[height=7em]{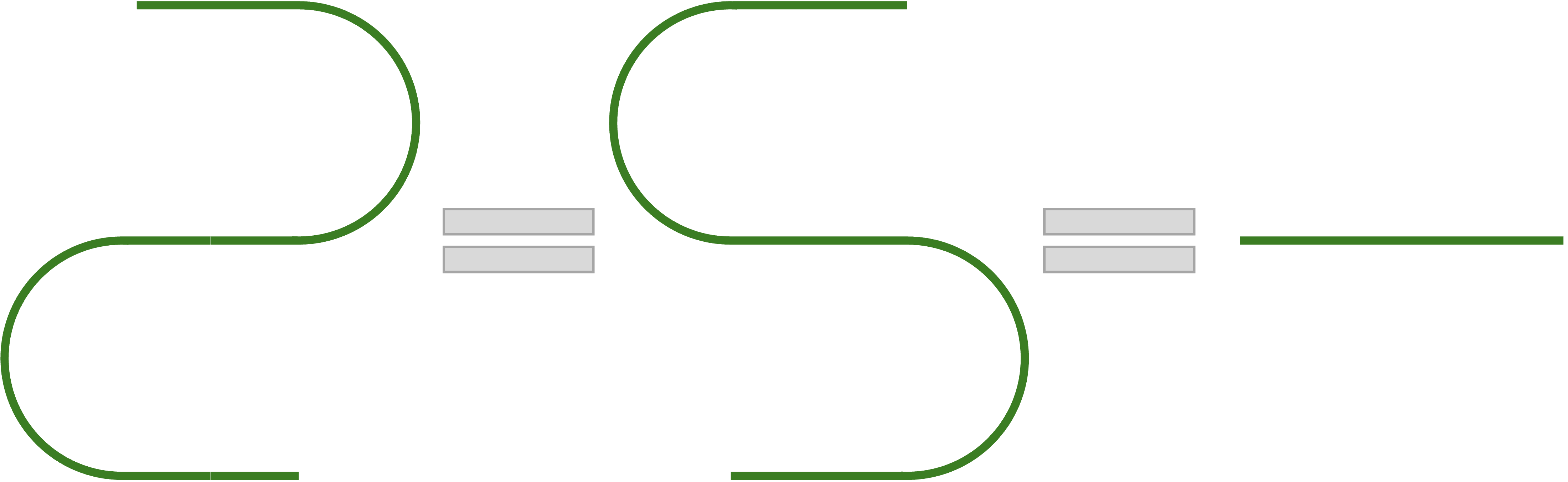}}
\end{align}
In this diagrammatic identity, the left arc represents the unnormalized maximally entangled state $\ket{\Gamma}$, whereas the right arc depicts its adjoint $\bra{\Gamma}$.
Using the algebraic decomposition $I=\sum_i\ketbra{i}{i}$, the identity operator can be recast in a tensor-network form where repeated indices $i$ are implicitly summed. 
\begin{align}\label{TN:Identity}
    \raisebox{0ex}{\includegraphics[height=9.5em]{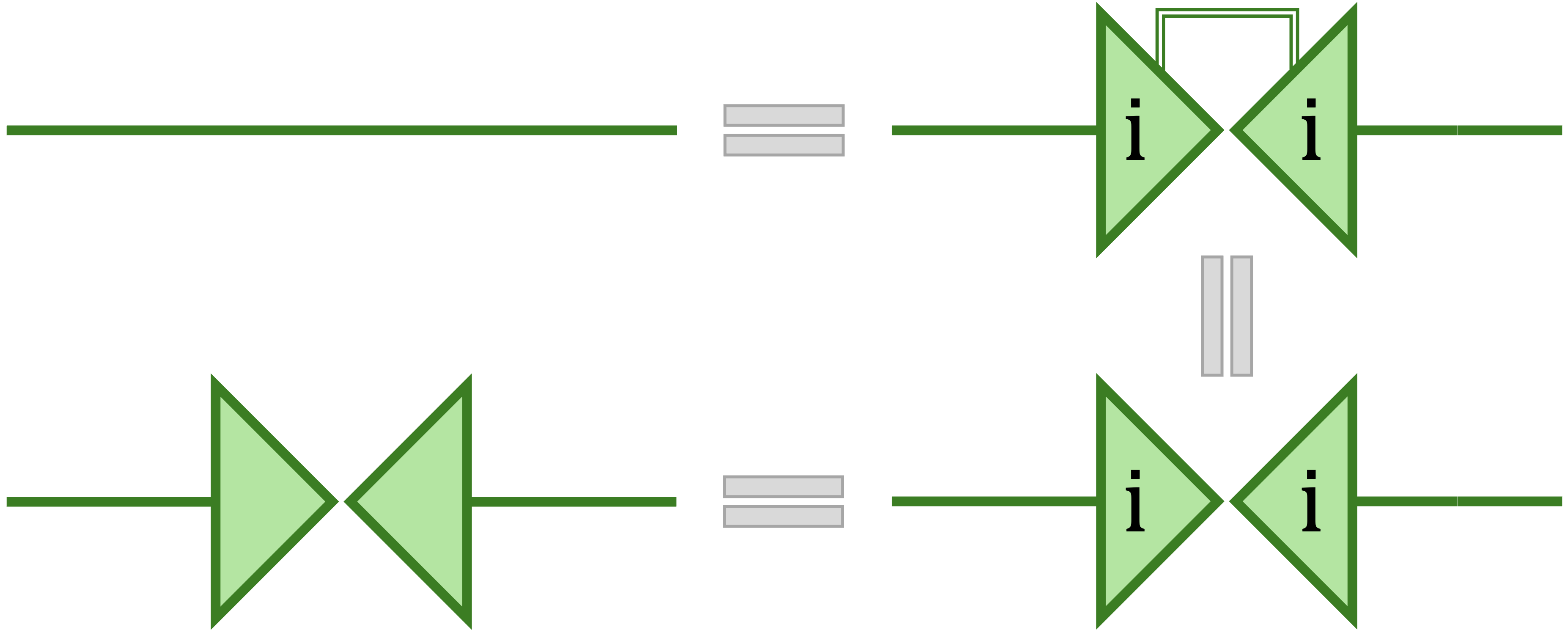}}
\end{align}
Here, A triangle oriented with its leg to the right stands for $\ket{i}$; reversing the orientation yields its dual $\bra{i}$.

For visual simplicity, in Eq.~\eqref{TN:Identity}, we omit the double line that would normally denote this contraction.
Likewise, we suppress the explicit index labels inside the triangles and use color to indicate that they share the same summed index. 
As an illustration, when two identity operators act on different systems, their tensor-network representations take the following form.
\begin{align}\label{TN:2Identities}
    \raisebox{0ex}{\includegraphics[height=7.5em]{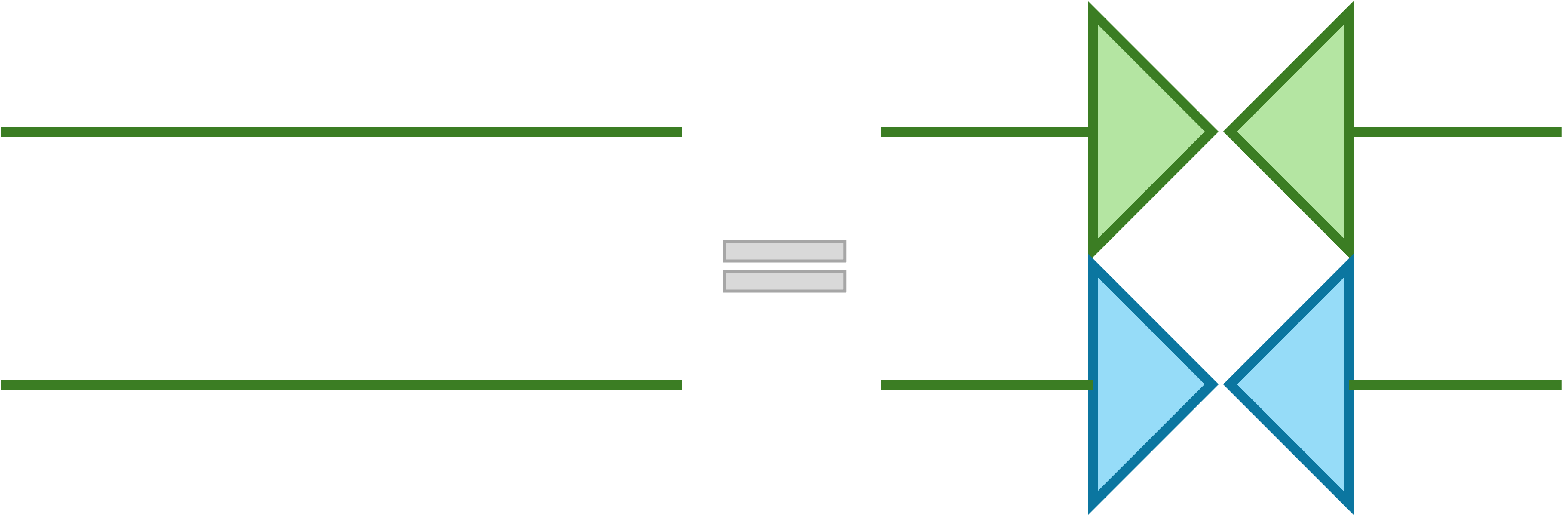}}
\end{align}

Another useful notion is the trace operation, expressible as $\Tr[\cdot]=\sum_i\bra{i}\cdot\ket{i}$, whose diagrammatic form is shown below.
\begin{align}\label{TN:Trace}
    \raisebox{0ex}{\includegraphics[height=4em]{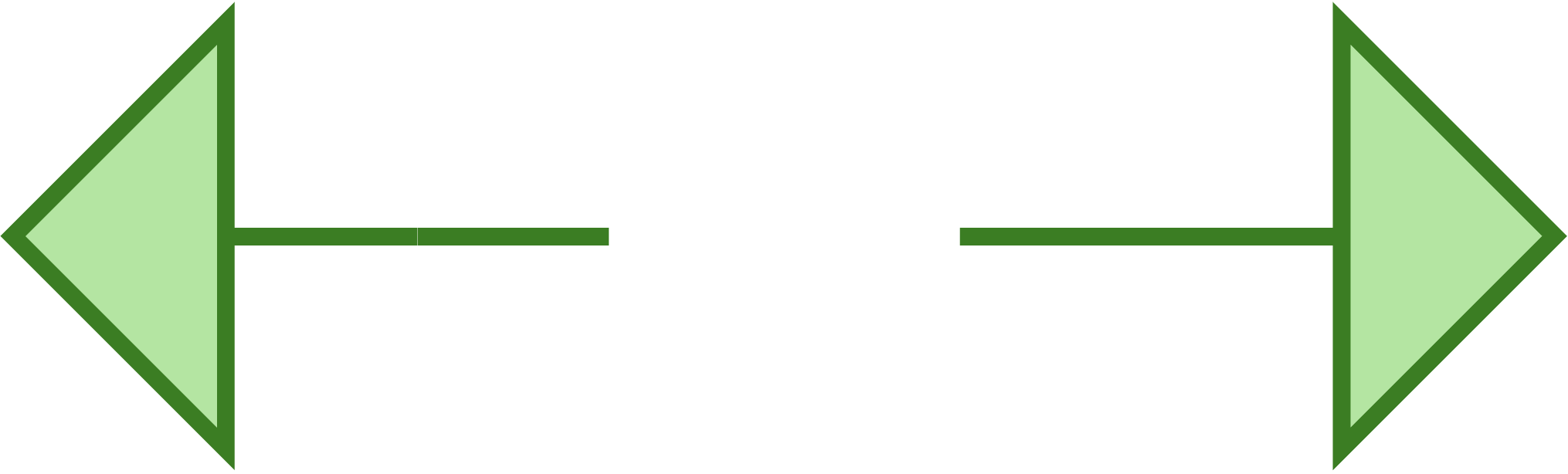}}
\end{align}

For readers seeking a more detailed background on tensor-networks, Refs.~\cite{wood2015tensornetworksgraphicalcalculus,Coecke_Kissinger_2017,Bridgeman_2017,biamonte2020lecturesquantumtensornetworks,Collura_2024} provide comprehensive and accessible introductions. 
We note, however, that the orientation conventions for vectors and the placement used in vectorization differ slightly across the literature. 
In particular, some works attach the operator being vectorized to the upper leg of $\ket{\Gamma}$ (see Eq.~\eqref{TN:Vectorization}), whereas in our convention it is placed on the lower leg.
Likewise, certain graphical calculi depict bras using triangles whose outgoing leg points to the right -- opposite to the convention adopted here. 
These differences are purely notational and should be clear from context.

%%%%%%%%%%%%%%%%%%%%%%%%%%%%%%%%%%%%%%%%%%%%%%%%%%%%%%%%%%%%%%%%%%%%%%%
%%%%%%%%%%%%%%%%%%%%%%%%%%%%%%%%%%%%%%%%%%%%%%%%%%%%%%%%%%%%%%%%%%%%%%%

\section{Quantum Channels}\label{sec:QChannels}

In this section, we establish the theoretical foundation of our work by revisiting the concept of quantum channels. 
As the fundamental carriers of physical transformations, quantum channels provide a unified description of state preparation, coherent evolution, and quantum measurement -- capturing every admissible process allowed by quantum mechanics.
Mathematically, a quantum channel is a linear map that is completely positive (CP) and trace-preserving (TP). 
Depending on the perspective taken, such maps admit several equivalent representations. 
The most commonly used are the Choi–Jamio\l kowski isomorphism, the Kraus decomposition, and the Stinespring dilation, and the Liouville superoperator (or natural representation, see Ref.~\cite{Watrous_2018} for details). 
Each highlights a different structural aspect of the underlying transformation.
In addition to these representations, we also introduce the composition of quantum channels via the link product, which serves as a fundamental tool in our subsequent analysis.

%%%%%%%%%%%%%%%%%%%%%%%%%%%%%%%%%%%%%%%%%%%%%%%%%%%%%%%%%%%%%%%%%%%%%%%
%%%%%%%%%%%%%%%%%%%%%%%%%%%%%%%%%%%%%%%%%%%%%%%%%%%%%%%%%%%%%%%%%%%%%%%

\subsection{Choi–Jamio\l kowski Isomorphism}\label{subsec:Choi}

A powerful and widely used way in quantum information theory is to represent a quantum channel by a matrix via an isomorphism. 
This correspondence, known as the Choi–Jamio\l kowski isomorphism~\cite{JAMIOLKOWSKI1972275,CHOI1975285}, establishes a one-to-one link between CPTP maps and bipartite quantum states, allowing structural properties of the channel to be read directly from its associated operator.
For a given linear map $\mE: A\to B$, the isomorphism assigns it to its corresponding Choi operator $J^{\mE}_{AB}$, defined as
\begin{align}\label{eq:Choi}
    J^{\mE}_{AB}:=\id_{A\to A}\otimes\mE_{A\to B}(\Gamma_{AA}),
\end{align}
where $\Gamma:=\ketbra{\Gamma}{\Gamma}$ denotes the unnormalized maximally entangled state. 
When the underlying systems are clear from context, we omit the subscripts to streamline notation.

A linear map $\mE$ is completely positive precisely when its Choi operator $J^{\mE}$ is positive semidefinite, i.e., $J^{\mE}\geqslant0$. 
In turn, the map is trace preserving when its Choi operator satisfies $\Tr_{B}[J^{\mE}]=\1_{A}$.
The positivity of $J^{\mE}$ ensures that it admits a decomposition into a sum of vectors multiplied by their duals. 
Using the standard correspondence (see Eq.~\eqref{TN:Vectorization}) between vectors and operators, this decomposition can, without loss of generality, be written as
\begin{align}\label{eq:Choi-decom}
    J^{\mE}=\sum_i\tvec{(K_i)}\tvec{(K_i)}^{\dag}.
\end{align}
Let $r$ denote the rank of the Choi operator $J^{\mE}$ associated with $\mE$. 
This rank sets the minimal number of vectors that can appear in the decomposition of Eq.~\eqref{eq:Choi-decom}. 
In tensor-network form, the matrix $J^{\mE}$ takes the structure shown below.
\begin{align}\label{TN:Choi}
    \raisebox{0ex}{\includegraphics[height=7.5em]{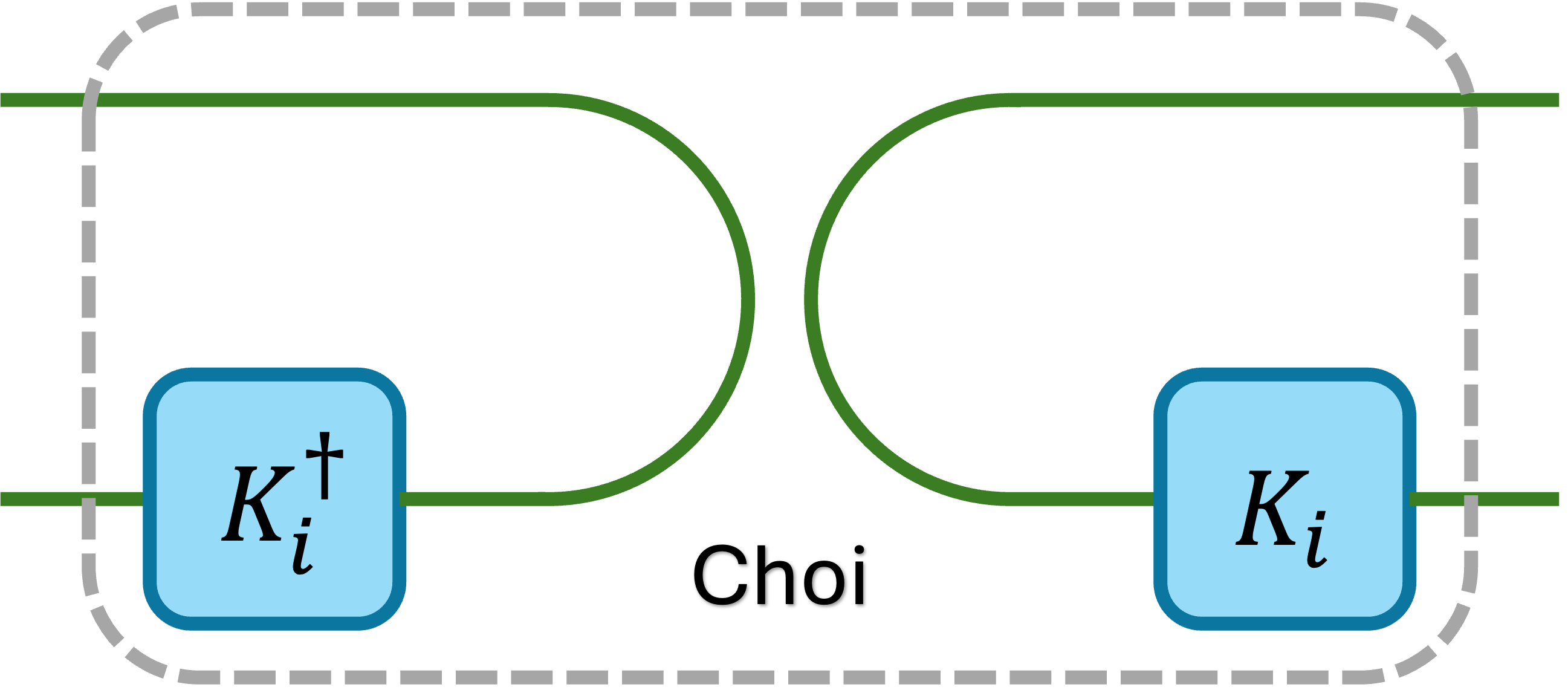}}
\end{align}
The symmetry between $\tvec{(K_i)}$ and its dual $\tvec{(K_i)}^{\dag}$ in Eq.~\eqref{TN:Choi} guarantees complete positivity of $\mE$, while trace preservation is given by
\begin{widetext}
\begin{align}\label{TN:TP}
    \raisebox{0ex}{\includegraphics[height=7.5em]{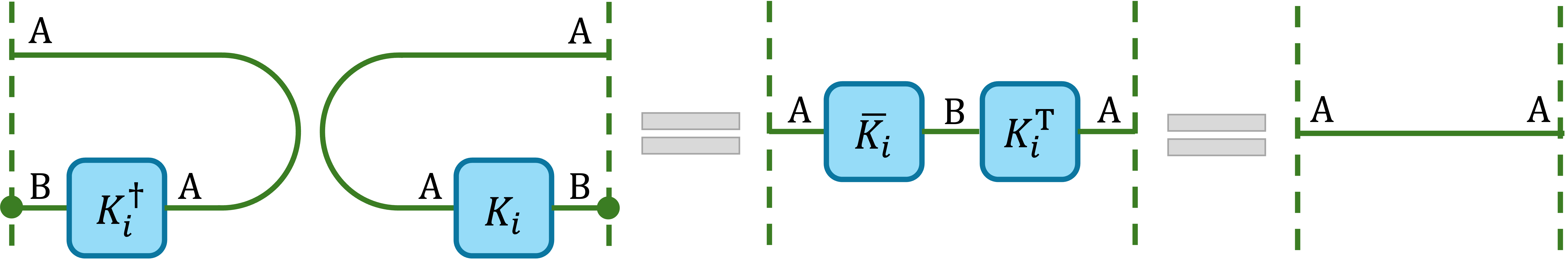}}
\end{align}
\end{widetext}
In our convention, the dashed vertical line marks the boundary, while the filled dots indicate subsystems that have been traced out. 
This dotted notation provides a compact way to represent Eq.~\eqref{TN:Trace}. 
In what follows, we freely switch between the dotted and the fully expanded diagrams without further remark, as Eq.~\eqref{TN:Trace} serves as a key ingredient in our realization theorem for superchannels (see Thm.~\ref{thm:RT}).
Here, the notation follows the same conventions adopted in our recent works~\cite{z2pr-zbwl,8g6j-w7ld}.

%%%%%%%%%%%%%%%%%%%%%%%%%%%%%%%%%%%%%%%%%%%%%%%%%%%%%%%%%%%%%%%%%%%%%%%
%%%%%%%%%%%%%%%%%%%%%%%%%%%%%%%%%%%%%%%%%%%%%%%%%%%%%%%%%%%%%%%%%%%%%%%

\subsection{Kraus Decomposition}\label{subsec:Kraus}

The second representation we will introduce is the Kraus decomposition.
Every completely positive map $\mE$ can be cast in an operator-sum form, in which its action on an input state takes the form
\begin{align}\label{eq:Kraus}
    \mE(\rho)=\sum_{i}K_i\rho K_i^{\dag}.
\end{align}
The associated tensor-network diagram reads
\begin{align}\label{TN:Kraus}
    \raisebox{0ex}{\includegraphics[height=6em]{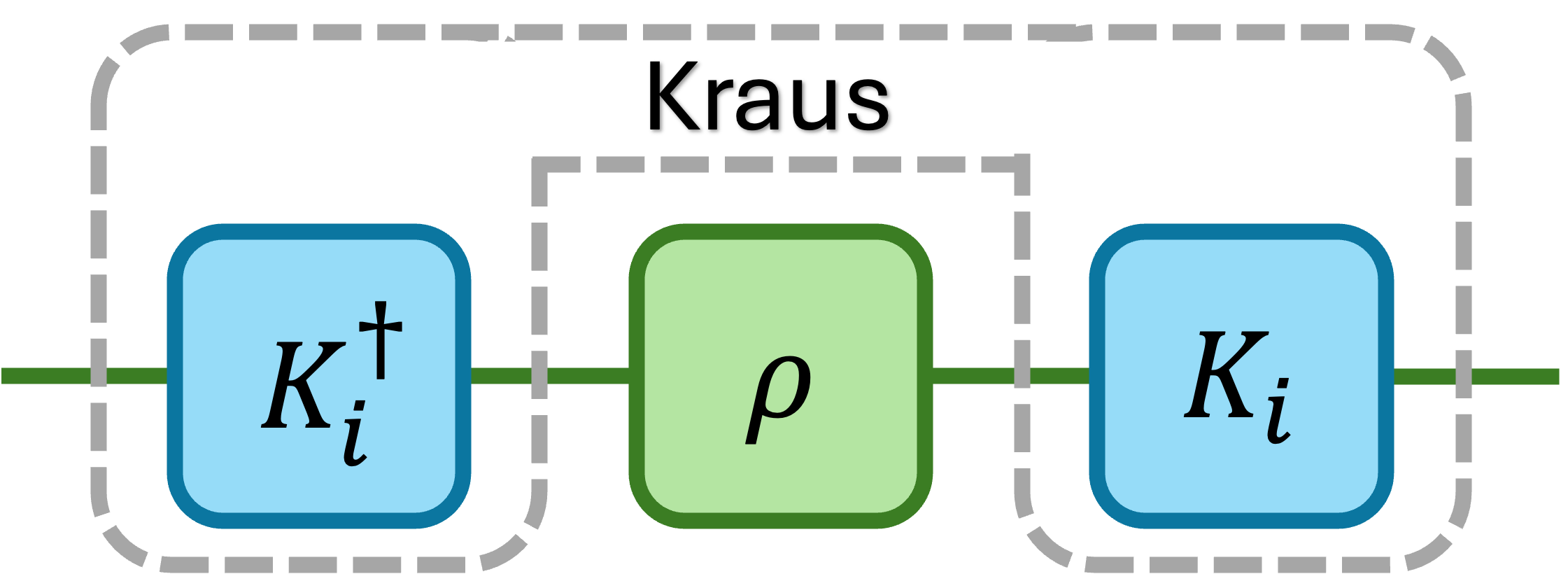}}
\end{align}
The requirement that the map $\mE$ be trace-preserving reduces to $\sum_iK_i^{\dag}K_i=\1_{A}$, together with its accompanying tensor-network depiction.
\begin{align}\label{TN:TP_Kraus}
    \raisebox{0ex}{\includegraphics[height=3.3em]{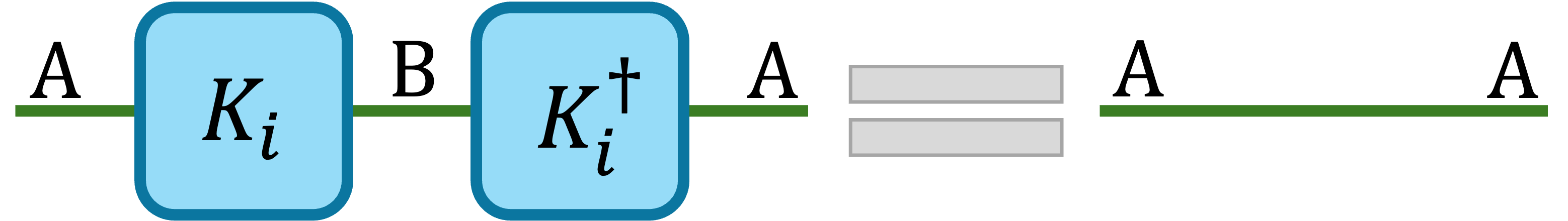}}
\end{align}
Note that, unlike the Choi operator $J^{\mE}$, which is uniquely determined by the channel $\mE$, the Kraus decomposition in Eq.~\eqref{eq:Kraus} is generally not unique. 
Different sets of Kraus operators $\{K_i\}$ are related by an isometry~\cite{Watrous_2018}.

%%%%%%%%%%%%%%%%%%%%%%%%%%%%%%%%%%%%%%%%%%%%%%%%%%%%%%%%%%%%%%%%%%%%%%%
%%%%%%%%%%%%%%%%%%%%%%%%%%%%%%%%%%%%%%%%%%%%%%%%%%%%%%%%%%%%%%%%%%%%%%%

\subsection{Stinespring Dilation}\label{subsec:Stinespring}

One may broaden the picture by adjoining an auxiliary environment $E$ to the systems $AB$. 
In this extended setting, any quantum channel admits a Stinespring dilation: it can be implemented by an isometry $V$ acting on the input $A$, producing a joint state on $BE$, followed by tracing out the environment. 
In mathematical terms, this means
\begin{align}\label{eq:Stinespring}
    \mE(\rho)=\Tr_{E}[V\rho V^{\dag}].
\end{align}
An explicit isometry realizing this dilation can be constructed as
\begin{align}\label{TN:Stinespring}
    \raisebox{0ex}{\includegraphics[height=9 em]{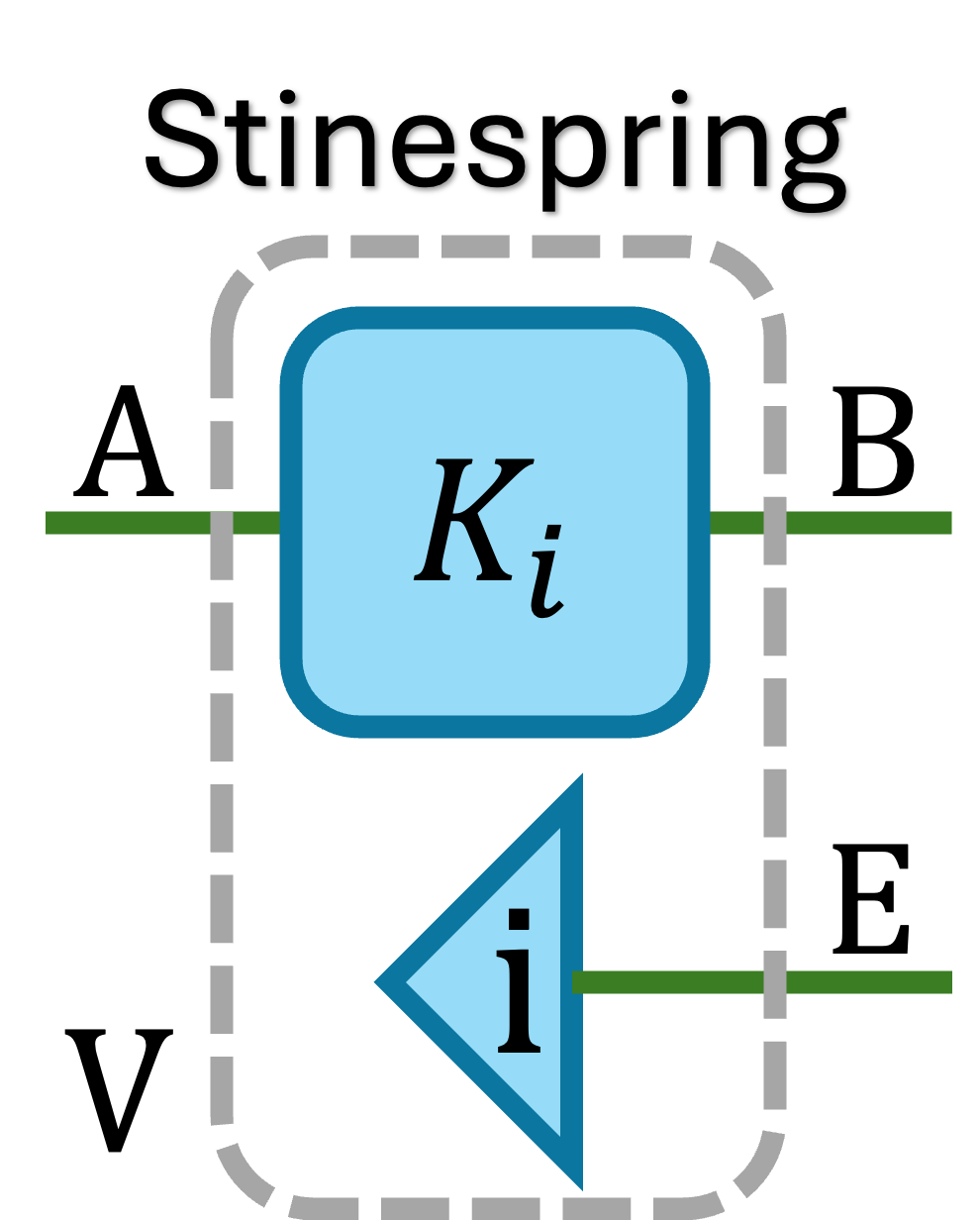}}
\end{align}
It is then straightforward to check that the matrix $V$ satisfies
\begin{align}\label{TN:Stinespring_Iso}
    \raisebox{0ex}{\includegraphics[height=7.5em]{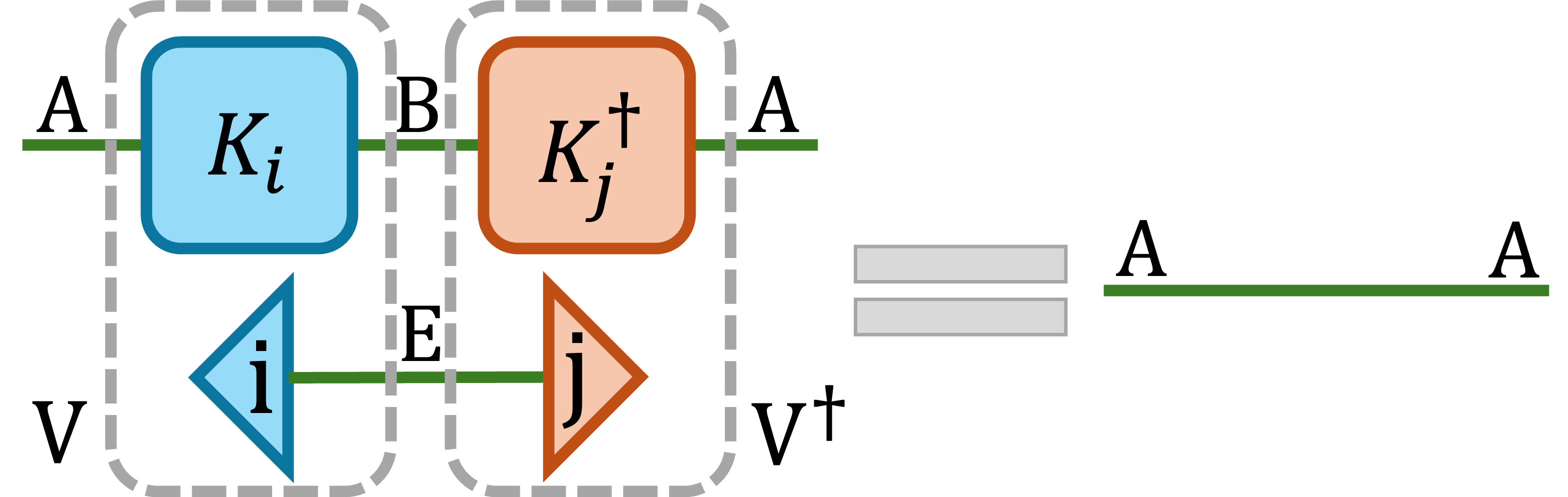}}
\end{align}
This perspective makes clear that any noisy quantum channel can be viewed as a coherent, noiseless isometry acting jointly on the system and an environment, followed by the loss of information encoded in an inaccessible environment.

%%%%%%%%%%%%%%%%%%%%%%%%%%%%%%%%%%%%%%%%%%%%%%%%%%%%%%%%%%%%%%%%%%%%%%%
%%%%%%%%%%%%%%%%%%%%%%%%%%%%%%%%%%%%%%%%%%%%%%%%%%%%%%%%%%%%%%%%%%%%%%%

\subsection{Liouville Superoperator}\label{subsec:Liouville}

The Liouville superoperator arises naturally from the vectorization (see Sec.~\ref{subsec:Vec}) of quantum states. 
Given a state $\rho$ on system $A$ and a quantum channel $\mE:A\to B$, both $\rho$ and its output $\mE(\rho)$ admit vectorized forms. 
The Liouville representation is precisely the linear map that relates these two vectors: it specifies how $\tvec{(\rho)}$ is transformed into $\tvec{(\mE(\rho))}$. 
This correspondence allows the action of a quantum channel to be visualized as a matrix acting on a vector, as illustrated below.
\begin{widetext}
\begin{align}\label{TN:Liouville}
    \raisebox{0ex}{\includegraphics[height=8em]{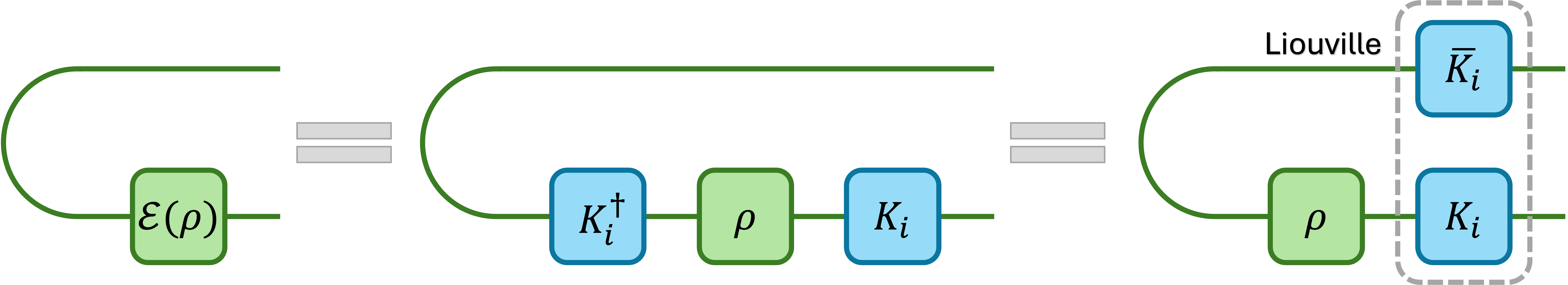}}
\end{align}
\end{widetext}
The four representations introduced above offer equivalent mathematical descriptions of quantum channels, as illustrated in Fig.~\ref{fig:Channel_Rep}. 
A detailed overview is provided in Chapter 2 of Ref.~\cite{Watrous_2018}, and the equivalence of these formulations is established in Proposition 2.20 of Ref.~\cite{Watrous_2018}.

\begin{figure}[t]
\centering   
\includegraphics[width=0.4\textwidth]{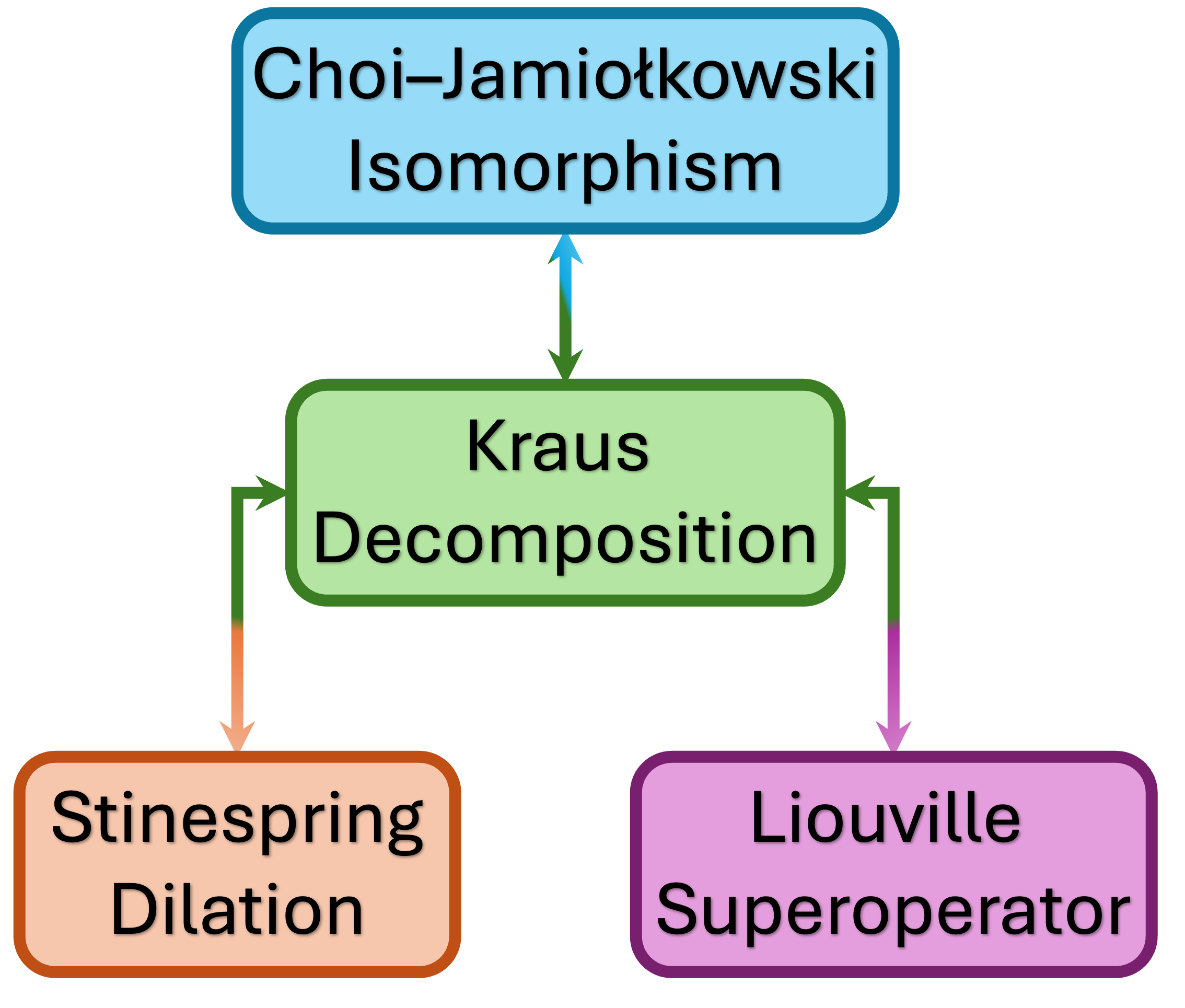}
\caption{(Color online) 
\textbf{Equivalent Representations of Quantum Channels}.  
Every quantum channel admits four representations, with each representation offering an equivalent view of the same underlying quantum dynamics.
}
\label{fig:Channel_Rep}
\end{figure}

%%%%%%%%%%%%%%%%%%%%%%%%%%%%%%%%%%%%%%%%%%%%%%%%%%%%%%%%%%%%%%%%%%%%%%%
%%%%%%%%%%%%%%%%%%%%%%%%%%%%%%%%%%%%%%%%%%%%%%%%%%%%%%%%%%%%%%%%%%%%%%%

\subsection{Link Product}\label{subsec:Link_Product}

In the closing part of our introduction to quantum channels, we turn to the link product, a tool that plays a pivotal role in capturing how quantum processes compose~\cite{959270}. 
Consider two matrices $M$ and $N$ that overlap on a common subsystem $C$. 
Their link product $\star$ is defined as
\begin{align}\label{eq:Link_Product}
    M\star N:=\Tr_{C}[M^{\T_{C}}\cdot N].
\end{align}
Here $^{\T_{C}}$ denotes the partial transpose on subsystem $C$. 
With the link product $\star$ in hand, the composition of two quantum channels 
$\mE_1:A\to B$ and $\mE_2: B\to C$, namely $\mE:=\mE_2\star\mE_1$, admits a compact expression: its Choi operator is given by
\begin{align}\label{eq:Channels_Composition}
    J^{\mE}=J^{\mE_2}\star J^{\mE_1}.
\end{align}

Although the term link product is relatively recent, the underlying idea has been in use since the earliest days of quantum theory. 
A simple example already appears in the basic interplay between state preparation and quantum measurement: 
the Choi operator of preparing a state is simply its density matrix $\rho$, while implementing a measurement $M_i$  corresponds to the Choi operator $M_i^{\T}$. 
Composing these two processes yields a probability computed exactly via their link product, namely
\begin{align}\label{eq:Born_Rule}
    p_i=\rho\star M_i^{\T}=\Tr[\rho M_i],
\end{align}
which is nothing but the Born rule~\cite{Born1926} introduced at the dawn of quantum mechanics.
Another useful example is the Choi operator associated with tracing out a system: 
it is simply the identity on that system, e.g., $J^{\Tr_A}=\1_A$.

Before concluding this section, we outline the labeling conventions adopted in this work. 
Peres once reminded us that 
``{\it Quantum phenomena do not occur in a Hilbert space. 
They occur in a laboratory.}''~\cite{peres2002quantum}.
A quantum system carries no inherent label -- these identifiers are assigned by us to keep track of its operational role. 
With this viewpoint in mind, we impose a consistent ordering of subsystems whenever quantum processes are composed.
Choi operators will always be written with input systems placed first and output systems last. 
When several subsystems are present, we follow their temporal order as they appear in the laboratory, arranging earlier systems to the left and later ones to the right.
As an illustration, consider a bipartite channel $\mE$ with inputs $A_1$, $A_2$ and outputs $B_1$, $B_2$, and suppose their laboratory times satisfy
\begin{align}
    t_{A_1}<t_{B_1}<t_{A_2}<t_{B_2}.
\end{align}
Following our convention, its Choi operator is written as $J^{\mE}_{A_1A_2B_1B_2}$.
With this ordering in place, the link product becomes insensitive to the specific permutation of subsystems; 
equivalently, it is commutative up to a relabeling of the quantum systems. 
In particular, for the examples considered in Eq.~\eqref{eq:Channels_Composition}, we have $J^{\mE_2}\star J^{\mE_1}=J^{\mE_1}\star J^{\mE_2}$.

For a broader and more detailed treatment of quantum channels, Refs.~\cite{wolf2012quantum,RevModPhys.86.1203,Wilde_2017,khatri2024principlesquantumcommunicationtheory} offer outstanding expositions that complement and extend the synthesis presented in this section.

%%%%%%%%%%%%%%%%%%%%%%%%%%%%%%%%%%%%%%%%%%%%%%%%%%%%%%%%%%%%%%%%%%%%%%%
%%%%%%%%%%%%%%%%%%%%%%%%%%%%%%%%%%%%%%%%%%%%%%%%%%%%%%%%%%%%%%%%%%%%%%%

\section{Quantum Superchannels}\label{sec:QSuperchannels}

Quantum channels form the basic operational units of quantum theory -- the analogue of individual ``Lego bricks''. Superchannels provide the most general framework for manipulating these building blocks, and they surface across nearly every corner of quantum information science.
For instance, when multiple uses of a noisy channel are available, any coding scheme that transmits quantum information, its encoding and decoding included, constitutes a memoryless superchannel~\cite{PhysRevA.51.2738}. 
In superdense coding, once the Pauli gates are fixed, the process that extracts classical information from the Pauli actions on the sender's side is itself a superchannel~\cite{PhysRevLett.69.2881}. 
In teleportation~\cite{PhysRevLett.70.1895}, given a fixed entangling operation, the protocol that processes and consumes the resource similarly takes the form of a superchannel~\cite{xing2023fundamentallimitationscommunicationquantum,glc7-xy8t}. 
Even programming a quantum processor~\cite{PhysRevLett.79.321,PhysRevLett.122.080505,PhysRevLett.125.210501}, and more broadly controlling its operation, can be captured within the superchannel formalism.

Understanding the structure and limitations of superchannels is therefore essential for identifying what is fundamentally achievable, and what is fundamentally prohibited, in quantum theory and emerging quantum technologies. 
However, in stark contrast to the mature development of quantum channel theory, the systematic study of superchannels remains comparatively underexplored.
Even at the level of basic formalism, two main issues have impeded a coherent development.

The first is a matter of inconsistency. 
For quantum channels, the Choi operator is uniquely defined. 
Superchannels, however, admit two competing Choi-like representations that have both been used extensively across the literature. 
One is obtained by viewing a superchannel as a bipartite channel and feeding in unnormalized maximally entangled states~\cite{PhysRevLett.101.060401}; historically the earliest and still the most common approach. 
The other arises by treating a superchannel as a higher-order transformation acting on quantum channels, yielding a second representation constructed from its action on a linear basis of inputs~\cite{8678741}. 
Whether these two constructions coincide, differ by a fixed transformation, or encode genuinely different operational content has never been clarified. 
As a result, even the fundamental notion of a ``Choi operator'' for superchannels has remained conceptually unsettled.
If the two frameworks diverge, results established in one may not translate faithfully to the other; 
if they are equivalent up to a systematic correspondence, it becomes essential to identify which representation provides the more natural lens for analyzing quantum dynamics across multiple time points.

The second issue is one of completeness.
Quantum channels enjoy a mature structural theory built upon four fully equivalent representations, Kraus operators, Stinespring dilation, Choi operators, and Liouville superoperator, each revealing a different aspect of their behavior (see Sec.~\ref{sec:QChannels}).
Superchannels, by contrast, have been understood almost entirely through their Choi-like descriptions.
Although these forms are convenient for convex analysis and optimization~\cite{Chiribella_2016}, they provide neither a structural theory nor a coherent unifying viewpoint.
Superchannel analogues of the Kraus representation, the Stinespring dilation, and the Liouville formalism are still missing; just as absent is a framework that relates these perspectives or reconciles the two competing Choi constructions.
A comprehensive theoretical foundation that ties these representations together has yet to be developed.

This gap is more than a technical inconvenience.
A conceptually grounded framework would not only resolve these inconsistencies but would also catalyze progress across quantum information science: 
from the design of multi-time quantum protocols~\cite{PhysRevLett.54.857,Emary_2014,Zych2019,liu2025spatialincompatibilitywitnessesquantum} and programmable quantum processors~\cite{PhysRevLett.79.321,PhysRevLett.122.080505,PhysRevLett.125.210501} to the characterization of memory effects~\cite{PhysRevA.72.062323,Taranto2024characterising}, adaptive strategies~\cite{xing2023fundamentallimitationscommunicationquantum,glc7-xy8t}, and higher-order quantum resources~\cite{taranto2025higherorderquantumoperations}.
In this section, we address these challenges and develop a consistent and complete formalism that places superchannels on the same structural footing as quantum channels.

%%%%%%%%%%%%%%%%%%%%%%%%%%%%%%%%%%%%%%%%%%%%%%%%%%%%%%%%%%%%%%%%%%%%%%%
%%%%%%%%%%%%%%%%%%%%%%%%%%%%%%%%%%%%%%%%%%%%%%%%%%%%%%%%%%%%%%%%%%%%%%%

\subsection{Realization Theorem}\label{subsec:Realization}

The first step toward a consistent and complete framework is to establish the technical foundations of superchannels and clarify their physical origin. 
In both classical~\cite{6773024} and quantum Shannon theory~\cite{PhysRevA.51.2738}, a noisy channel can be transformed into a more reliable one by surrounding it with an encoding map and a decoding map. 
In the quantum setting, these become two quantum channels, pre-processing and post-processing, which may, in general, be linked through an internal quantum memory. 
Any such construction defines a higher-order dynamical object~\cite{Bisio2019}: 
a pair of quantum processes connected by a memory system that acts on an input channel and produces a new channel. This is precisely what we call a {\it superchannel}.

What is truly striking is that the converse also holds. 
Chiribella {\it et al.} showed that any admissible transformation that maps quantum channels to quantum channels, while respecting the basic physical constraints of quantum mechanics, must arise from such a pre-/post-processing structure with quantum memory~\cite{Chiribella_2008}. 
In other words, superchannels are not merely a convenient abstraction: they capture all physically allowed higher-order dynamics on quantum channels.
Gour later provided an alternative proof of this structural characterization in~\cite{8678741}.

To keep the present development self-contained, we offer yet another derivation of this realization.
Our approach relies on tensor-network methods, which serve as our first Occam's razor: 
the diagrammatic calculus strips away redundant structure and exposes the minimal architecture underlying any superchannel.
In this representation, the memory cost associated with implementing a given superchannel becomes immediate and transparent.
The tensor-network viewpoint thus renders the internal architecture of superchannels almost self-evident, clarifying both their operational meaning and the necessity of their sequential form.
We begin by recalling a key lemma that sets the foundation for the realization.

\begin{lem}
[\bf{Prepare–and–Trace}~\cite{Chiribella_2008}]
\label{lem:P&T}
    Any matrix $M$ for which $\Tr[M J^{\mE}]=1$ holds for all quantum channels $\mE$ must be of the form
    \begin{align}
        M=\rho\otimes\1,
    \end{align}
    with $\Tr[\rho]=1$. If, in addition, $M\geqslant0$, then $\rho$ is a valid quantum state.
\end{lem}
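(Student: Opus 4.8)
The plan is to recast the statement as a linear-algebra fact about the set of channel Choi operators. Recall from Sec.~\ref{subsec:Choi} that $J^{\mE}$ is the Choi operator of a channel $\mE:A\to B$ precisely when $J^{\mE}\geqslant 0$ and $\Tr_{B}[J^{\mE}]=\1_{A}$. The collection $\mS$ of all such operators is a convex subset of the real vector space of Hermitian operators on $AB$, and the hypothesis says exactly that the linear functional $J\mapsto\Tr[MJ]$ is identically $1$ on $\mS$. The first move is to exhibit a relative-interior point: the completely depolarizing channel has Choi operator $J_{0}=\1_{A}\otimes\1_{B}/d_{B}$, which is full rank and obeys $\Tr_{B}[J_{0}]=\1_{A}$. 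Because $\mS$ contains a full-rank (hence relative-interior) point, its affine hull is the whole affine space $\mathcal{A}=\{J=J^{\dag}:\Tr_{B}[J]=\1_{A}\}$, and an affine functional that is constant on a convex set is constant on its affine hull. Thus $\Tr[MJ]=1$ for every $J\in\mathcal{A}$.

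Writing a generic element of $\mathcal{A}$ as $J_{0}+\Delta$ with $\Delta$ in the difference space $\mathcal{A}_{0}=\{\Delta=\Delta^{\dag}:\Tr_{B}[\Delta]=0\}$, this single condition splits into a homogeneous part $\Tr[M\Delta]=0$ for all $\Delta\in\mathcal{A}_{0}$, together with a normalization. The core step is to identify the orthogonal complement of $\mathcal{A}_{0}$ (under the Hilbert--Schmidt inner product) inside the Hermitian operators. I claim it is exactly $\{X\otimes\1_{B}:X=X^{\dag}\}$: the inclusion is immediate from $\Tr[(X\otimes\1_{B})\Delta]=\Tr_{A}[X\,\Tr_{B}[\Delta]]=0$, and equality follows from the complementary dimension count $\dim\mathcal{A}_{0}=d_{A}^{2}d_{B}^{2}-d_{A}^{2}$ versus $\dim\{X\otimes\1_{B}\}=d_{A}^{2}$.

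To deploy this for a possibly non-Hermitian $M$, I would split $M=H_{1}+iH_{2}$ into its Hermitian pieces $H_{1}=(M+M^{\dag})/2$ and $H_{2}=(M-M^{\dag})/(2i)$. For Hermitian $\Delta$ both $\Tr[H_{1}\Delta]$ and $\Tr[H_{2}\Delta]$ are real, so $\Tr[M\Delta]=0$ forces each to vanish separately; applying the orthogonal-complement result to each gives $H_{j}=\rho_{j}\otimes\1_{B}$, and therefore $M=\rho\otimes\1_{B}$ with $\rho=\rho_{1}+i\rho_{2}$. The normalization is then cleanest to read off directly from trace preservation rather than from $J_{0}$: for any channel $\mE$,
\begin{align}
\Tr[M J^{\mE}]=\Tr_{A}\!\big[\rho\,\Tr_{B}[J^{\mE}]\big]=\Tr_{A}[\rho\,\1_{A}]=\Tr[\rho],
\end{align}
so $\Tr[\rho]=1$. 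For the final assertion, if $M\geqslant 0$ then $\rho\otimes\1_{B}\geqslant 0$; testing on product vectors $\ket{\phi}_{A}\otimes\ket{\psi}_{B}$ yields $\bra{\phi}\rho\ket{\phi}\,\|\psi\|^{2}\geqslant 0$, hence $\rho\geqslant 0$, and together with $\Tr[\rho]=1$ this makes $\rho$ a valid quantum state.

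I expect the main obstacle to be the orthogonal-complement identification together with the bookkeeping needed to extend it from Hermitian to arbitrary complex $M$; the relative-interior/affine-hull argument and the closing normalization are either standard convexity facts or direct computations.
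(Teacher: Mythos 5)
Your proof is correct, and it is worth noting that the paper itself never proves this lemma: it cites the original reference~\cite{Chiribella_2008} and supplies only the physical intuition for the easy (converse) direction, namely that a prepare-and-trace procedure $\rho^{\T}\otimes\1$ composed with any channel yields probability $1$. Your argument is therefore a genuine, self-contained replacement for the missing proof, and it is sound at every step: the depolarizing Choi operator $\1_A\otimes\1_B/d_B$ is indeed full rank and trace-preserving, so perturbing it in any direction $\Delta=\Delta^{\dag}$ with $\Tr_B[\Delta]=0$ stays inside the set of Choi operators for small steps, which gives the affine-hull claim; the dimension count $d_A^2 d_B^2-d_A^2$ versus $d_A^2$ is right (the partial-trace map on Hermitian operators is surjective, so the constraints are independent), which pins down $\mathcal{A}_0^{\perp}=\{X\otimes\1_B: X=X^{\dag}\}$; and the split $M=H_1+iH_2$ correctly handles the fact that the lemma allows an arbitrary (not a priori Hermitian) matrix $M$ — a point many treatments gloss over. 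The closing normalization via $\Tr[(\rho\otimes\1_B)J^{\mE}]=\Tr_A[\rho\,\Tr_B[J^{\mE}]]=\Tr[\rho]$ and the positivity argument on product vectors are both standard and correct. In spirit your route matches the duality argument of the cited original (reduce to vanishing of the functional on partial-trace-free perturbations, then identify the annihilator), but your version buys something the paper's presentation does not: an explicit, checkable derivation living entirely in the paper's own conventions, including the relative-interior device that makes the passage from the convex set of channels to the full affine space rigorous.
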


\begin{figure}[t]
\centering   
\includegraphics[width=0.48\textwidth]{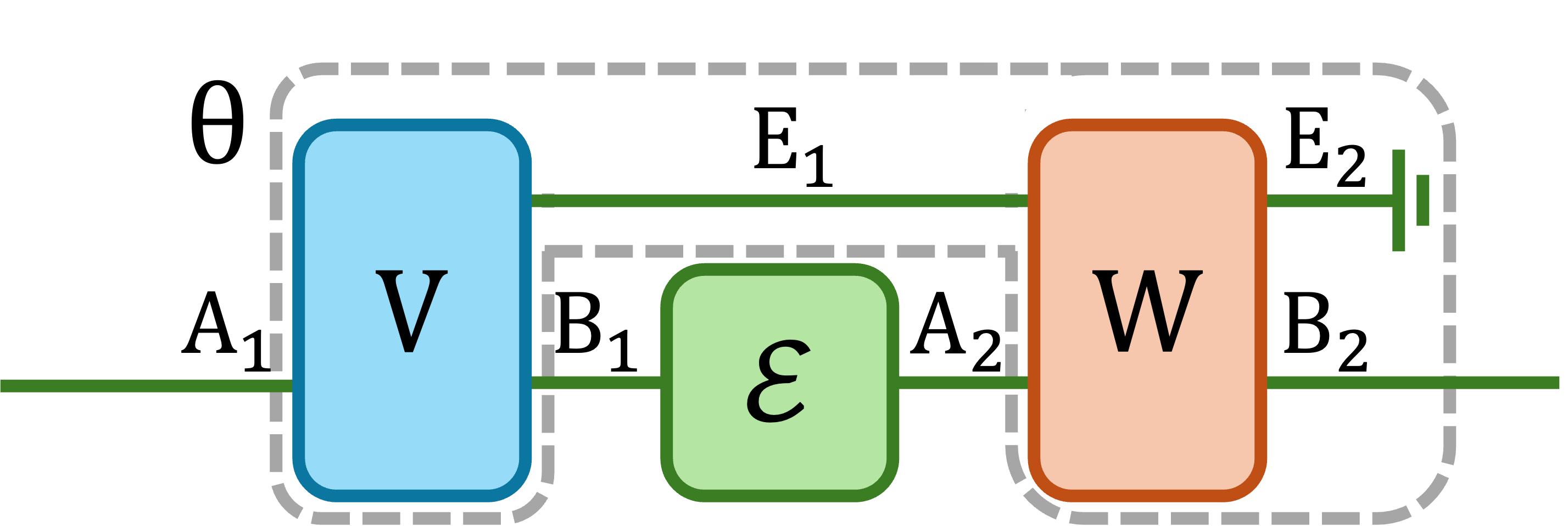}
\caption{(Color online) \textbf{Quantum Superchannel}.  
Given a quantum channel $\mE:B_1\to A_2$, any deterministic transformation $\theta$ acting on it can be physically realized by inserting two isometries in sequence: a pre-processing isometry $V:A_1\to E_1B_1$ applied before $\mE$, and a post-processing isometry $W:E_1A_2\to E_2B_2$ applied afterwards. 
Part of the output of $V$ is routed through a quantum memory $E_1$, which interfaces with the subsequent isometry $W$. 
After $W$ acts, the ancillary environment $E_2$ is traced out. 
The resulting overall dynamics $\theta=\Tr_{E_2}\circ W\circ V$ defines a superchannel.
}
\label{fig:Superchannel}
\end{figure}

The mathematical proof for Lem.~\ref{lem:P&T} is given in Ref.~\cite{Chiribella_2008}, but the accompanying physical picture is rarely spelled out. 
We provide that intuition here.
Consider feeding the channel $\mE$ with the state $\rho^{\T}$. 
Since transposition on all subsystems preserves positivity, $\rho^{\T}$ remains a legitimate quantum state. 
If we subsequently trace out the channel's output, the resulting prepare-and-trace procedure is represented by a Choi operator of the form $\rho^{\T}\otimes\1$.
Viewed this way, the whole process becomes 
\begin{align}
    (\rho^{\T}\otimes\1)\star J^{\mE}=\Tr[\rho\otimes\1 \cdot J^{\mE}]=1,
\end{align}
for any $\mE$.
Lem.~\ref{lem:P&T} shows that the converse also holds.
Any such operator $M$ necessarily arises as the Choi operator of a process that prepares a quantum state, feeds it through the channel, and then traces out the output.
In tensor-network terms, this means that
\begin{align}\label{TN:T&P_1}
    \raisebox{0ex}{\includegraphics[height=5em]{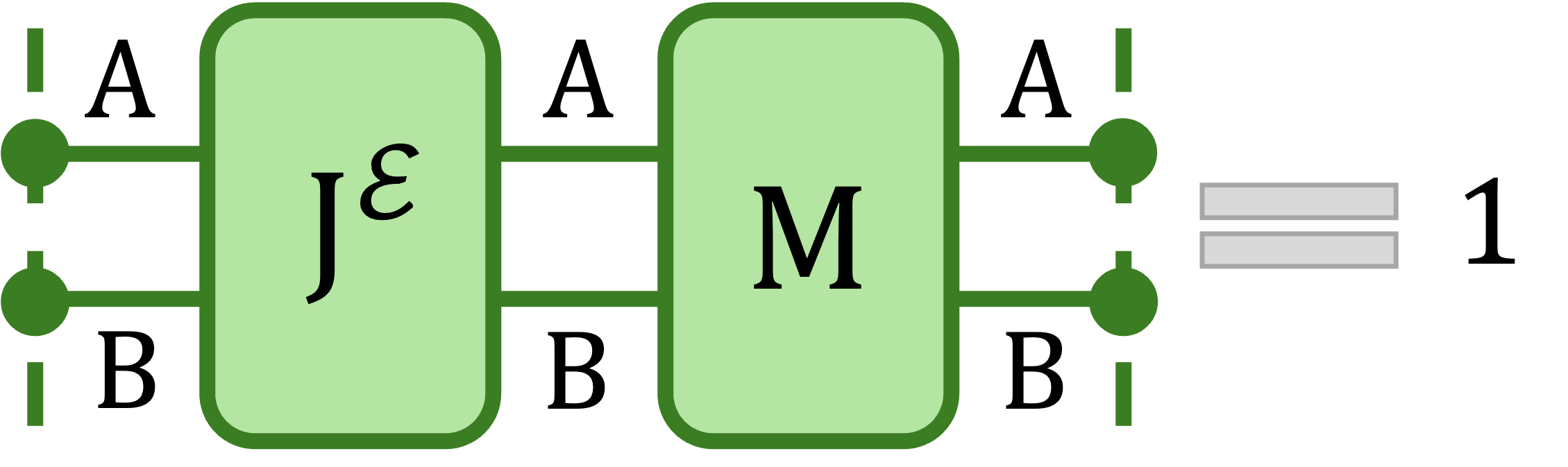}}
\end{align}
implies
\begin{align}\label{TN:T&P_2}
    \raisebox{0ex}{\includegraphics[height=5em]{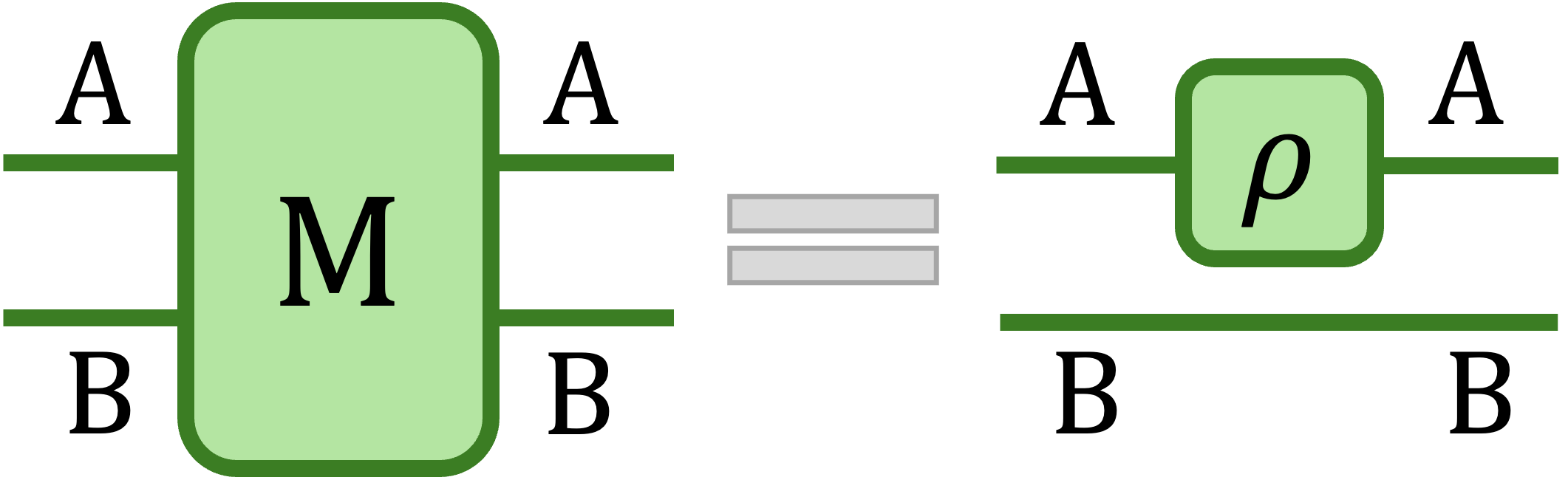}}
\end{align}

We now consider deterministic transformations of quantum channels.
The forward direction is immediate: applying an isometry before a channel $\mE$ and another isometry afterwards, as illustrated in Fig.~\ref{fig:Superchannel}, yields a new quantum channel.
What makes this framework powerful is that the converse is also true.
Every superchannel admits such a sequential realization, with all allowed higher-order dynamics captured by a pre-processing isometry, a quantum memory, and a post-processing isometry acting in that order.
Formally, the realization theorem for superchannels states that

\begin{figure}[t]
\centering   
\includegraphics[width=0.48\textwidth]{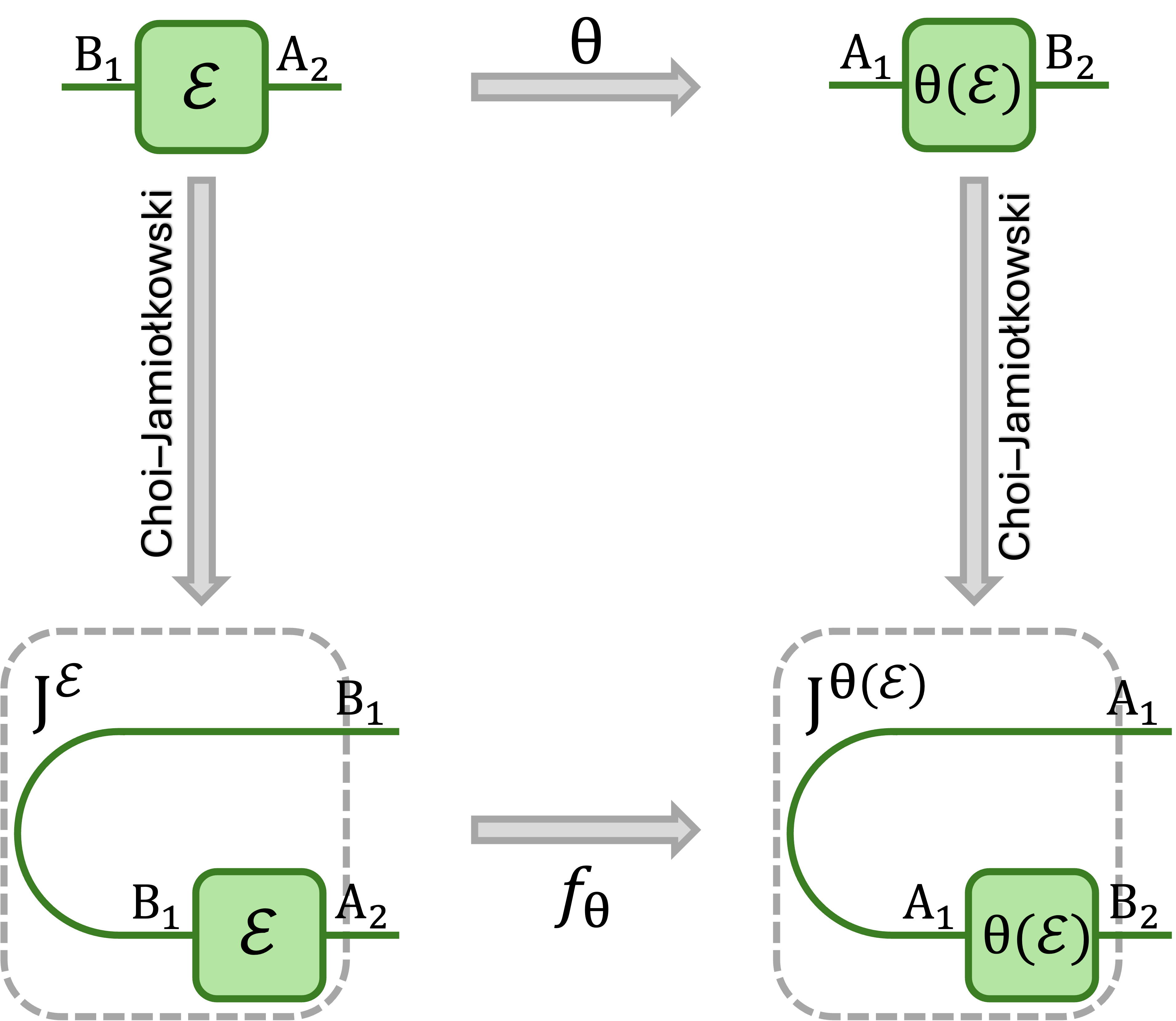}
\caption{(Color online) \textbf{Commutative Diagram for Superchannels}. 
Each superchannel $\theta$ admits a unique map $f_{\theta}$ that carries the Choi operator of a channel $\mE$ to the Choi operator of $\theta(\mE)$, i.e., $f_{\theta}(J^{\mE})=J^{\theta(\mE)}$. 
This correspondence is precisely what guarantees the commutativity of the diagram above.
}
\label{fig:Commutative_Diagram}
\end{figure}

\begin{thm}
[\bf{Realization Theorem}~\cite{Chiribella_2008}]
\label{thm:RT}
If a quantum process $\theta$ maps every channel from $B_1$ to $A_2$ into a channel from $A_1$ to $B_2$, then it admits a sequential physical realization.
Concretely, there exists an isometry $V:A_1\to E_1B_1$ followed by another isometry $W: E_1A_2\to B_2$ after which the environment $E_2$ is traced out.
The two isometries are connected via the memory system $E_1$, which mediates all allowed correlations between the pre- and post-processing stages, as illustrated in Fig.~\ref{fig:Superchannel}.
\end{thm}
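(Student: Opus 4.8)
The plan is to translate the statement entirely into a claim about a single Choi operator and then peel that operator apart with the Stinespring machinery of Sec.~\ref{subsec:Stinespring}. First I would package the action of $\theta$ into its Choi operator. Since $\theta$ is linear on channels, it extends by linearity to all linear maps (channels linearly span the relevant space), so there is a well-defined induced map $f_\theta$ on Choi operators with $f_\theta(J^{\mE})=J^{\theta(\mE)}$; this is exactly the content of the commutative diagram in Fig.~\ref{fig:Commutative_Diagram}. Feeding unnormalized maximally entangled states into the input systems $A_1$ and $A_2$ (the bipartite-channel construction) produces an operator $J^\theta$ on $A_1 B_1 A_2 B_2$ with $f_\theta(\cdot)=J^\theta\star(\cdot)$, the link product contracting the shared $B_1 A_2$ legs. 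The theorem then reduces to showing that any such $J^\theta$ admits the sequential dilation of Fig.~\ref{fig:Superchannel}.

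Next I would extract the two defining properties of $J^\theta$. Complete positivity of $\theta$ (preservation of CP maps, stable under tensoring with an idle ancilla) gives $J^\theta\geqslant0$, so by the decomposition of Eq.~\eqref{eq:Choi-decom} it factors through a purifying space. The trace-preservation constraint is where Lem.~\ref{lem:P&T} does the work: demanding that $\theta(\mE)$ be trace-preserving for every channel $\mE$ forces, via the prepare-and-trace characterization, the comb marginal conditions
\begin{align}
    \Tr_{B_2}[J^\theta]=\1_{A_2}\otimes\Lambda_{A_1 B_1},\qquad \Tr_{B_1}[\Lambda_{A_1 B_1}]=\1_{A_1},
\end{align}
with $\Lambda_{A_1 B_1}\geqslant0$. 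The first equation is precisely the causal (no-signaling-from-the-future) condition that the pre-processing output $B_1$ cannot depend on the later input $A_2$.

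With these in hand the construction proceeds in two stages. Stage one: by the two displayed conditions, $\Lambda_{A_1 B_1}$ is the Choi operator of a channel $A_1\to B_1$, so the Stinespring dilation of Eq.~\eqref{eq:Stinespring} supplies an isometry $V:A_1\to E_1 B_1$. Stage two: I would ``divide out'' $V$ from $J^\theta$ by attaching its dual along the $A_1 B_1$ legs and contracting through the snake identity of Eq.~\eqref{TN:Snake}; diagrammatically this collapses the pre-processing wires and leaves a residual operator on $E_1 A_2 B_2$. I then verify that this residual is positive and satisfies $\Tr_{B_2}[\cdot]=\1_{E_1 A_2}$, i.e.\ is the Choi operator of a channel $E_1 A_2\to B_2$, which a second Stinespring dilation realizes as $W:E_1 A_2\to E_2 B_2$ followed by tracing out $E_2$. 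A final link-product computation confirms that $\Tr_{E_2}\circ W\circ\mE\circ V$ reproduces $\theta(\mE)$ for every $\mE$.

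The main obstacle is Stage two: showing that peeling off $V$ really leaves a legitimate channel. Positivity alone does not guarantee that $J^\theta$ is supported on the image of $V$ in the $A_1 B_1$ sector, so the ``division'' could a priori produce a non-positive or improperly normalized object. The point that makes it work is that the causal marginal condition $\Tr_{B_2}[J^\theta]=\1_{A_2}\otimes\Lambda_{A_1 B_1}$ pins the $A_1 B_1$ support of $J^\theta$ exactly to that of $\Lambda_{A_1 B_1}$, hence to the image of $V$. Establishing this support compatibility cleanly, and then reading off trace-preservation of the residual, is the technical heart of the argument, and it is precisely where the tensor-network cap/cup calculus of Sec.~\ref{subsec:Id&Tr} replaces an otherwise cumbersome operator computation with a transparent diagrammatic manipulation.
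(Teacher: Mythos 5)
Your proposal is correct in outline and can be made rigorous, but it is a genuinely different proof from the one in the paper. You work at the level of the superchannel's own Choi operator $J^{\theta}$: Lem.~\ref{lem:P&T} is used to derive the comb marginal conditions $\Tr_{B_2}[J^{\theta}]=\1_{A_2}\otimes\Lambda_{A_1B_1}$ and $\Tr_{B_1}[\Lambda_{A_1B_1}]=\1_{A_1}$ (in substance the TP and NS conditions of Thm.~\ref{thm:Superchannel_Choi}), the pre-processing isometry $V$ comes from Stinespring-dilating $\Lambda_{A_1B_1}$, and the post-processing comes from dividing $J^{\theta}$ by $V$ --- essentially the original comb-realization argument of Ref.~\cite{PhysRevA.80.022339}. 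The paper never writes down $J^{\theta}$ or its marginals in this proof: it works with the induced map $f_{\theta}$ on channel Choi operators and its Kraus decomposition $f_{\theta}(\cdot)=\sum_i K_i\cdot K_i^{\dag}$, invokes Lem.~\ref{lem:P&T} only to recognize that the procedure $\rho\mapsto\sigma$ in Eq.~\eqref{TN:RT_2} is a channel $\mF_{\theta}$ (Eq.~\eqref{eq:mF_theta}) --- the pre-processing --- and then obtains the post-processing isometry $W$ \emph{for free} from the isometric freedom relating two Kraus representations of that same channel $\mF_{\theta}$ (Eq.~\eqref{TN:RT_3}); no division or support analysis ever appears. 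The trade-off is instructive. Your route yields Thm.~\ref{thm:Superchannel_Choi} as a byproduct and keeps everything as explicit Choi-level formulas, but it must confront the support-compatibility step you flag; your resolution is correct, since for $X\geqslant0$ one has $\mathrm{supp}(X)\subseteq\mathrm{supp}(\Tr_2[X])\otimes\mH_2$, so conjugation by $\Lambda_{A_1B_1}^{-1/2}$ (inverse taken on the support) is well defined, positive, and has marginal $\1_{E_1A_2}$ --- yet this is exactly the operator computation the paper's ``simplified'' derivation is designed to avoid. Conversely, the paper's Kraus-freedom argument produces the memory system automatically as the Kraus index space of $\mF_{\theta}$, which is why the memory-cost statement (Cor.~\ref{cor:MC}) falls out immediately. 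One small caution about your setup: channel Choi operators do not linearly span all operators on $B_1A_2$ (their span consists only of operators whose $A_2$-marginal is proportional to $\1_{B_1}$), so $J^{\theta}$ is not canonically fixed by the action on channels alone; like the paper, you should take linearity and complete CP-preservation of $\theta$ as part of the hypothesis rather than argue it from a spanning claim.
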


Readers interested in the original proofs may consult Ref.~\cite{Chiribella_2008} and the alternative formulation discussed in Ref.~\cite{8678741}.
For completeness, and to illustrate how tensor-network calculus streamlines the entire construction, we present a concise and intuitive derivation below.

\begin{proof}

\begin{widetext}

Given a transformation $\theta$, we associate to it a map $f_{\theta}$ that acts directly on the Choi operator of a channel.
For a quantum channel $\mE(\cdot)=\sum_k M_{k}\cdot M_{k}^{\dag}$, this induced action is written as $f_{\theta}(J^\mE):=J^{\theta(\mE)}$ (see Fig.~\ref{fig:Commutative_Diagram}).
A useful observation is that $\theta$ is completely positive if and only if its induced map $f_{\theta}$ is completely positive.
This equivalence allows us to analyze $\theta$ within the framework of quantum channels developed in Sec.~\ref{sec:QChannels}.
In particular, we will make use of its Kraus decomposition $f_{\theta}(\cdot)=\sum_iK_i\cdot K_i^{\dag}$.
The proof begins by noting that for any channel $\mE$, its Choi operator satisfies $J^{\mE}\star(\rho^{\T}\otimes\1)=1$.
If $\theta$ is a superchannel, then the induced map $f_{\theta}$ also obeys $f_{\theta}(J^{\mE})\star(\rho^{\T}\otimes\1)=1$.
Evaluating the expression within the tensor-network, we arrive at

\begin{align}\label{TN:RT_1}
    \raisebox{0ex}{\includegraphics[height=7.6em]{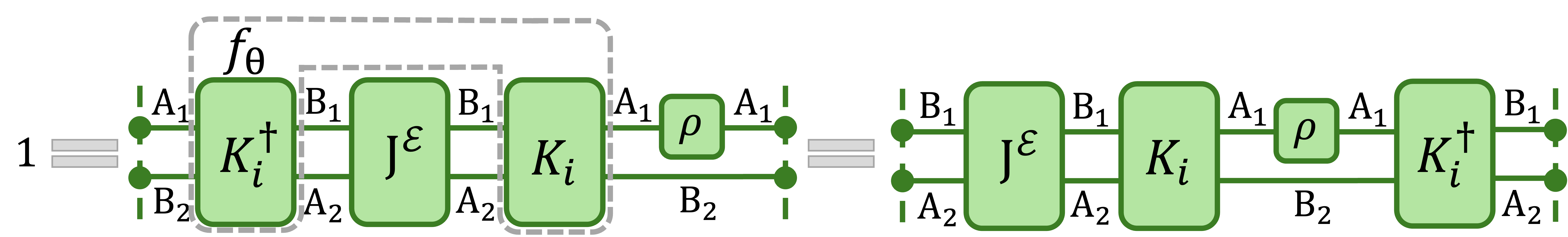}}
\end{align}

Thanks to Lem.~\ref{lem:P&T}, we obtain

\begin{align}\label{TN:RT_2}
    \raisebox{0ex}{\includegraphics[height=8em]{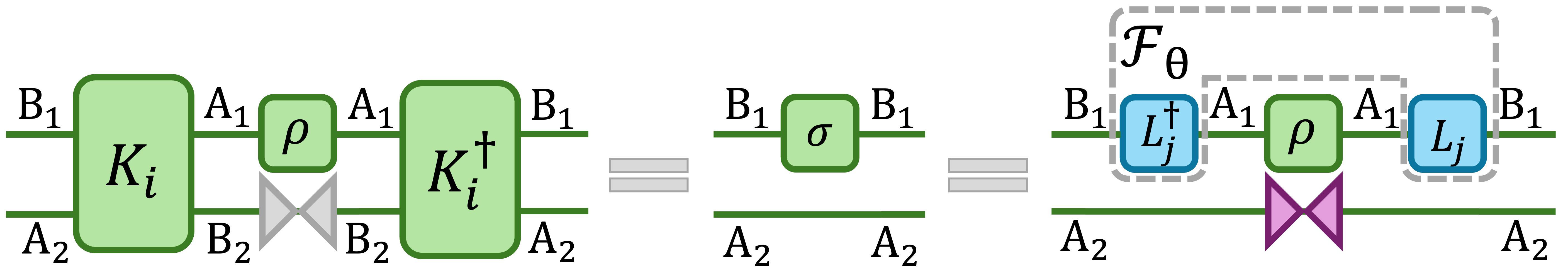}}
\end{align}
where the second equality follows from noting that the procedure generating $\sigma$ can be viewed as the action of a quantum channel $\mF_{\theta}$, namely $\sigma=\mF_{\theta}(\rho)=\sum_j L_j\rho L_j^{\dag}$. 
More precisely, 
\begin{align}\label{eq:mF_theta}
    \mF_{\theta}(\rho):= 
    \frac{1}{d_{A_2}}
    \Tr_{A_2}[f_{\theta}^{\dag}(\rho_{A_1}\otimes\1_{B_2})],
\end{align}
is completely positive and the procedure $\rho\to\sigma$ is trace-preserving, so $\mF_{\theta}$ is a CPTP map.
Here, $d_{A_2}$ denotes the dimension of system $A_2$.
Since the leftmost and rightmost Kraus representations in Eq.~\eqref{TN:RT_2} describe the same quantum channel, their Kraus operators must be related by an isometry $W$, that is
\begin{align}\label{TN:RT_3}
    \raisebox{0ex}{\includegraphics[height=9em]{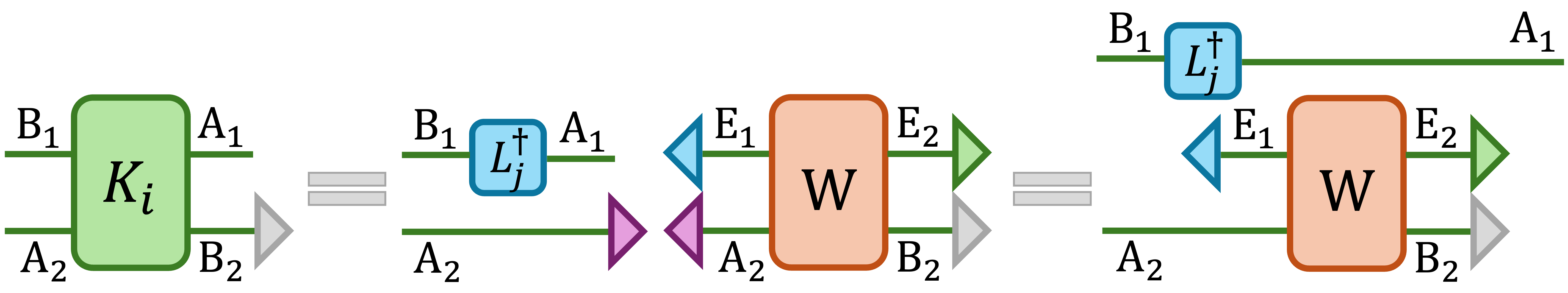}}
\end{align}
Now the state $\theta(\mE)(\rho)=\Tr_{A_1}[f_{\theta}(J^{\mE})\cdot\rho^{\T}]$ can be expressed as
\begin{align}\label{TN:RT_4}
    \raisebox{0ex}{\includegraphics[height=8.6em]{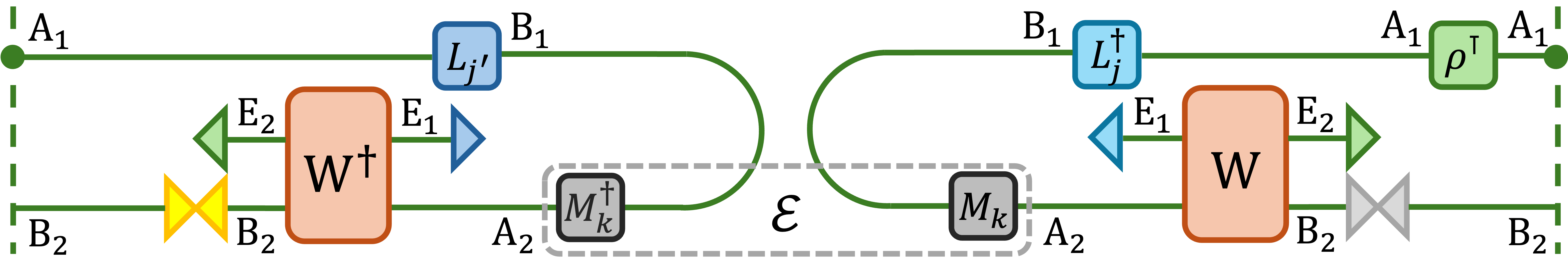}}
\end{align}
or, equivalently,
\begin{align}\label{TN:RT_5}
    \raisebox{0ex}{\includegraphics[height=5.8em]{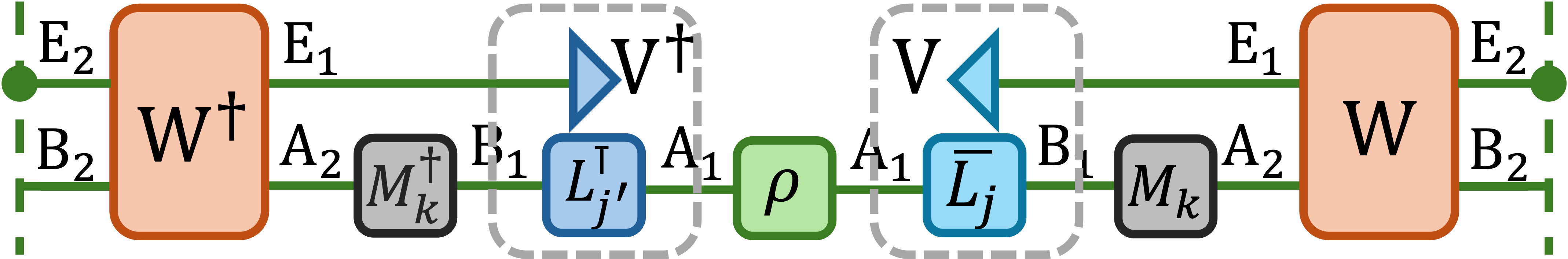}}
\end{align}
thereby completing the proof.

\end{widetext}

\end{proof}

Throughout the proof, we repeatedly invoke the standard trick of expanding the identity operator in an orthonormal basis (see Sec.~\ref{subsec:Id&Tr}). 
For example, in Eq.~\eqref{TN:RT_2} we write $\1_{B_2}=\sum_x\ketbra{x}{x}$ and $\1_{A_2}=\sum_y\ketbra{y}{y}$, where the corresponding basis indices $x$ and $y$ differ. 
To keep the tensor-network diagrams visually uncluttered, we suppress these indices and use colors to encode them instead.

Following the same convention, in Eq.~\eqref{TN:RT_3} the blue triangle denotes the ket carrying the index $j$ associated with the Kraus operator $L_{j}^{\dag}$; 
the summation implied by the blue box–triangle pair is omitted for brevity. 
Finally, in Eq.~\eqref{TN:RT_4}, although the grey and yellow triangles each represent a basis decomposition of the same Hilbert space $B_2$, their associated indices need not coincide. 
Distinct colors are therefore used to emphasize that the two decompositions are independent.

Thm.~\ref{thm:RT} establishes that every superchannel admits a physical realization as a sequential process comprising a pre-processing map, and a post-processing map. 
The converse is immediate: inserting any pre-processing channel before the input channel and any post-processing channel afterward always yields a valid quantum channel. 
Taken together, these yield

\begin{cor}
[\bf{Superchannel}]
\label{cor:Superchannel}
A linear map is a superchannel if and only if it admits a sequential realization by quantum channels, as depicted in Fig.~\ref{fig:Superchannel}.
\end{cor}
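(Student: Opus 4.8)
The plan is to prove the two implications separately, observing that the forward direction is precisely the content of Thm.~\ref{thm:RT}, while the converse is an elementary consequence of the fact that composing completely positive trace-preserving (CPTP) maps again yields a CPTP map. First I would record that the two phrasings of ``sequential realization'' coincide: one built from isometries followed by tracing out an environment (as in Fig.~\ref{fig:Superchannel}), the other from arbitrary quantum channels. By the Stinespring dilation of Sec.~\ref{subsec:Stinespring}, every channel factors as an isometry followed by a partial trace, so the two descriptions generate the same class of higher-order maps; this lets me move freely between ``isometries plus trace'' and ``channels'' in what follows.

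For the ``only if'' direction, I would invoke Thm.~\ref{thm:RT} directly. Given a superchannel $\theta$ carrying channels $B_1\to A_2$ to channels $A_1\to B_2$, the theorem supplies a pre-processing isometry $V:A_1\to E_1B_1$ and a post-processing isometry $W:E_1A_2\to E_2B_2$, after which the environment $E_2$ is traced out. Packaging these as the channels $\mathcal{V}(\cdot):=V\cdot V^{\dag}$ and $\mathcal{W}(\cdot):=\Tr_{E_2}[W\cdot W^{\dag}]$, both manifestly CPTP, exhibits the sequential realization of $\theta$ by quantum channels exactly as depicted in Fig.~\ref{fig:Superchannel}.

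For the ``if'' direction, I would start from any pre-processing channel $\mathcal{V}:A_1\to E_1B_1$ and post-processing channel $\mathcal{W}:E_1A_2\to B_2$ linked through the memory $E_1$, and define the induced transformation $\theta(\mE):=\mathcal{W}\circ(\id_{E_1}\otimes\mE)\circ\mathcal{V}$ on an input channel $\mE:B_1\to A_2$. Since $\theta(\mE)$ is a composition of CPTP maps, it is itself CPTP, so $\theta$ sends channels to channels. The cleanest way to certify that $\theta$ is a bona fide superchannel is through the link product of Sec.~\ref{subsec:Link_Product}: its action on Choi operators reads $J^{\theta(\mE)}=J^{\mathcal{W}}\star J^{\mE}\star J^{\mathcal{V}}$, which is linear in $J^{\mE}$ and, because $J^{\mathcal{V}},J^{\mathcal{W}}\geqslant0$, defines a completely positive induced map $f_{\theta}(\cdot)=J^{\mathcal{W}}\star(\cdot)\star J^{\mathcal{V}}$. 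Trace preservation of $\theta(\mE)$ for every $\mE$ then follows from the trace-preservation conditions on $\mathcal{V}$ and $\mathcal{W}$, so the construction respects all the physical constraints defining a superchannel.

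The substantive content of the corollary is therefore entirely carried by Thm.~\ref{thm:RT}; the genuine obstacle, namely that every abstractly defined superchannel \emph{must} arise from a pre-/post-processing structure with memory, has already been surmounted there via the Prepare--and--Trace lemma (Lem.~\ref{lem:P&T}) and the isometric reduction of Eq.~\eqref{TN:RT_3}. The converse is routine, the only point meriting care being the verification that the link-product expression above yields a completely positive---not merely positive---higher-order map, which is immediate since it is assembled from the fixed positive operators $J^{\mathcal{V}}$ and $J^{\mathcal{W}}$ acting on $J^{\mE}$ tensored with identities.
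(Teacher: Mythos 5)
Your proposal is correct and takes essentially the same route as the paper: the ``only if'' direction is delegated entirely to Thm.~\ref{thm:RT}, and the converse is the observation that composing CPTP maps (pre-processing, the input channel tensored with the identity on the memory, post-processing) again yields a CPTP map. The extra details you supply---the Stinespring reconciliation of ``isometries plus trace'' with ``channels,'' and the link-product verification that the induced map on Choi operators is completely positive---merely flesh out what the paper dismisses as immediate, rather than constituting a different argument.
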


%%%%%%%%%%%%%%%%%%%%%%%%%%%%%%%%%%%%%%%%%%%%%%%%%%%%%%%%%%%%%%%%%%%%%%%
%%%%%%%%%%%%%%%%%%%%%%%%%%%%%%%%%%%%%%%%%%%%%%%%%%%%%%%%%%%%%%%%%%%%%%%

\subsection{Memory Cost}\label{subsec:Memory_Cost}

In contrast to the case of quantum state transformations, which never call for a quantum memory, the transformation of quantum channels typically does. 
The realization theorem makes this requirement explicit: any superchannel acting on channels must, in general, be implemented with an intermediate memory system $E_1$ (see Fig.~\ref{fig:Superchannel}). 
Our tensor-network derivation further identifies the minimal memory cost, showing that it is fixed by the Choi rank of the channel $\mF_{\theta}$.
This leads directly to the following corollary.

\begin{cor}
[\bf{Memory Cost}]
\label{cor:MC}
In the architecture of a superchannel $\theta$ (see Fig.~\ref{fig:Superchannel}), the minimal memory cost $d_{\theta}$, namely the smallest admissible dimension of the intermediate system $E_1$, is set by the rank of $\mF_{\theta}$ (see Eq.~\eqref{eq:mF_theta}).
Mathematically, we have 
\begin{align}\label{eq:memory_cost}
    d_{\theta}=\rank(\Tr_{A_2B_2}[J^{f_{\theta}^{\dag}}]).
\end{align}
\end{cor}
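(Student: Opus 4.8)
The plan is to reduce the memory cost to a Stinespring question about the single channel $\mF_{\theta}$ and then to rewrite the Choi rank of $\mF_{\theta}$ in the form asserted in Eq.~\eqref{eq:memory_cost}. In the realization of Thm.~\ref{thm:RT}, the pre-processing isometry $V:A_1\to E_1B_1$ is nothing but a Stinespring dilation of $\mF_{\theta}$: discarding the memory immediately after $V$ reproduces $\rho\mapsto\sum_j L_j\rho L_j^{\dag}=\mF_{\theta}(\rho)$, with $E_1$ playing the role of the environment (cf. Sec.~\ref{subsec:Stinespring}). Hence the admissible dimensions of $E_1$ are exactly the environment dimensions of Stinespring dilations of $\mF_{\theta}$, and the smallest of these equals the number of linearly independent Kraus operators of $\mF_{\theta}$, i.e. the Choi rank $\rank(J^{\mF_{\theta}})$. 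The first half of the argument therefore amounts to establishing $d_{\theta}=\rank(J^{\mF_{\theta}})$.

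The second half computes $J^{\mF_{\theta}}$ from $J^{f_{\theta}^{\dag}}$. Using the action-from-Choi identity $\mN(\rho)=\Tr_{A_1}[J^{\mN}(\rho^{\T}\otimes\1)]$ together with the defining relation Eq.~\eqref{eq:mF_theta}, I would feed $\rho_{A_1}\otimes\1_{B_2}$ into $f_{\theta}^{\dag}$ through its Choi operator. Because the input on $B_2$ is the identity, contracting it against $J^{f_{\theta}^{\dag}}$ collapses to a partial trace over $B_2$, while the $\Tr_{A_2}$ prescribed by Eq.~\eqref{eq:mF_theta} removes the remaining output system; carrying the $\rho_{A_1}$ dependence through, one reads off
\begin{align}
    J^{\mF_{\theta}}_{A_1B_1}=\frac{1}{d_{A_2}}\Tr_{A_2B_2}[J^{f_{\theta}^{\dag}}].
\end{align}
Since rank is invariant under multiplication by the nonzero scalar $1/d_{A_2}$, this gives $\rank(J^{\mF_{\theta}})=\rank(\Tr_{A_2B_2}[J^{f_{\theta}^{\dag}}])$, and combining with $d_{\theta}=\rank(J^{\mF_{\theta}})$ yields exactly Eq.~\eqref{eq:memory_cost}.

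The main obstacle is the lower bound hidden in the first step: showing that no realization of $\theta$ can use a memory smaller than the minimal Stinespring environment of $\mF_{\theta}$. The upper bound is immediate from the explicit construction in Thm.~\ref{thm:RT}, but minimality requires arguing that the pre-processing stage is forced to implement $\mF_{\theta}$ up to an isometry on the memory that cannot reduce its dimension. I would establish this by noting that $\mF_{\theta}$ is intrinsically fixed by $\theta$ via Eq.~\eqref{eq:mF_theta}, so that the $A_1\to B_1$ marginal of any valid pre-processing must coincide with $\mF_{\theta}$; a dilation of a fixed channel cannot have an environment below its Choi rank, which closes the bound. A secondary, purely bookkeeping hazard is the Choi computation itself: one must track the input/output roles of $A_1,B_2$ versus $B_1,A_2$ under $f_{\theta}^{\dag}$ and keep the transpose conventions consistent, so that feeding $\1_{B_2}$ genuinely produces $\Tr_{B_2}$ rather than a transposed remnant.
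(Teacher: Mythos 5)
Your proposal is correct and follows essentially the same route as the paper: identify the memory $E_1$ with the Stinespring environment of the pre-processing channel $\mF_{\theta}$, whose minimal dimension is its Choi rank, and then compute $J^{\mF_{\theta}}=\tfrac{1}{d_{A_2}}\Tr_{A_2B_2}[J^{f_{\theta}^{\dag}}]$, with the scalar dropping out of the rank. The only differences are presentational -- the paper carries out this computation diagrammatically in Eq.~\eqref{TN:Memory_Cost} and leaves the lower-bound (minimality) direction implicit in its tensor-network derivation of Thm.~\ref{thm:RT}, whereas you make both the algebraic computation and the forced identification of the pre-processing marginal with $\mF_{\theta}$ explicit.
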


\noindent
Eq.~\eqref{eq:memory_cost} is obtained by noting that

\begin{widetext}
\begin{align}\label{TN:Memory_Cost}
    \raisebox{0ex}{\includegraphics[height=11em]{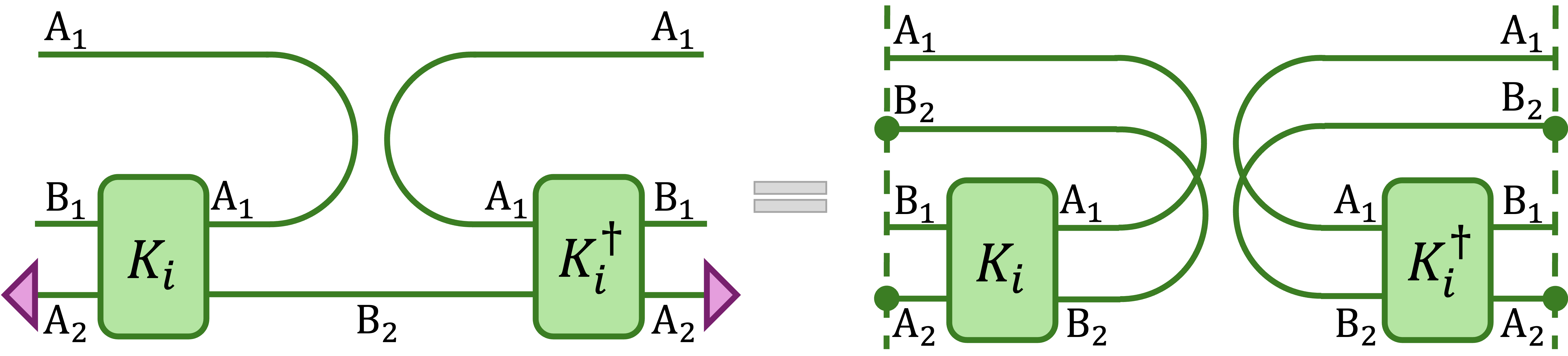}}
\end{align}
\end{widetext}

\noindent
Here, we omit the overall scalar factor $1/d_{A_2}$, since it has no effect on the rank of $\mF_{\theta}$.

%%%%%%%%%%%%%%%%%%%%%%%%%%%%%%%%%%%%%%%%%%%%%%%%%%%%%%%%%%%%%%%%%%%%%%%
%%%%%%%%%%%%%%%%%%%%%%%%%%%%%%%%%%%%%%%%%%%%%%%%%%%%%%%%%%%%%%%%%%%%%%%

\subsection{Generalized Occam's Razor}\label{subsec:Bi_Channels}

{\it Occam's Razor} is often invoked to favor the simpler of two competing descriptions~\cite{Sober_2015}. 
For quantum superchannels, however, this familiar idea offers little guidance. 
The two prevailing viewpoints, 
\begin{enumerate}[i]
  \item ({\bf Higher-Order Transformation}) One treating a superchannel as a higher order transformation that maps a quantum channel to another channel;
  \item ({\bf Bipartite Channel}) The other interpreting it as a bipartite channel involving multiple time steps and constrained by no signaling from post-processing to pre-processing,
\end{enumerate}
share essentially the same structural and complexity.
Neither viewpoint is appreciably simpler, and the usual spirit of Occam's Razor cannot select between them.

\begin{figure}[t]
\centering   
\includegraphics[width=0.48\textwidth]{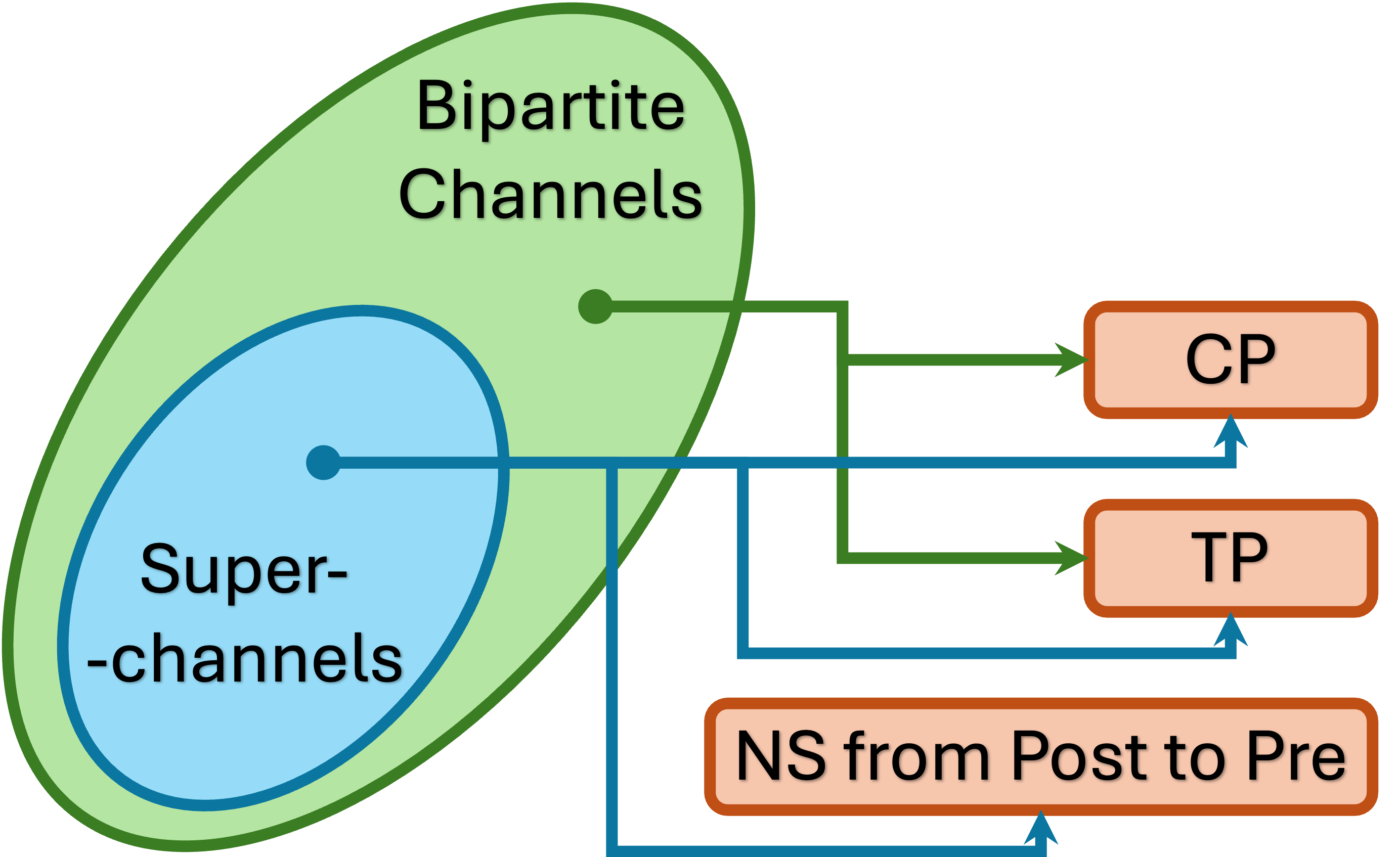}
\caption{(Color online) \textbf{Superchannels as a Subset of Bipartite Quantum Channels}.  
Any superchannel satisfies CP, TP, and the no-signaling (NS) condition from post-processing to pre-processing, and can therefore be regarded as a bipartite channel. 
For instance, the superchannel $\theta$ illustrated in Fig.~\ref{fig:Superchannel} can be viewed as a bipartite channel from $A_1A_2$ to $B_1B_2$.
}
\label{fig:Bipartite_Channel}
\end{figure}

This motivates a generalized version of the principle. 
When two formulations exhibit comparable complexity, the more compelling choice is the one that aligns more naturally with established theoretical frameworks, since no theory stands entirely on its own. 
The theory of quantum channels is already rich and well developed, and regarding superchannels as a special subclass of bipartite channels embeds higher-order transformations within this familiar setting (see Fig.~\ref{fig:Bipartite_Channel}). 
This perspective maintains continuity with existing results and allows us to draw directly on the full set of tools from channel theory~\cite{PhysRevA.80.022339}.

Adopting this viewpoint turns out to be highly valuable. 
It resolves inconsistencies that persist in current formulations, provides a unified account of the four standard representations of superchannels, and leads naturally to refined notions such as entanglement breaking superchannels, which in turn offer new structural and operational insights.
These developments will be detailed in the sections that follow.
In this sense, viewing a superchannel as a bipartite channel embodies the generalized Occam's razor -- the second razor guiding this work.

%%%%%%%%%%%%%%%%%%%%%%%%%%%%%%%%%%%%%%%%%%%%%%%%%%%%%%%%%%%%%%%%%%%%%%%
%%%%%%%%%%%%%%%%%%%%%%%%%%%%%%%%%%%%%%%%%%%%%%%%%%%%%%%%%%%%%%%%%%%%%%%

\subsection{Choi Operators of Superchannels}\label{subsec:SC_Choi}

Two constructions of the Choi operator for a superchannel are commonly used, each reflecting one of the viewpoints introduced in Sec.~\ref{subsec:Bi_Channels}. 
Guided by our Generalized Occam's Razor, we adopt the bipartite channel perspective and take its corresponding Choi operator as the canonical choice. 
This representation provides a clean point of departure for our analysis and helps illuminate how it relates to the alternative formulation.
In this way, our framework reconciles the two formulations and resolves the inconsistency in how Choi operators for superchannels are defined.

Formally, one may simply redraw the superchannel as the bipartite diagram shown in Fig.~\ref{fig:Superchannel_Choi_1}, treating $\theta$ as a channel from $A_1A_2$ to $B_1B_2$.
Its Choi operator is then defined in the usual way via full vectorization of the input systems (see Eq.~\eqref{TN:Full_Vectorization}).
Explicitly, it is given by
\begin{align}\label{eq:Superchannel_Choi_1}
    J^{\theta}_{A_1A_2B_1B_2}
    :=\id\otimes\theta_{A_1A_2\to B_1B_2}(\Gamma_{A_1A_1}\otimes\Gamma_{A_2A_2}).
\end{align}
This is precisely the construction used by Chiribella {\it et al.} in Ref.~\cite{PhysRevLett.101.060401} to define the Choi operator of a quantum comb from its causal network.

\begin{figure}[t]
\centering   
\includegraphics[width=0.42\textwidth]{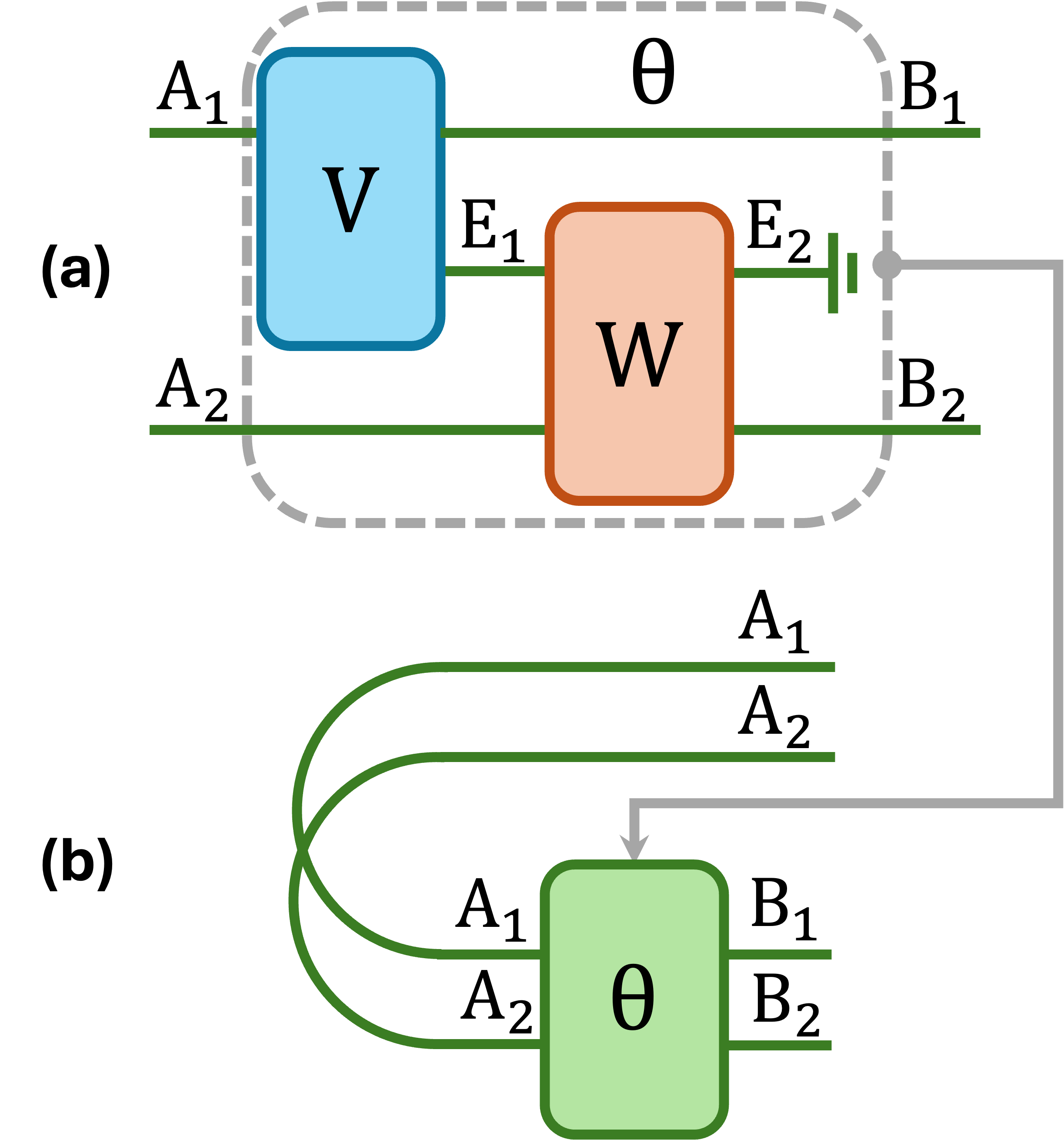}
\caption{(Color online) \textbf{Superchannel Choi Operator}.  
(a) Bipartite representation of the superchannel $\theta$ (see Fig.~\ref{fig:Superchannel}).
(b) The corresponding Choi operator.
Exploiting the symmetry of the tensor network, we depict only half of the conventional representation; the Kraus operators are omitted and replaced by the symbol $\theta$, giving a compact circuit-style diagram.
}
\label{fig:Superchannel_Choi_1}
\end{figure}

In Ref.~\cite{8678741}, Gour introduced a Choi-type operator for a superchannel from the higher-order transformation perspective.
To distinguish this construction from the conventional Choi operator defined in Eq.~\eqref{eq:Superchannel_Choi_1}, we refer to it as the {\it Gour operator}.
Once both representations are on the table, a natural question emerges: how are they related?
Surprisingly, this connection has received little attention in the existing literature on superchannels and has remained unresolved.
%Before establishing the relation, we set up the necessary notation.

Before relating the two constructions, we briefly recall how the Gour operator is defined.
It is convenient to return first to the conventional Choi operator (see Eq.~\eqref{eq:Choi}) for channels, whose underlying object is a quantum state. 
For an ordinary channel $\mE:A\to B$, one identifies the image of each matrix $\ketbra{i}{j}_A$ spanning the input space, records $\mE(\ketbra{i}{j})_B$, and sums these contributions to obtain the channel's Choi operator 
\begin{align}\label{eq:Channel_Choi}
    J^{\mE}=\sum_{ij}\ketbra{i}{j}\otimes\mE(\ketbra{i}{j}).
\end{align}
The same logic extends naturally to superchannels $\theta:A_1A_2\to B_1B_2$ (see Fig.~\ref{fig:Superchannel}). 
Here, the ``input space'' is the vector space of all linear maps from $B_1$ to $A_2$. Its canonical basis is given by the maps $e_{ijkl}:B_1\to A_2$
\begin{align}\label{eq:e_ijkl}
    e_{ijkl}(\cdot):=
    \langle \ketbra{i}{j}_{B_1}, \cdot \rangle 
    \ketbra{k}{l}_{A_2},
\end{align}
whose Choi operator takes the form
\begin{align}
    J^{e_{ijkl}}_{B_1A_2}
    =
    \ketbra{i}{j}_{B_1}\otimes\ketbra{k}{l}_{A_2}.
\end{align}
Evaluating the superchannel on each $J^{e_{ijkl}}$ and summing the corresponding Choi operators yields the Gour operator $G^{\theta}$, in direct analogy with the construction of the conventional Choi operator for quantum channels.
\begin{align}\label{eq:Superchannel_Gour}
    G^{\theta}_{B_1A_2A_1B_2}:=\sum_{ijkl}
    J^{e_{ijkl}}_{B_1A_2}\otimes J^{\theta(e_{ijkl})}_{A_1B_2}.
\end{align}

Strictly speaking, the Gour operator $G^{\theta}$ (see Eq.~\eqref{eq:Superchannel_Gour}) and the Choi operator $J^{\theta}$ (see Eq.~\eqref{eq:Superchannel_Choi_1}) of a quantum superchannel $\theta$ are not the same mathematical object. 
Even at first glance, the two differ in the ordering of their underlying systems. 
Beyond this apparent mismatch, one may ask whether they differ in any deeper sense. 
To explore this, we examine their relationship through the following reformulation,

\begin{align}
    G^{\theta}
    =
    &\sum_{ijkl}
    J^{e_{ijkl}}_{B_1A_2}\otimes 
    J^{\theta}_{A_1\tilde{A}_2\tilde{B}_1B_2}\star 
    J^{e_{ijkl}}_{\tilde{B}_1\tilde{A}_2}\notag\\
    =
    &J^{\theta}\star
    \left(\sum_{ijkl}
    \ketbra{i}{j}_{B_1}\otimes\ketbra{k}{l}_{A_2}
    \otimes
    \ketbra{i}{j}_{\tilde{B}_1}\otimes\ketbra{k}{l}_{\tilde{A}_2}
    \right)
    \notag\\
    =
    &J^{\theta}\star
    \left(\Gamma_{B_1\tilde{B}_1}\otimes\Gamma_{A_2\tilde{A}_2}\right)\notag\\
    =
    &\Tr_{\tilde{B}_1\tilde{A}_2}[J^{\theta}_{A_1\tilde{A}_2\tilde{B}_1B_2}\cdot \swap_{B_1\tilde{B}_1}\otimes\swap_{A_2\tilde{A}_2}].
    \label{eq:Gour_Choi}
\end{align}
Here the tilde notation is used to distinguish the reference system $B_1A_2$ from the actual system $\tilde{B}_1\tilde{A}_2$ on which $\theta$ acts, so as to avoid any potential confusion. Using tensor-network, Eq.~\eqref{eq:Gour_Choi} can be written as

\begin{widetext}
\begin{align}\label{TN:Gour_Choi}
    \raisebox{0ex}{\includegraphics[height=11.2em]{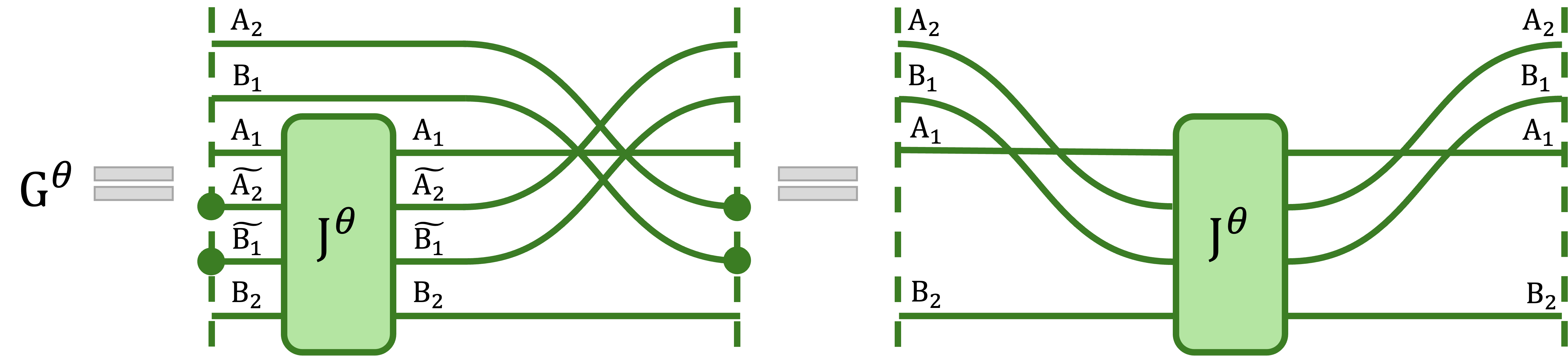}}
\end{align}
\end{widetext}

\noindent
It is now evident that the conventional Choi operator $J^{\theta}$ and the Gour operator $G^{\theta}$ of a superchannel $\theta$ differ only by system permutations. 
Since, in this work, matrices are treated as equivalent modulo permutations, we conclude that

\begin{thm}
[\bf{Equivalence Theorem}]
\label{thm:ET}
Given a superchannel $\theta$, its Choi operator (see Eq.~\eqref{eq:Superchannel_Choi_1}) and its Gour operator (see Eq.~\eqref{eq:Superchannel_Gour}) are equivalent up to a permutation of subsystems.
\end{thm}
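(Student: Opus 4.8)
The plan is to show that $G^{\theta}$ and $J^{\theta}$ carry identical operator data and differ only in how their systems are arranged. The strategy is to rewrite the defining sum for the Gour operator so that the action of $\theta$ on each basis map $e_{ijkl}$ is expressed through the link product against the canonical Choi operator $J^{\theta}$, and then to recognize the resulting contraction as a pure relabeling of subsystems.

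First I would invoke the commutative diagram underlying the realization theorem (see Fig.~\ref{fig:Commutative_Diagram} and Thm.~\ref{thm:RT}): the induced map satisfies $f_{\theta}(J^{e_{ijkl}})=J^{\theta(e_{ijkl})}$, and because the generalized Occam's razor lets us treat $\theta$ as a bipartite channel from $A_1A_2$ to $B_1B_2$, the channel-composition rule for Choi operators (see Eq.~\eqref{eq:Channels_Composition}) gives $J^{\theta(e_{ijkl})}_{A_1B_2}=J^{\theta}_{A_1\tilde{A}_2\tilde{B}_1B_2}\star J^{e_{ijkl}}_{\tilde{B}_1\tilde{A}_2}$, where tildes distinguish the reference copy of $B_1A_2$ from the copy on which $\theta$ genuinely acts. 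Substituting this into the definition in Eq.~\eqref{eq:Superchannel_Gour} and pulling the linear link product outside the sum reduces $G^{\theta}$ to $J^{\theta}$ linked against $\sum_{ijkl}\ketbra{i}{j}_{B_1}\otimes\ketbra{k}{l}_{A_2}\otimes\ketbra{i}{j}_{\tilde{B}_1}\otimes\ketbra{k}{l}_{\tilde{A}_2}$, exactly as in the first two lines of Eq.~\eqref{eq:Gour_Choi}.

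The remaining step collapses this sum using the standard resolution $\sum_{ij}\ketbra{i}{j}\otimes\ketbra{i}{j}=\Gamma$ applied to each matched pair of systems, so that the bracketed operator equals $\Gamma_{B_1\tilde{B}_1}\otimes\Gamma_{A_2\tilde{A}_2}$. Finally I would unwind the link product $J^{\theta}\star(\Gamma_{B_1\tilde{B}_1}\otimes\Gamma_{A_2\tilde{A}_2})$ through the definition $M\star N=\Tr_C[M^{\T_C}N]$ (see Eq.~\eqref{eq:Link_Product}): the partial transpose on the contracted legs converts each $\Gamma$ into a $\swap$ operator, leaving $\Tr_{\tilde{B}_1\tilde{A}_2}[J^{\theta}\cdot\swap_{B_1\tilde{B}_1}\otimes\swap_{A_2\tilde{A}_2}]$. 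Tracing $J^{\theta}$ against a SWAP merely transfers its indices from the tilde systems onto the untilded ones, so the whole operation is nothing but an identification of systems, i.e. a permutation. This is precisely the content of the last line of Eq.~\eqref{eq:Gour_Choi}, which can be read off transparently from the tensor-network picture in Eq.~\eqref{TN:Gour_Choi}.

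I expect the only real difficulty to be bookkeeping rather than conceptual: one must keep the four index families and their tilde partners correctly aligned through the link product, and verify at the last step that the $\Gamma$-to-$\swap$ conversion under partial transpose acts solely as a relabeling of systems and introduces no genuine transformation of the operator content. Once the SWAP contraction is recognized as a permutation, the conclusion that $G^{\theta}$ and $J^{\theta}$ coincide modulo subsystem ordering follows immediately under the convention adopted here that matrices are treated as equivalent up to permutations.
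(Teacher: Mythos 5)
Your proposal is correct and follows essentially the same route as the paper: both expand $J^{\theta(e_{ijkl})}$ as $J^{\theta}\star J^{e_{ijkl}}$, pull the link product out of the sum by linearity, resum the basis matrices into $\Gamma_{B_1\tilde{B}_1}\otimes\Gamma_{A_2\tilde{A}_2}$, and evaluate the resulting link product as a trace against $\swap$ operators, which acts purely as a relabeling of subsystems (this is exactly Eq.~\eqref{eq:Gour_Choi} and its diagrammatic form Eq.~\eqref{TN:Gour_Choi}). The only cosmetic difference is your explicit appeal to the commutative diagram $f_{\theta}(J^{\mE})=J^{\theta(\mE)}$, which the paper uses implicitly in the first line of its derivation.
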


The literature makes extensive use of both the Choi operator and the Gour operator, especially in the analysis of dynamical quantum resources. 
Yet the absence of a clear link between these two constructions has long obscured the search for a unified account of quantum superchannels. 
The theorem above resolves this obstacle. 
Although the two operators are mathematically distinct, largely because their underlying system orderings are arranged differently, they become equivalent once permutations of subsystems are treated as immaterial. 
This insight brings the two viewpoints into alignment and provides the conceptual foundation needed for a unified treatment of quantum superchannels.

Building on this foundation, we take the Choi operator defined in Eq.~\eqref{eq:Superchannel_Choi_1} as the canonical representation. 
It serves as the starting point from which the other formulations of a superchannel naturally emerge, allowing us to develop a complete and internally coherent framework for their structure and behavior.
It is worth emphasizing, however, that if one disregards physical constraints and focuses purely on producing matrix representations, then infinitely many ``Choi-like'' constructions become available.
For a quantum channel $\mE$, any pair of bijective linear maps $f$ and $g$ generates an operator
\begin{align}
    J^{\mE}_{(f,g)}
    :=
    f\left(\sum_{ij}\ketbra{i}{j}\otimes\mE\left(g\left(\ketbra{i}{j}\right)\right)\right),
\end{align}
recovering the conventional Choi operator (see Eq.~\eqref{eq:Channel_Choi}) when $f=g=\id$ (See Ref.~\cite{CHOI1975285}), the Jamio\l kowski operator when $f=\id$ and $g=\T$ (See Ref.~\cite{JAMIOLKOWSKI1972275}), and, in general, producing an infinite family of representations that all contain complete information about the underlying dynamics.
This construction extends straightforwardly to superchannels and, more broadly, to non-Markovian quantum dynamics.

Finally, we turn to another fundamental question: 
given the Choi operator of a linear map $\theta$ (see Fig.~\ref{fig:Superchannel}), denoted $J^{\theta}$, how can we determine whether it represents a genuine superchannel?
If $\theta$ is indeed a superchannel, then its Choi operator must satisfy three conditions: 
the complete positivity (CP) condition in Eq.~\eqref{eq:Superchannel_Choi_CP}, the trace-preserving (TP) condition in Eq.~\eqref{eq:Superchannel_Choi_TP}, and the no-signaling (NS) constraint from post-processing to pre-processing in Eq.~\eqref{eq:Superchannel_Choi_NS}.

\begin{align}
    J^{\theta}_{A_1A_2B_1B_2}&\geqslant0,\label{eq:Superchannel_Choi_CP}\\
    \Tr_{B_1B_2}[J^{\theta}_{A_1A_2B_1B_2}]&=\1_{A_1A_2},\label{eq:Superchannel_Choi_TP}\\
    \Tr_{B_2}[J^{\theta}_{A_1A_2B_1B_2}]&=
    J^{\theta}_{A_1B_1}\otimes
    \frac{1}{d_{A_2}}\1_{A_2}.\label{eq:Superchannel_Choi_NS}
\end{align}
Here, $J^{\theta}_{A_1B_1}$ refers to the marginal of $J^{\theta}_{A_1A_2B_1B_2}$. 
In full, this is given by $J^{\theta}_{A_1B_1}:=\Tr_{A_2B_2}[J^{\theta}_{A_1A_2B_1B_2}]$. 
What makes the Choi operator formalism especially powerful is that the converse also holds. 
Whenever $J^{\theta}$ satisfies Eqs.~\eqref{eq:Superchannel_Choi_CP},~\eqref{eq:Superchannel_Choi_TP}, and~\eqref{eq:Superchannel_Choi_NS}, the underlying map $\theta$ is necessarily a superchannel. 
This characterization leads directly to the following theorem.

\begin{thm}
[\bf{Superchannel Choi Operator}~\cite{PhysRevA.80.022339}]
\label{thm:Superchannel_Choi}
A linear map $\theta:A_1A_2\to B_1B_2$ is a superchannel if and only if its Choi operator, defined in Eq.~\eqref{eq:Superchannel_Choi_1}, satisfies the three structural conditions listed in Eqs.~\eqref{eq:Superchannel_Choi_CP},~\eqref{eq:Superchannel_Choi_TP}, and~\eqref{eq:Superchannel_Choi_NS}.
\end{thm}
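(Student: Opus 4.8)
The plan is to prove both implications by leaning on the realization theorem (Thm.~\ref{thm:RT}) together with the link-product formalism for composing channels (Sec.~\ref{subsec:Link_Product}). For the forward direction I would assume that $\theta$ is a superchannel and invoke its sequential realization from Fig.~\ref{fig:Superchannel}: a pre-processing isometry $V:A_1\to E_1B_1$, a post-processing isometry $W:E_1A_2\to E_2B_2$, and a final trace over $E_2$. Regarded as a bipartite channel from $A_1A_2$ to $B_1B_2$, $\theta=\Tr_{E_2}\circ W\circ V$ is manifestly CPTP, since it is built from isometries followed by a partial trace; this immediately yields the positivity condition~\eqref{eq:Superchannel_Choi_CP} and the trace condition~\eqref{eq:Superchannel_Choi_TP}.

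The key step in the forward direction is the no-signaling condition~\eqref{eq:Superchannel_Choi_NS}. Here I would exploit the isometry identity $W^{\dag}W=\1$: tracing out $B_2$ on top of the already-discarded $E_2$ amounts to tracing the entire output of $W$, which collapses the post-processing stage and effectively discards its input $E_1A_2$. What remains is determined solely by $V$ acting on $A_1$ and by the maximally entangled state fed into $A_2$; pushing the physical copy of $A_2$ through the eliminated $W$ leaves the reference copy in the maximally mixed state, producing exactly the product form $J^{\theta}_{A_1B_1}\otimes\frac{1}{d_{A_2}}\1_{A_2}$, with the normalization $1/d_{A_2}$ fixed by consistency with~\eqref{eq:Superchannel_Choi_TP}. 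Physically this encodes that the later input $A_2$ cannot influence the earlier output $B_1$. I would carry out this step diagrammatically in the tensor-network calculus of Sec.~\ref{subsec:Vec}, where the collapse $W^{\dag}W=\1$ and the resulting factorization become visually transparent.

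For the converse I would assume~\eqref{eq:Superchannel_Choi_CP}--\eqref{eq:Superchannel_Choi_NS} and show that $\theta$ maps every channel to a channel, after which Thm.~\ref{thm:RT} supplies the sequential realization and hence the superchannel property. Given any CPTP map $\mE:B_1\to A_2$, the induced output has Choi operator $J^{\theta(\mE)}=J^{\theta}\star J^{\mE}$. Complete positivity of $\theta(\mE)$ is immediate, since $J^{\theta}\geqslant0$ and $J^{\mE}\geqslant0$ and the link product of positive operators is positive. For trace preservation I would compute $\Tr_{B_2}[J^{\theta}\star J^{\mE}]$ by first substituting the no-signaling form~\eqref{eq:Superchannel_Choi_NS} for $\Tr_{B_2}[J^{\theta}]$, then using $\Tr_{A_2}[J^{\mE}_{B_1A_2}]=\1_{B_1}$ (trace preservation of $\mE$), and finally~\eqref{eq:Superchannel_Choi_TP} to evaluate the remaining trace of $J^{\theta}$; the factors of $d_{A_2}$ cancel and one obtains $\Tr_{B_2}[J^{\theta(\mE)}]=\1_{A_1}$. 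Thus $\theta(\mE)$ is CPTP for every $\mE$, and Thm.~\ref{thm:RT} completes the argument.

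I expect the main obstacle to be the no-signaling analysis in the forward direction. Unlike CP and TP, which follow from the generic composition properties shared by all bipartite channels, condition~\eqref{eq:Superchannel_Choi_NS} is precisely what singles out superchannels among bipartite channels, and extracting the correct product structure together with the $1/d_{A_2}$ normalization requires carefully tracking how the isometry $W$ is eliminated and how the reference and physical copies of $A_2$ decouple. The same factorization reappears, in reverse, as the crucial input to the trace-preservation computation of the converse, so getting this identity clean is the linchpin of the whole proof.
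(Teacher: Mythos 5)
Your proposal is correct and takes essentially the same route as the paper: the forward direction rests on the sequential realization (Cor.~\ref{cor:Superchannel}), and the converse is exactly the paper's link-product computation of $\Tr_{B_2}[J^{\theta}\star J^{\mE}]$, applying the NS condition, then trace preservation of $\mE$, then Eq.~\eqref{eq:Superchannel_Choi_TP}, with the $d_{A_2}$ factors cancelling to give $\1_{A_1}$. The only difference is one of detail: you explicitly derive the no-signaling factorization from $W^{\dag}W=\1$, whereas the paper compresses that entire direction into the remark that it is ``immediate'' from Cor.~\ref{cor:Superchannel}.
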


For completeness, we include a proof of Thm.~\ref{thm:Superchannel_Choi} here. 
Readers interested in the original derivations and further discussion may consult Theorems 3 and 5 of Ref.~\cite{PhysRevA.80.022339}.

\begin{proof}
Based on Cor.~\ref{cor:Superchannel}, the sufficiency is immediate. 
We therefore focus on the necessary direction: assuming Eqs.~\eqref{eq:Superchannel_Choi_CP}–\eqref{eq:Superchannel_Choi_NS} hold, we show that $\theta$ is a superchannel, that is, a deterministic transformation mapping channels to channels.
Eq.~\eqref{eq:Superchannel_Choi_CP} guarantees that $\theta$ sends completely positive maps to completely positive maps. 
The remaining step is to confirm that, for any channel $\mE$, the image $\theta(\mE)$ is trace-preserving. 
We check this through the Choi operator of $\theta(\mE)$, namely,
\begin{align}
    \Tr_{B_2}[J^{\theta(\mE)}]
    &=
    \Tr_{B_2}[J^{\theta}\star J^{\mE}]\notag\\
    &=
    \Tr_{B_1A_2B_2}[J^{\theta}\cdot (J^{\mE})^{\T}]\notag\\
    &=
    \Tr_{B_1A_2}[
    J^{\theta}_{A_1B_1}\otimes
    \frac{1}{d_{A_2}}\1_{A_2}\cdot (J^{\mE})^{\T}
    ]\notag\\
    &=
    \frac{1}{d_{A_2}}
    \Tr_{B_1}[J^{\theta}_{A_1B_1}]\notag\\
    &=
    \1_{A_1},
\end{align}
where the third line is obtained by applying Eq.~\eqref{eq:Superchannel_Choi_NS}, and the final equality is derived using Eq.~\eqref{eq:Superchannel_Choi_TP}.
\end{proof}

%%%%%%%%%%%%%%%%%%%%%%%%%%%%%%%%%%%%%%%%%%%%%%%%%%%%%%%%%%%%%%%%%%%%%%%
%%%%%%%%%%%%%%%%%%%%%%%%%%%%%%%%%%%%%%%%%%%%%%%%%%%%%%%%%%%%%%%%%%%%%%%

\subsection{Representations of Superchannels}\label{subsec:SC_Representations}

Up to this point, we have resolved the inconsistency surrounding the Choi representation of superchannels, establishing a unified framework that clarifies how the different formulations in the literature fit together (see Thm.~\ref{thm:ET}). 
Several key questions, however, remain open.
The first concerns the relationship between the Choi operator $J^{\theta}$ (see Fig.~\ref{fig:Superchannel_Choi_1}(b)) and the map $f_{\theta}$ (see Fig.~\ref{fig:Commutative_Diagram}). 
In particular, given the spectral decomposition of $J^{\theta}$, how can one explicitly construct $f_{\theta}$? 
This question is central, as the map $f_{\theta}$ plays a pivotal role in the proof of the realization theorem (see Thm.~\ref{thm:RT}).

A second line of inquiry arises from the fact that, so far, our analysis has focused exclusively on the Choi operator of a superchannel. 
It is therefore natural to ask: what is the Kraus representation of a superchannel, what does its Stinespring dilation look like, and how should we define its Liouville superoperator? 
These questions complete the picture by extending the standard representations of quantum channels to the higher-order setting.

With these considerations in mind, we begin this subsection by examining the second representation of superchannels: the Kraus decomposition (see Sec.~\ref{subsec:Kraus}).
For a quantum channel $\mE:B_1\to A_2$ and a superchannel $\theta:A_1A_2\to B_1B_2$ (see Fig.~\ref{fig:Superchannel}), we assume that their Choi operators take the following spectral decomposition:
\begin{align}
    J^{\mE}_{B_1A_2}=\sum_{k}\tvec{(M_{k})}\tvec{(M_{k})}^{\dag},
\end{align}
and 
\begin{align}\label{eq:Superchannel_Choi_Spectral}
    J^{\theta}_{A_1A_2B_1B_2}=\sum_{i}\tvec{(N_{i})}\tvec{(N_{i})}^{\dag}.
\end{align}
Their corresponding tensor-networks can be drawn as
\begin{align}\label{TN:Rep_Channel}
    \raisebox{0ex}{\includegraphics[height=7em]{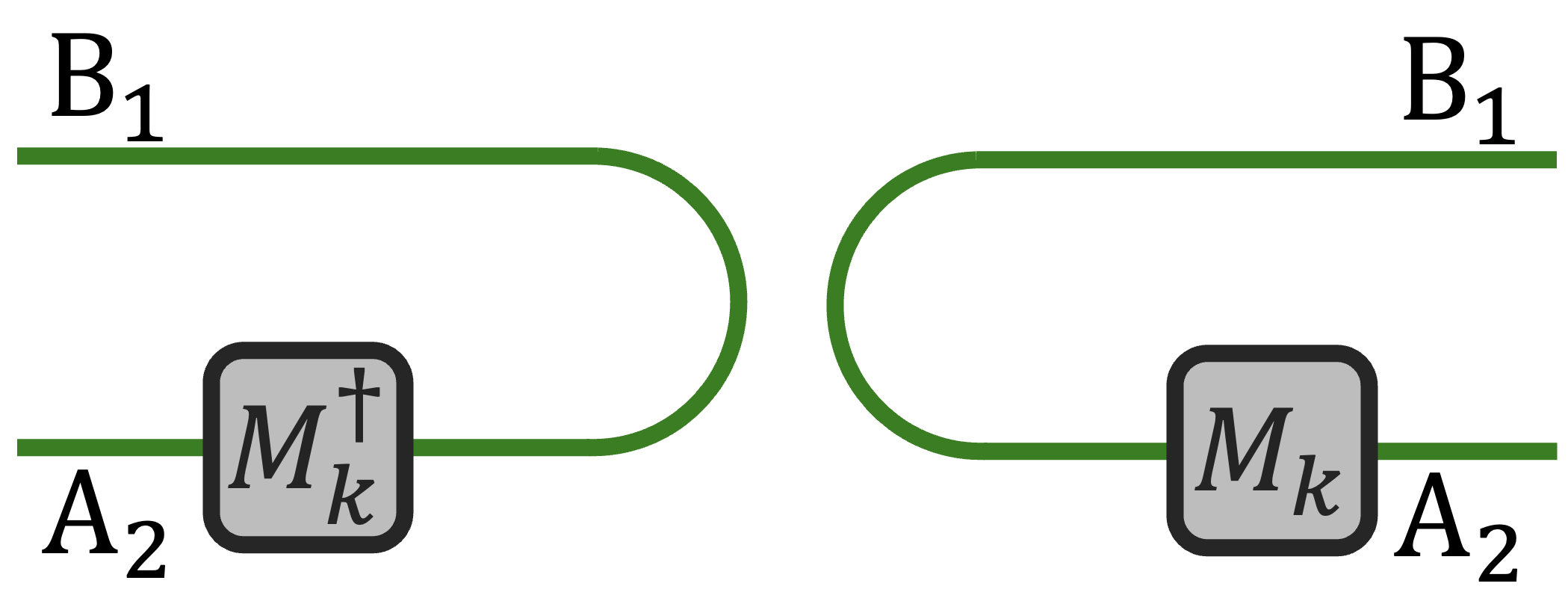}}
\end{align}
and
\begin{align}\label{TN:Rep_Superchannel}
    \raisebox{0ex}{\includegraphics[height=11.2em]{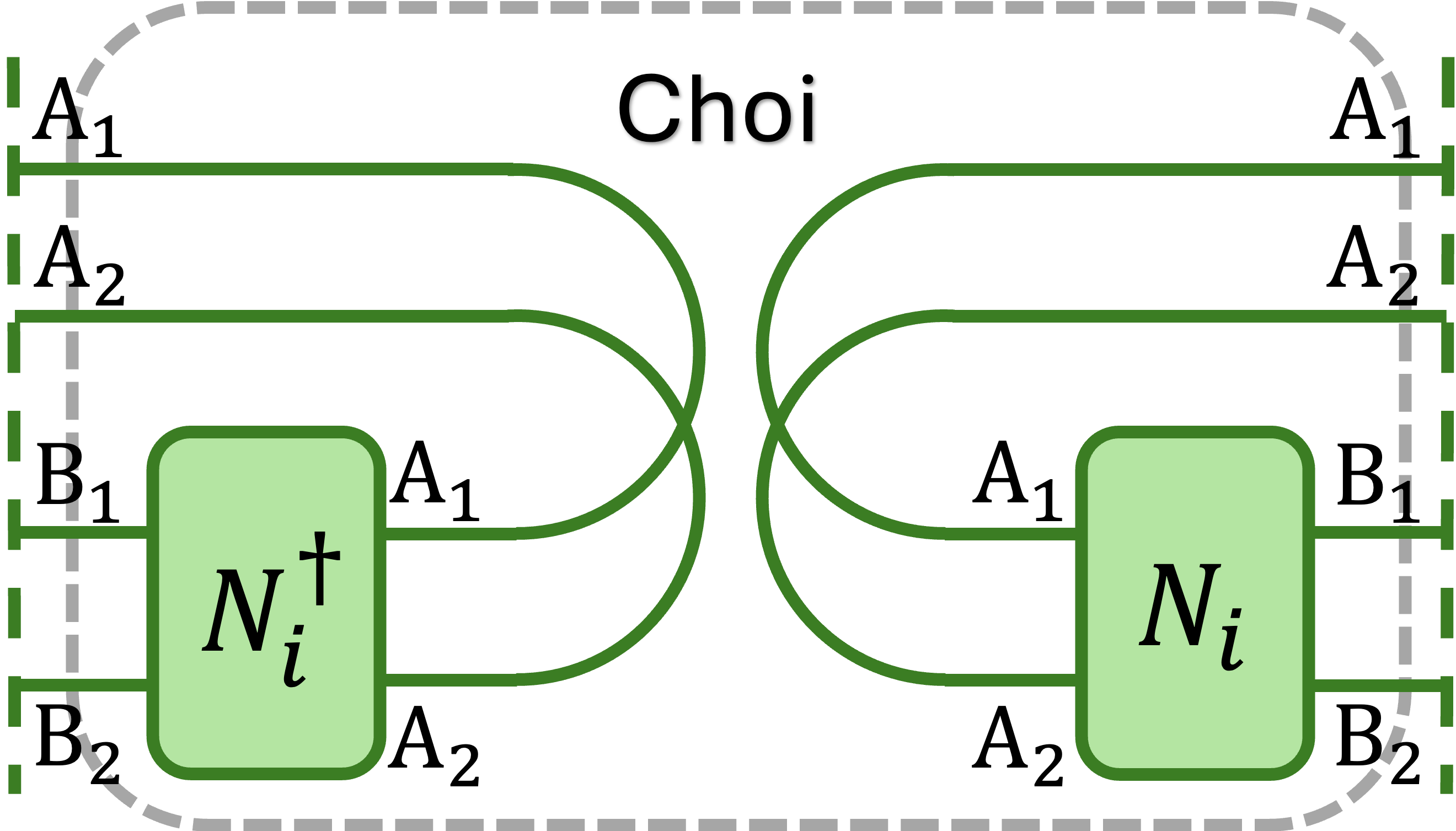}}
\end{align}
respectively. 

For the operators $\{N_i\}$, Eq.~\eqref{eq:Superchannel_Choi_CP} is fulfilled automatically whenever each $N_i$ appears in the tensor network paired symmetrically with its dual.
Eq.~\eqref{eq:Superchannel_Choi_TP}, the trace-preserving requirement, is equivalent to
\begin{align}\label{TN:Superchannel_Choi_TP}
    \raisebox{0ex}{\includegraphics[height=6.4em]{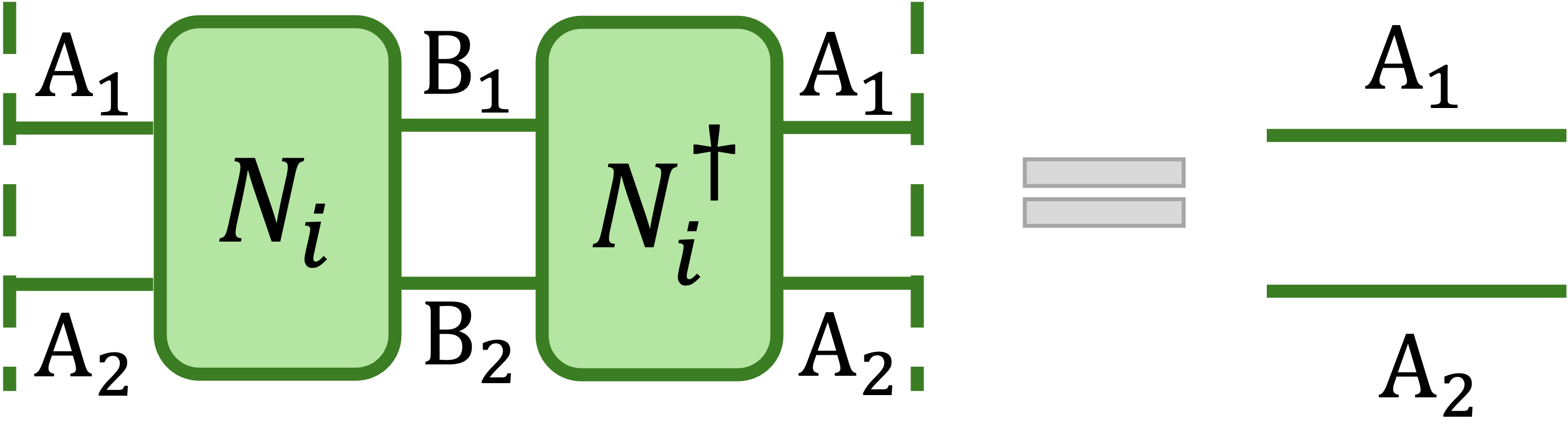}}
\end{align}
Finally, Eq.~\eqref{eq:Superchannel_Choi_NS} can be rewritten as
\begin{widetext}
\begin{align}\label{TN:Superchannel_Choi_NS}
    \raisebox{0ex}{\includegraphics[height=11em]{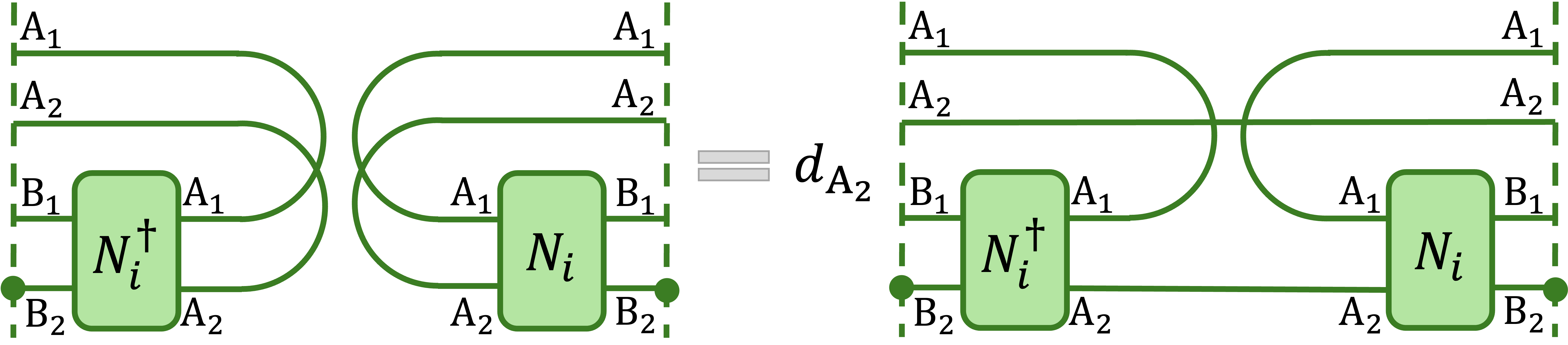}}
\end{align}
\end{widetext}
which encapsulates the NS from post-processing to pre-processing.

We follow the conventions established in the proof of the realization theorem (see Thm.~\ref{thm:RT}). 
Kraus operators of the channel $\mE$ appear in black, whereas those arising from the action of the superchannel $\theta$ are marked in green.
When specifying the systems, we place inputs above outputs, so input systems appear first from top to bottom. Under this ordering, the Choi operator $J^{\mE}$ of channel $\mE$ acts on $B_1A_2$, while the Choi operator $J^{\theta}$ of superchannel $\theta$ acts on $A_1A_2B_1B_2$.
Since both multiple inputs and multiple outputs are present, the time ordering matters, and the two Choi operators cannot be linked directly.

Under our convention that operators are identified up to permutations, we may first apply an appropriate SWAP to the Choi operator of $\mE$ to align the systems. Once this alignment is made, the link product (see Sec.~\ref{subsec:Link_Product}) can be taken, giving the resulting form of $J^{\theta(\mE)}=J^{\theta}\star J^{\mE}=\Tr_{B_1A_2}[J^{\theta}\cdot (J^{\mE})^{\T}]$:

\begin{widetext}

\begin{align}\label{TN:Theta_E_1}
    \raisebox{0ex}{\includegraphics[height=11em]{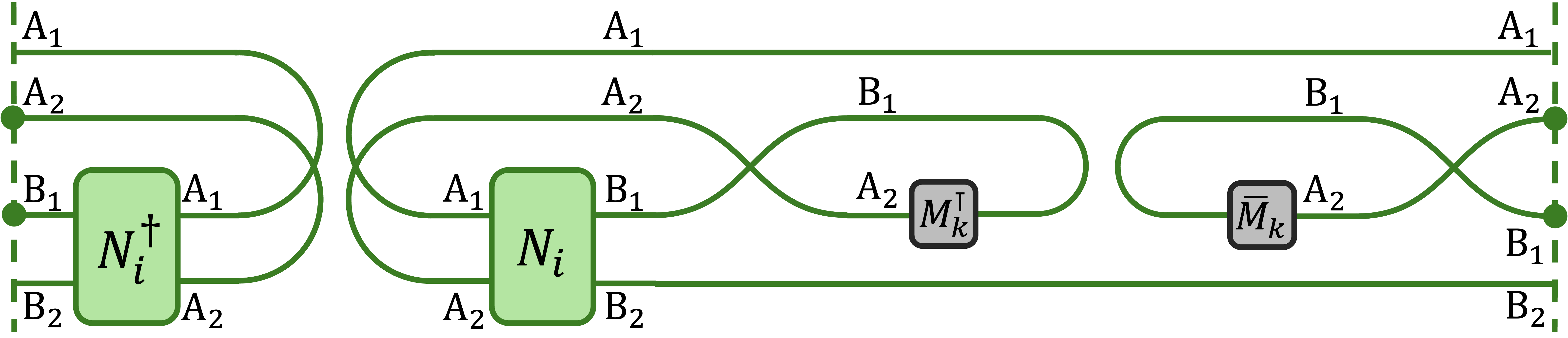}}
\end{align}
which is equivalent to
\begin{align}\label{TN:Theta_E_2}
    \raisebox{0ex}{\includegraphics[height=11em]{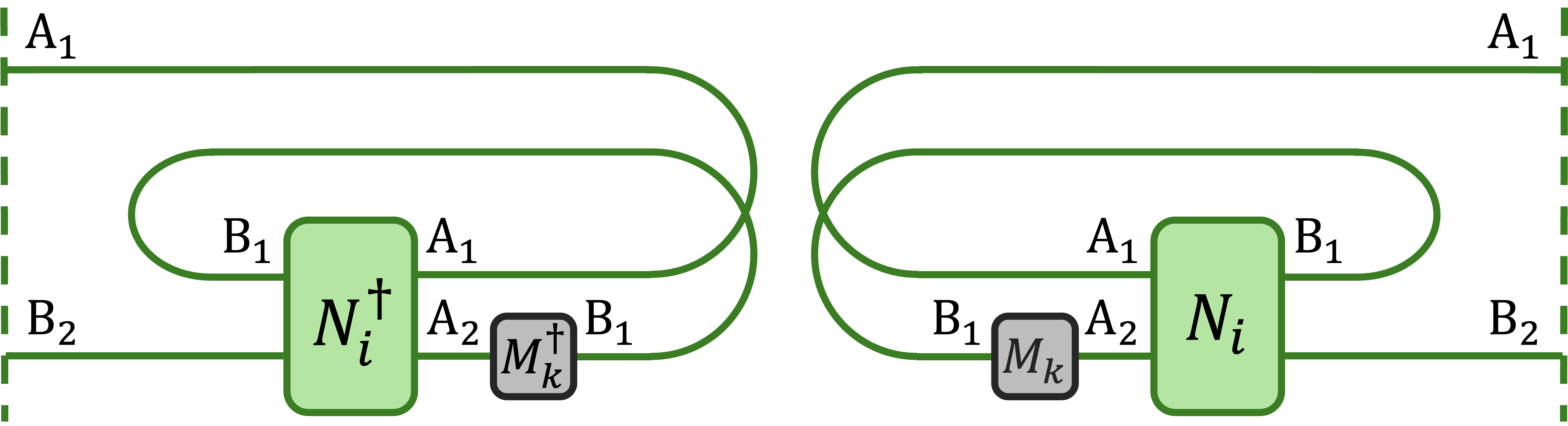}}
\end{align}

\end{widetext}

\noindent
Hence, by introducing the operators $Q_i$ as the partial matricization (see Eq.~\eqref{eq:Partial_Matricization}) over the subsystem $B_1$, defined as
\begin{align}\label{eq:Q_i}
    Q_i:= \tmat_{B_1}(N_i),
\end{align}
with the associated tensor-network
\begin{align}\label{TN:Q_i}
    \raisebox{0ex}{\includegraphics[height=11em]{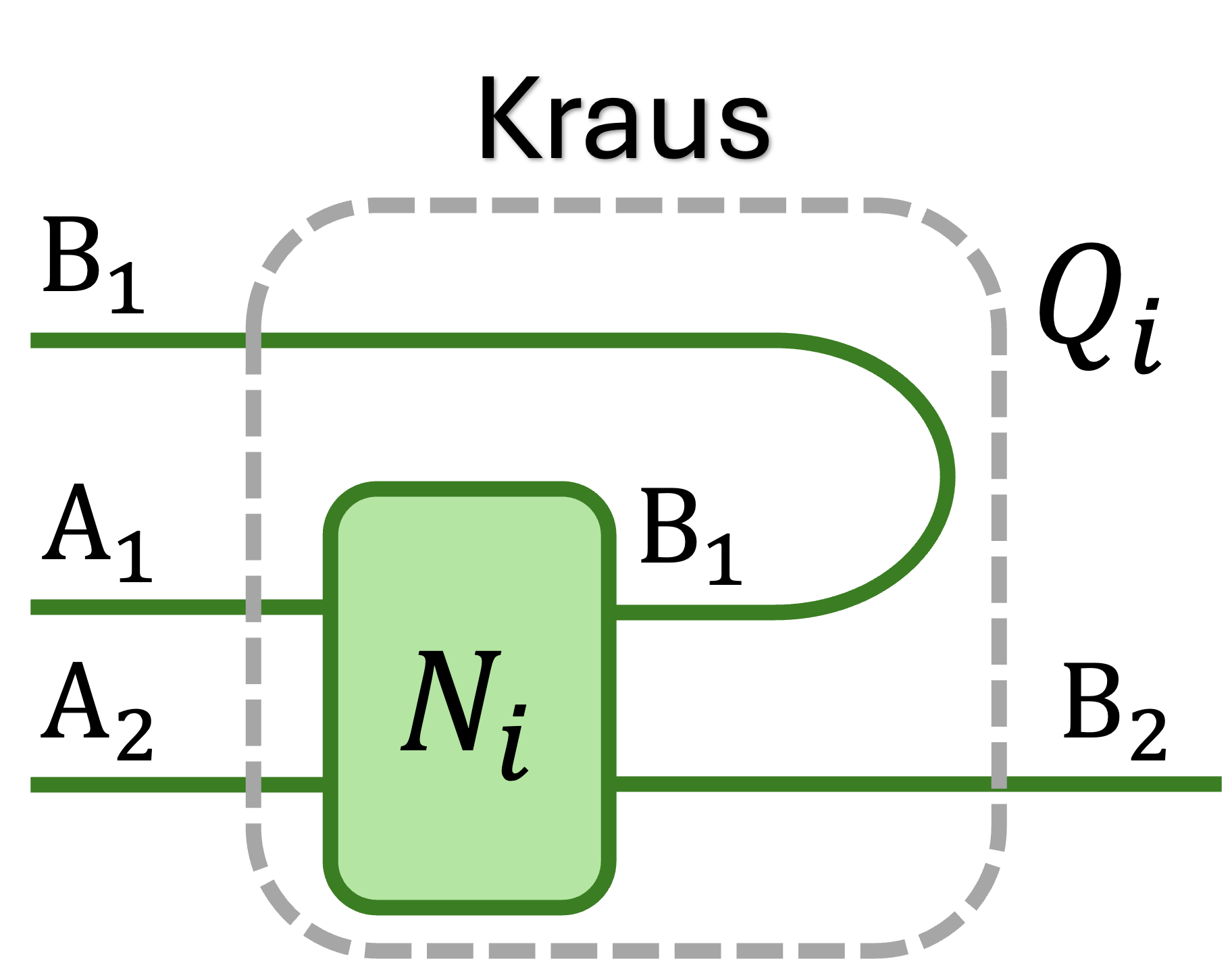}}
\end{align}
we obtain

\begin{thm}
[\bf{Superchannel Kraus Decomposition}]
\label{thm:Superchannel_Krasu}
A quantum channel $\mE:B_1\to A_2$ is transformed into a new channel $\theta(\mE):A_1\to B_2$ through the action of a superchannel $\theta:A_1A_2\to B_1B_2$ (see Fig.~\ref{fig:Superchannel}).
Once the Choi operator $J^{\theta}$ of $\theta$ is expressed through the spectral decomposition in Eq.~\eqref{eq:Superchannel_Choi_Spectral}, this transformation becomes fully explicit: the resulting channel takes the form
\begin{align}\label{eq:Superchannel_Kraus}
    \theta(\mE)
    =
    &\sum_{i}Q_i
    \left(\id_{B_1\to B_1}\otimes\mE_{B_1\to A_2}(\Gamma_{B_1B_1})\right)
    Q_i^{\dag}\notag\\
    =
    &\sum_{i}Q_i J^{\mE} Q_i^{\dag},
\end{align}
where $Q_i$ is defined in Eq.~\eqref{eq:Q_i}, and its structure is illustrated in Eq.~\eqref{TN:Q_i}.
\end{thm}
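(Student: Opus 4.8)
The plan is to read off the Kraus form directly from the link-product composition $J^{\theta(\mE)}=J^{\theta}\star J^{\mE}$ (see Eq.~\eqref{eq:Channels_Composition} and Sec.~\ref{subsec:Link_Product}), which by Eq.~\eqref{eq:Link_Product} reads $J^{\theta(\mE)}=\Tr_{B_1A_2}[J^{\theta}\cdot(J^{\mE})^{\T}]$ once a SWAP has been applied to $J^{\mE}$ so that its systems $B_1A_2$ align with the corresponding marginals of $J^{\theta}$. The systems contracted by the link product are precisely $B_1$ and $A_2$, while the surviving pair $A_1B_2$ carries the Choi operator of $\theta(\mE):A_1\to B_2$.

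First I would substitute the spectral decomposition $J^{\theta}=\sum_i\tvec(N_i)\tvec(N_i)^{\dag}$ from Eq.~\eqref{eq:Superchannel_Choi_Spectral}, giving $J^{\theta(\mE)}=\sum_i\Tr_{B_1A_2}[\tvec(N_i)\tvec(N_i)^{\dag}\,(J^{\mE})^{\T}]$, where the transpose acts on $B_1A_2$, the systems on which $J^{\mE}$ is entirely supported. The core of the argument is then a single identity: for any vector $\tvec(N)$ and any operator $\rho$ supported on the contracted system $C$, one has $\Tr_C[\tvec(N)\tvec(N)^{\dag}\,\rho^{\T_C}]=Q\,\rho\,Q^{\dag}$ with $Q:=\tmat_C(\tvec(N))$. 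This is nothing but the transpose rule of Eq.~\eqref{eq:T} combined with the matricization of Eq.~\eqref{eq:Matricization}: expanding $\tvec(N)$ in the computational basis of $C$ and using $\langle c'|\rho^{\T}|c\rangle=\langle c|\rho|c'\rangle$ collapses the partial trace into the sandwich $Q\rho Q^{\dag}$.

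Applying this with $C=B_1A_2$, $\rho=J^{\mE}$, and $N=N_i$ yields $J^{\theta(\mE)}=\sum_i Q_i\,J^{\mE}\,Q_i^{\dag}$ with $Q_i=\tmat_{B_1A_2}(\tvec(N_i))$. It then remains to recognize this $Q_i$ as the partial matricization $\tmat_{B_1}(N_i)$ of Eq.~\eqref{eq:Q_i}. Here I would observe that matricizing over the reference system $A_2$ merely reverses the vectorization that created it, restoring $A_2$ as a genuine input leg of $N_i$, so that only the matricization over the output leg $B_1$ is nontrivial, while the remaining reference leg $A_1$ stays on the output side. The net leg assignment is $Q_i:B_1A_2\to A_1B_2$, exactly as drawn in Eq.~\eqref{TN:Q_i}. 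Finally, since $J^{\mE}=\id_{B_1\to B_1}\otimes\mE_{B_1\to A_2}(\Gamma_{B_1B_1})$ by Eq.~\eqref{eq:Choi}, the first equality in Eq.~\eqref{eq:Superchannel_Kraus} follows at once, and the two displayed forms coincide.

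I expect the main obstacle to be bookkeeping rather than anything conceptual: one must check that the matricization over the two shared legs $B_1$ and $A_2$ reproduces $Q_i$ with the correct input/output designation and that the partial transpose of the link product is absorbed consistently. The diagrammatic route of Eqs.~\eqref{TN:Theta_E_1}--\eqref{TN:Theta_E_2} sidesteps this by deforming the network with the snake equation (see Sec.~\ref{subsec:Id&Tr}), bending the $B_1$ and $A_2$ legs of each $\tvec(N_i)$ around $J^{\mE}$ so that the pair $(\tvec(N_i),\tvec(N_i)^{\dag})$ reassembles into $(Q_i,Q_i^{\dag})$; I would present that deformation as the primary argument and invoke the algebraic identity above as its justification.
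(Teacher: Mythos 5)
Your overall strategy---compute $J^{\theta(\mE)}=J^{\theta}\star J^{\mE}=\Tr_{B_1A_2}[J^{\theta}\cdot(J^{\mE})^{\T}]$ after aligning systems, insert the spectral decomposition of $J^{\theta}$, and collapse the partial trace into a sandwich---is exactly the paper's route: Eqs.~\eqref{TN:Theta_E_1}--\eqref{TN:Theta_E_2} are the diagrammatic form of your algebraic step, and your core identity $\Tr_C[\tvec(N)\tvec(N)^{\dag}\,\rho^{\T_C}]=\tmat_C(\tvec(N))\,\rho\,\tmat_C(\tvec(N))^{\dag}$ is correct.

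However, the final identification contains a genuine error. The operator your identity produces is $\tmat_{B_1A_2}(\tvec(N_i))$, and since matricizing over $A_2$ merely undoes the vectorization that created that reference leg, this operator equals $\tmat_{B_1}(\tvec_{A_1}(N_i))$---which is the paper's $K_i$ of Eq.~\eqref{eq:K_N}, \emph{not} its $Q_i=\tmat_{B_1}(N_i)$ of Eq.~\eqref{eq:Q_i}. The two differ by a partial vectorization over $A_1$, namely $K_i=\tvec_{A_1}(Q_i)$, and your own wording betrays the mismatch: you say ``the remaining reference leg $A_1$ stays on the output side,'' but in $\tmat_{B_1}(N_i)$ the leg $A_1$ is never moved at all---it remains a genuine input, so $Q_i:B_1A_1A_2\to B_2$, whereas your operator maps $B_1A_2\to A_1B_2$. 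Consequently, what you have actually proven is Lem.~\ref{lem:Kraus_f_theta}, i.e.\ the Kraus decomposition $J^{\theta(\mE)}=\sum_iK_iJ^{\mE}K_i^{\dag}$ of the Choi-to-Choi map $f_{\theta}$ in Eq.~\eqref{eq:f_theta_Kraus}, not Thm.~\ref{thm:Superchannel_Krasu}, which expresses the channel $\theta(\mE)$ itself, $\rho\mapsto\sum_iQ_i(\rho_{A_1}\otimes J^{\mE})Q_i^{\dag}$, with the $A_1$ legs kept open as inputs. The gap is small but real: to close it, apply your trace identity once more, this time over $A_1$, using $\theta(\mE)(\rho)=\Tr_{A_1}[J^{\theta(\mE)}\,(\rho^{\T}\otimes\1_{B_2})]$ together with $K_i=\tvec_{A_1}(Q_i)$; this bends the $A_1$ leg back to the input side, converting each $K_i$ into $Q_i$ and the Choi operator $J^{\theta(\mE)}$ into the channel action, which is precisely the statement of the theorem.
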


Equation~\eqref{eq:Superchannel_Kraus} generalizes the Kraus decomposition introduced in Sec.~\ref{subsec:Kraus} from quantum channels to the superchannel setting.
This becomes evident by examining the special case in which the channel $\mE$ in Fig.~\ref{fig:Superchannel} is replaced by a state preparation channel.
In this case, the input system of $\mE$ is trivial, formally, $B_1=\mathbb{C}$, and the Choi operator of $\mE$ reduces to the corresponding density operator, i.e., 
\begin{align}
    J^{\mE}_{B_1A_2}=\rho_{A_2}.
\end{align}
At the same time, the pre-processing stage of the superchannel becomes trivial as $A_1=B_1=\mathbb{C}$, leaving only the post-processing component. 
This remaining part is simply a quantum channel, whose Kraus operators will be denoted by $\{G_i\}$.
In this setting, we have $Q_i=G_i$, and 
\begin{align}
    J^{\theta}_{A_1A_2B_1B_2}=
    J^{\theta}_{A_2B_2}=
    \sum_{i}\tvec{(G_{i})}\tvec{(G_{i})}^{\dag}.
\end{align}
Meanwhile, the action of $\theta(\mE)$ collapses to the expression in Eq.~\eqref{eq:Kraus}, namely
\begin{align}
    \theta(\mE)=\sum_i G_i\rho G_i^{\dag}.
\end{align}
Hence, Eq.~\eqref{eq:Superchannel_Kraus} recovers the standard Kraus representation when the input is a state preparation channel, and thereby establishes the Kraus decomposition for general quantum superchannels.

It is worthwhile to examine the structural properties of the operators $Q_i$ (see Eq.~\eqref{TN:Q_i}), especially whether they fulfill the completeness relation
\begin{align}
    \sum_i Q_i^{\dag}Q_i=\1,
\end{align}
mirroring the Kraus operators of a channel.
However, they fail to satisfy this identity in general; 
instead, they satisfy a modified relation. 
By taking the partial trace over the subsystem $B_1$, the expression $\Tr_{B_1}[\sum_i Q_i^{\dag}Q_i]$ reduces to

\begin{widetext}
\begin{align}\label{TN:Theta_E_Q_i}
    \raisebox{0ex}{\includegraphics[height=10em]{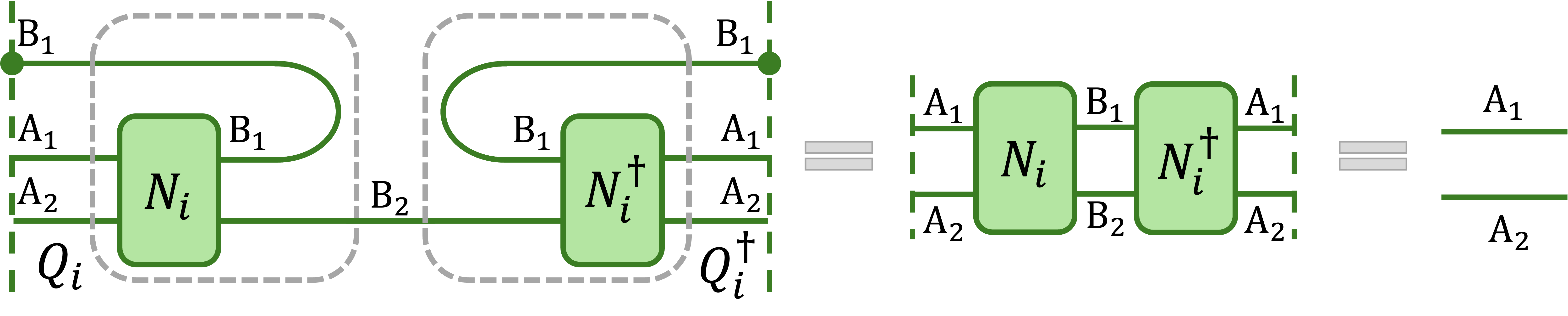}}
\end{align}
\end{widetext}

\noindent
which is equivalent to saying
\begin{align}\label{eq:Theta_E_Q_i}
    \Tr_{B_1}\left[\sum_i Q_i^{\dag}Q_i\right]=\1_{A_1A_2}.
\end{align}

Let $\mQ$ denote the completely positive map generated by the collection $\{Q_i\}$; that is
\begin{align}
    \mQ(\cdot)
    :=
    \sum_iQ_i\cdot Q_i^{\dag}.
\end{align}
With this notation, the transformed channel $\theta(\mE)$ becomes
\begin{align}\label{TN:Theta_E}
    \raisebox{0ex}{\includegraphics[height=9em]{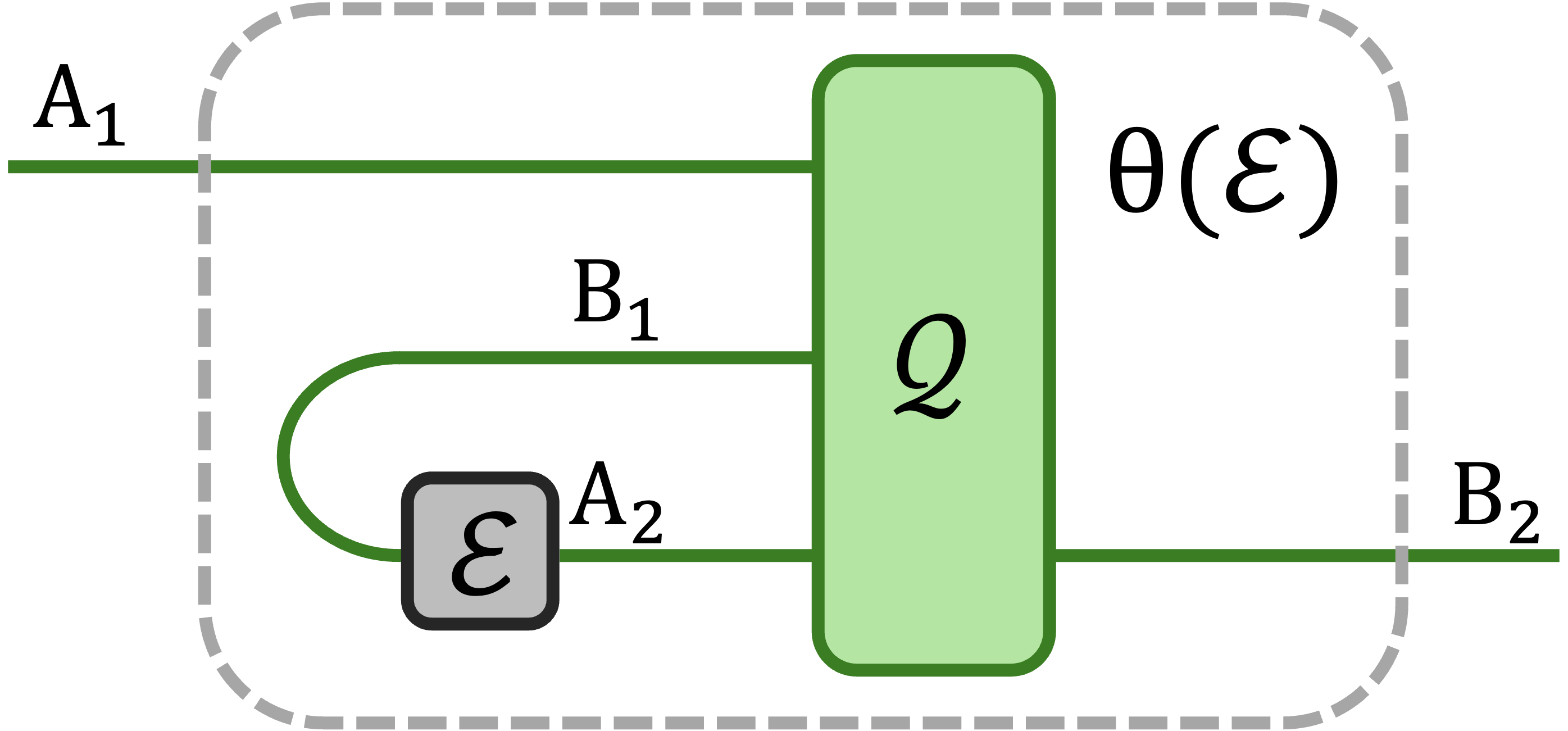}}
\end{align}
Owing to the symmetry of the construction, only half of the tensor-network diagram is shown.
Here, the ordering of the subsystems, especially $A_1$ and $B_1$, has been adjusted to render the diagram cleaner and more compact.
In tensor-network representations, only matching subsystems can be connected, and for this reason we freely use the forms of $Q_i$ shown in Eqs.~\eqref{TN:Q_i} and~\eqref{TN:Theta_E} interchangeably. 
This notational flexibility will not lead to ambiguity.

We next turn to the third representation of quantum superchannels: the Stinespring dilation (see Sec.~\ref{subsec:Stinespring}).
Building on the Kraus operators $Q_i$ from Eq.~\eqref{TN:Q_i}, we define the operator $V$ associated with the superchannel $\theta$ as
\begin{align}\label{TN:Superchannel_Stinespring_V}
    \raisebox{0ex}{\includegraphics[height=14em]{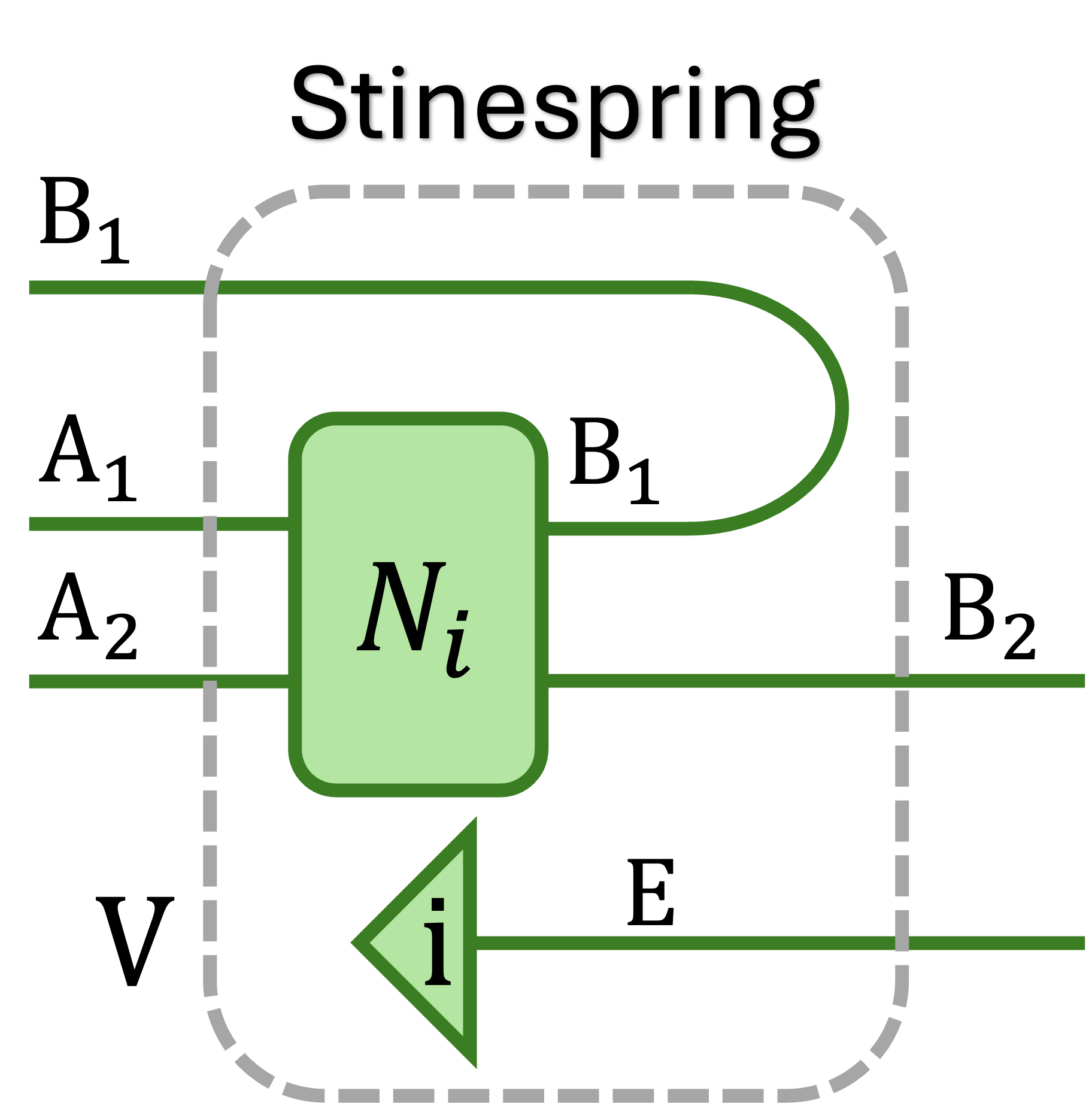}}
\end{align}
A direct calculation confirms that

\begin{thm}
[\bf{Superchannel Stinespring Dilation}]
\label{thm:Superchannel_Stinespring}
A superchannel $\theta:A_1A_2\to B_1B_2$ maps any channel $\mE:B_1\to A_2$ to a new channel $\theta(\mE):A_1\to B_2$ (see Fig.~\ref{fig:Superchannel}).
If the Choi operator $J^{\theta}$ has the spectral form of Eq.~\eqref{eq:Superchannel_Choi_Spectral}, then the output channel $\theta(\mE)$ reads
\begin{align}\label{eq:Superchannel_Stinespring}
    \theta(\mE)= \Tr_{E}[VJ^{\mE}V^{\dag}].
\end{align}
where the operator $V$ is the one introduced in Eq.~\eqref{TN:Superchannel_Stinespring_V}.
\end{thm}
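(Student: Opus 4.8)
The plan is to obtain the Stinespring dilation directly from the Kraus decomposition established in Thm.~\ref{thm:Superchannel_Krasu}, replaying at the level of superchannels the passage from Eq.~\eqref{eq:Kraus} to Eq.~\eqref{eq:Stinespring} that underlies the channel case in Sec.~\ref{subsec:Stinespring}. First I would recall that, by Eq.~\eqref{eq:Superchannel_Kraus}, the output object satisfies $\theta(\mE)=\sum_i Q_i J^{\mE} Q_i^{\dag}$, so the entire content of the theorem reduces to showing that the single operator $V$ of Eq.~\eqref{TN:Superchannel_Stinespring_V} bundles the family $\{Q_i\}$ into a coherent dilation whose environment trace reproduces this sum.

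Concretely, I would read off from the diagram in Eq.~\eqref{TN:Superchannel_Stinespring_V} that $V=\sum_i Q_i\otimes\ket{i}_E$, where $\{\ket{i}_E\}$ is an orthonormal basis of the environment $E$ labeling the spectral index of $J^{\theta}$ (the dangling environment wire). The key computation is then immediate: $VJ^{\mE}V^{\dag}=\sum_{i,j}Q_i J^{\mE}Q_j^{\dag}\otimes\ketbra{i}{j}_E$, and taking the partial trace over $E$ with $\Tr_E[\ketbra{i}{j}_E]=\braket{j}{i}=\delta_{ij}$ leaves exactly $\sum_i Q_i J^{\mE}Q_i^{\dag}=\theta(\mE)$, which is Eq.~\eqref{eq:Superchannel_Stinespring}. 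In the tensor-network picture this amounts to nothing more than joining the two environment legs of $V$ and $V^{\dag}$ into a cap, i.e.\ invoking the trace rule of Sec.~\ref{subsec:Id&Tr}.

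The step I expect to require the most care is verifying that $V$ genuinely deserves to be called a Stinespring \emph{isometry}, rather than a mere bookkeeping device. Because the operators $Q_i$ do not obey the ordinary completeness relation $\sum_i Q_i^{\dag}Q_i=\1$, but only the modified identity of Eq.~\eqref{eq:Theta_E_Q_i}, namely $\Tr_{B_1}[\sum_i Q_i^{\dag}Q_i]=\1_{A_1A_2}$, the naive product $V^{\dag}V$ will not collapse to the identity on the full input space in the usual manner. I would therefore trace through the diagram of Eq.~\eqref{TN:Superchannel_Stinespring_V} to confirm that the $B_1$-leg of $V$ is contracted precisely as dictated by the partial matricization $Q_i=\tmat_{B_1}(N_i)$ of Eq.~\eqref{eq:Q_i}, so that the residual trace over $B_1$ appearing in Eq.~\eqref{eq:Theta_E_Q_i} is exactly what upgrades the modified completeness relation into the correct isometry condition for $V$ on the relevant subspace. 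Once this normalization bookkeeping is settled, the dilation formula Eq.~\eqref{eq:Superchannel_Stinespring} follows from the short computation above, completing the proof.
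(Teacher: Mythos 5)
Your first two paragraphs reproduce the paper's own proof exactly: the paper defines $V$ in Eq.~\eqref{TN:Superchannel_Stinespring_V} by bundling the Kraus operators $Q_i$ of Thm.~\ref{thm:Superchannel_Krasu} with an environment index, $V=\sum_i Q_i\otimes\ket{i}_E$, and Eq.~\eqref{eq:Superchannel_Stinespring} then follows from the same one-line computation you give,
\begin{align}
    \Tr_{E}[VJ^{\mE}V^{\dag}]
    =\sum_{i,j}Q_i J^{\mE} Q_j^{\dag}\,\braket{j}{i}
    =\sum_{i}Q_i J^{\mE} Q_i^{\dag}
    =\theta(\mE),
\end{align}
which is precisely the ``direct calculation'' the paper invokes, together with Eq.~\eqref{eq:Superchannel_Kraus}. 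Since the theorem claims only this trace formula, your proof is already complete at the end of your second paragraph.

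Your third paragraph, however, sets out to prove something that is false, and you should drop it. The operator $V$ is \emph{not} a Stinespring isometry, not even ``on the relevant subspace'': the paper states explicitly that outside the degenerate case $A_1=B_1=\mathbb{C}$ (where $V$ collapses to the channel isometry of Eq.~\eqref{TN:Stinespring}), $V$ only satisfies the relaxed normalization $\Tr_{B_1}[V^{\dag}V]=\1_{A_1A_2}$, obtained by inserting Eq.~\eqref{TN:Superchannel_Stinespring_V} into Eq.~\eqref{TN:Theta_E_Q_i}. There is no contraction of the $B_1$ leg that ``upgrades'' this into an isometry condition; $V^{\dag}V=\sum_i Q_i^{\dag}Q_i$ genuinely differs from the identity in general, and the trace over $B_1$ cannot be removed. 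Fortunately, nothing in the statement requires isometry -- the theorem calls $V$ an operator, not an isometry, and the dilation formula needs only the Kraus decomposition. Replace the goal of that paragraph with verifying the relaxed relation $\Tr_{B_1}[V^{\dag}V]=\1_{A_1A_2}$ (as the paper does, as a remark after the theorem rather than as part of its proof), and your argument coincides with the paper's in full.
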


In this form, the operator $V$ plays the role analogous to Eq.~\eqref{TN:Stinespring} in the Stinespring dilation of quantum channels, extending that construction to the superchannel setting.
The correspondence becomes transparent in the special case where $\mE$ acts as a state preparation channel: the superchannel collapses to a quantum channel, the pre-processing stage is trivialized with $A_1=B_1=\mathbb{C}$, and $V$ reduces precisely to the isometry introduced in Eq.~\eqref{TN:Stinespring}.
Outside this degenerate scenario, however, $V$ is no longer constrained to be an isometry.
Rather, it satisfies a relaxed normalization condition: upon tracing out the subsystem $B_1$, one obtains
\begin{align}
    \Tr_{B_1}[V^{\dag}V]=\1_{A_1A_2},
\end{align}
a relation obtained directly by inserting Eq.~\eqref{TN:Superchannel_Stinespring_V} into Eq.~\eqref{TN:Theta_E_Q_i}.

As a preparatory step toward the final representation, we make explicit the relationship between $J^{\theta}$ and $f_{\theta}$.
This is achieved by introducing a family of operators $K_i$, obtained by vectorizing $N_i$ over subsystem $A_1$ and subsequently matricizing over subsystem $B_1$.
Since these transformations act on disjoint subsystems, the order in which they are applied is irrelevant.
Mathematically, $K_i$ is given by
\begin{align}\label{eq:K_N}
    K_i:= \tmat_{B_1}(\tvec_{A_1}(N_i)),
\end{align}
with the following tensor-network diagram
\begin{align}\label{TN:K_i}
    \raisebox{0ex}{\includegraphics[height=12em]{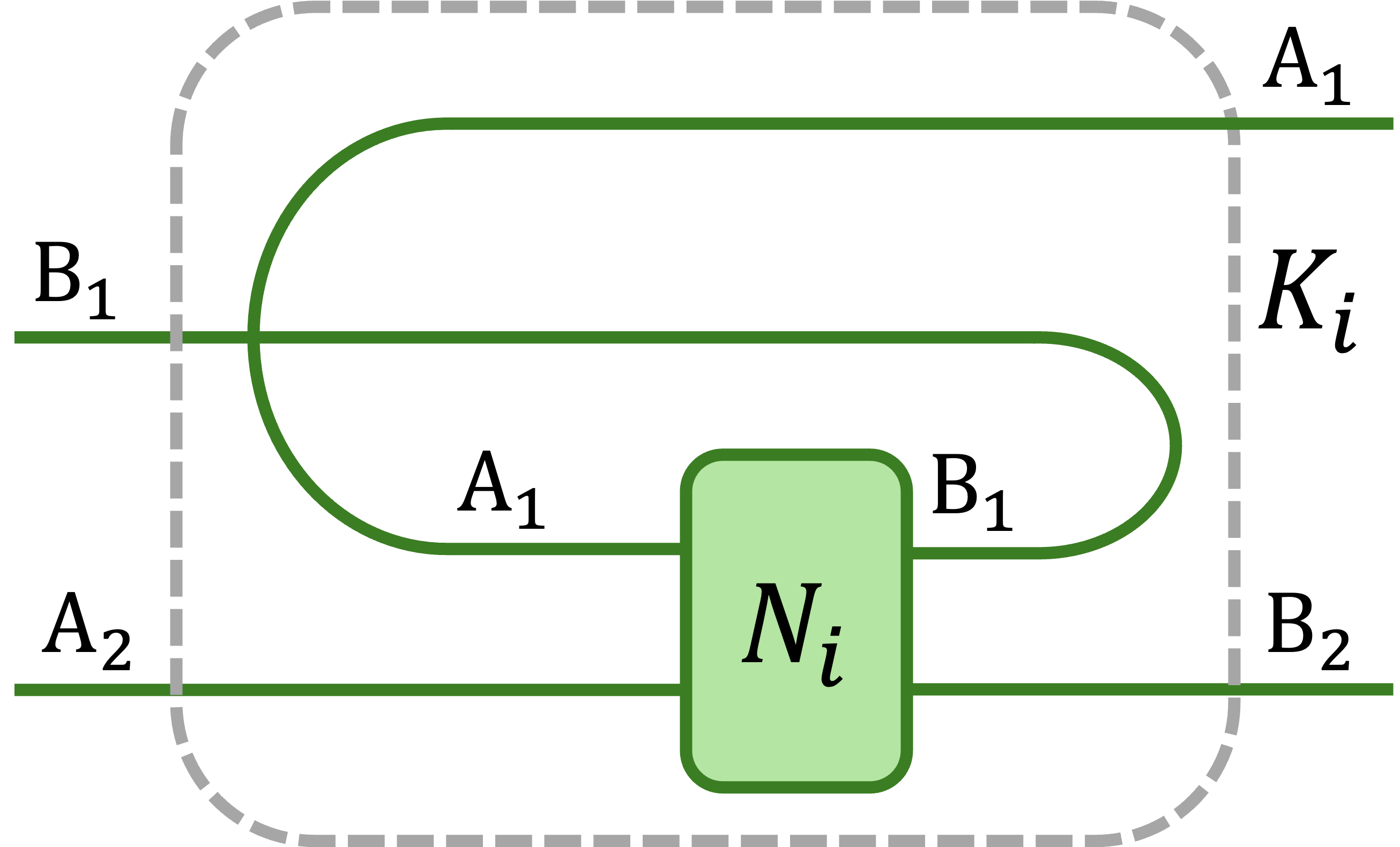}}
\end{align}

With the operators $K_i$ now at hand, we return to the central task posed at the beginning of this subsection: 
how to extract an explicit expression for the map $f_{\theta}$ directly from the Choi operator of the superchannel $\theta$.
In this regard, Eq.~\eqref{TN:Theta_E_2} shows that
\begin{align}\label{eq:f_theta_Kraus}
    f_{\theta}(J^{\mE})
    = J^{\theta(\mE)}
    = \sum_i K_i J^{\mE} K_i^{\dag}.
\end{align}
The statement is formalized in the lemma below, which also lays the groundwork for introducing the Liouville superoperator associated with superchannel $\theta$.

\begin{lem}
[\bf{Kraus Decomposition of $f_{\theta}$}]
\label{lem:Kraus_f_theta}
    For any superchannel $\theta$, there exists a completely positive map $f_{\theta}$ that takes the Choi operator of a channel $\mE$ to that of the output channel $\theta(\mE)$ (see Fig.~\ref{fig:Commutative_Diagram}).
    Let $\{K_i\}$ be a Kraus representation of $f_{\theta}$, so that $f_{\theta}(\cdot)=\sum_iK_i\cdot K_i^{\dag}$.
    Meanwhile, denote by $J^{\theta}$ the Choi operator of $\theta$, with spectral decomposition given in Eq.~\eqref{eq:Superchannel_Choi_Spectral}.
    As shown in Eqs.~\eqref{eq:K_N} and~\eqref{TN:K_i}, the operators $N_i$ arising from this spectral decomposition are in one-to-one correspondence with the Kraus operators $K_i$ of $f_{\theta}$, thereby linking the intrinsic structure of $J^{\theta}$ to the operational action of $f_{\theta}$.
\end{lem}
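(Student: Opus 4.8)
The plan is to derive the Kraus form of $f_\theta$ directly from the spectral decomposition of $J^\theta$ and then verify that the assignment $N_i \mapsto K_i$ is bijective. Since $\theta$ is completely positive, its induced map $f_\theta$ is completely positive as well (as recorded in the proof of Thm.~\ref{thm:RT}), so $f_\theta$ certainly admits \emph{some} Kraus representation; the substance of the lemma is that the particular operators $K_i$ built from the spectral eigenvectors $N_i$ of $J^\theta$ furnish one such representation, and that this construction loses no information, i.e.\ it is invertible.

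First I would start from the defining relation $f_\theta(J^\mE)=J^{\theta(\mE)}=J^\theta\star J^\mE$ and expand the link product as $\Tr_{B_1A_2}[J^\theta\cdot(J^\mE)^{\T_{B_1A_2}}]$, following the convention of Sec.~\ref{subsec:Link_Product}. Substituting the spectral decomposition $J^\theta=\sum_i\tvec(N_i)\tvec(N_i)^\dagger$ from Eq.~\eqref{eq:Superchannel_Choi_Spectral} yields a sum over $i$ of terms built from $\tvec(N_i)$, $\tvec(N_i)^\dagger$, and $(J^\mE)^{\T_{B_1A_2}}$. The goal is to absorb the partial transpose into the entangled arcs using the basic identity of Eq.~\eqref{eq:T}, thereby converting the contraction over $B_1A_2$ into a sandwich of the form $K_i J^\mE K_i^\dagger$. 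This is exactly the diagrammatic manipulation already carried out in passing from Eq.~\eqref{TN:Theta_E_1} to Eq.~\eqref{TN:Theta_E_2}, which I would invoke to obtain Eq.~\eqref{eq:f_theta_Kraus}.

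Second, I would read off the operator $K_i$ that emerges. The vectorization over $A_1$ in the definition $K_i=\tmat_{B_1}(\tvec_{A_1}(N_i))$ bends the $A_1$ input leg of $N_i$ into an output, matching the factor that survives the $B_1A_2$ contraction, while the subsequent matricization over $B_1$ bends the $B_1$ leg back into an input. The net effect is that $K_i$ is a well-typed operator from $B_1A_2$ to $A_1B_2$, precisely the domain and codomain required for $\sum_i K_i J^\mE K_i^\dagger$ to be an operator on $A_1B_2$. This identifies $\{K_i\}$ as a Kraus representation of $f_\theta$.

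Finally, to establish the one-to-one correspondence I would appeal to the bijectivity of partial vectorization and partial matricization recorded in Sec.~\ref{subsec:Vec}. Because $\tvec_{A_1}$ and $\tmat_{B_1}$ act on disjoint subsystems they commute, and each is individually invertible, so the composite $N_i\mapsto K_i$ is a bijection whose inverse matricizes over $A_1$ and vectorizes over $B_1$. The hard part will be the careful bookkeeping of the first step -- tracking how the two partial transposes on $B_1$ and $A_2$ interact with the maximally entangled arcs and confirming that the surviving legs reassemble into $K_i$ with the correct orientation -- but this is exactly what the diagrammatic identities of Sec.~\ref{subsec:Vec} and the established derivation of Eq.~\eqref{TN:Theta_E_2} are designed to handle, so the remaining work is largely a matter of recording the correspondence cleanly.
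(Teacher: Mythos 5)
Your proposal is correct and follows essentially the same route as the paper: both derive Eq.~\eqref{eq:f_theta_Kraus} by substituting the spectral decomposition of $J^{\theta}$ into the link product $J^{\theta}\star J^{\mE}$ and invoking the diagrammatic passage from Eq.~\eqref{TN:Theta_E_1} to Eq.~\eqref{TN:Theta_E_2}, then identify $K_i=\tmat_{B_1}(\tvec_{A_1}(N_i))$ and obtain the one-to-one correspondence from the bijectivity of partial vectorization and matricization on disjoint subsystems. Your explicit check that $K_i$ is well-typed as a map from $B_1A_2$ to $A_1B_2$ is a nice touch that the paper leaves implicit.
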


As a by-product, Lem.~\ref{lem:Kraus_f_theta} may be combined with Cor.~\ref{cor:MC} to obtain the memory cost of simulating the superchannel $\theta$ directly in terms of the operators $N_i$ arising from the spectral decomposition of $J^{\theta}$ (see Eq.~\eqref{eq:Superchannel_Choi_Spectral}), rather than the Kraus operators $K_i$ appearing in the decomposition of $f_{\theta}$ (see Eq.~\eqref{eq:f_theta_Kraus}), namely

\begin{cor}
[\bf{Memory Cost}]
\label{cor:MC_2}
For a superchannel $\theta$ whose spectral decomposition is provided in Eq.~\eqref{eq:Superchannel_Choi_Spectral}, the associated minimal memory cost $d_{\theta}$ is characterized by
\begin{align}\label{TN:Memory_Cost_2}
    \raisebox{0ex}{\includegraphics[height=9.2em]{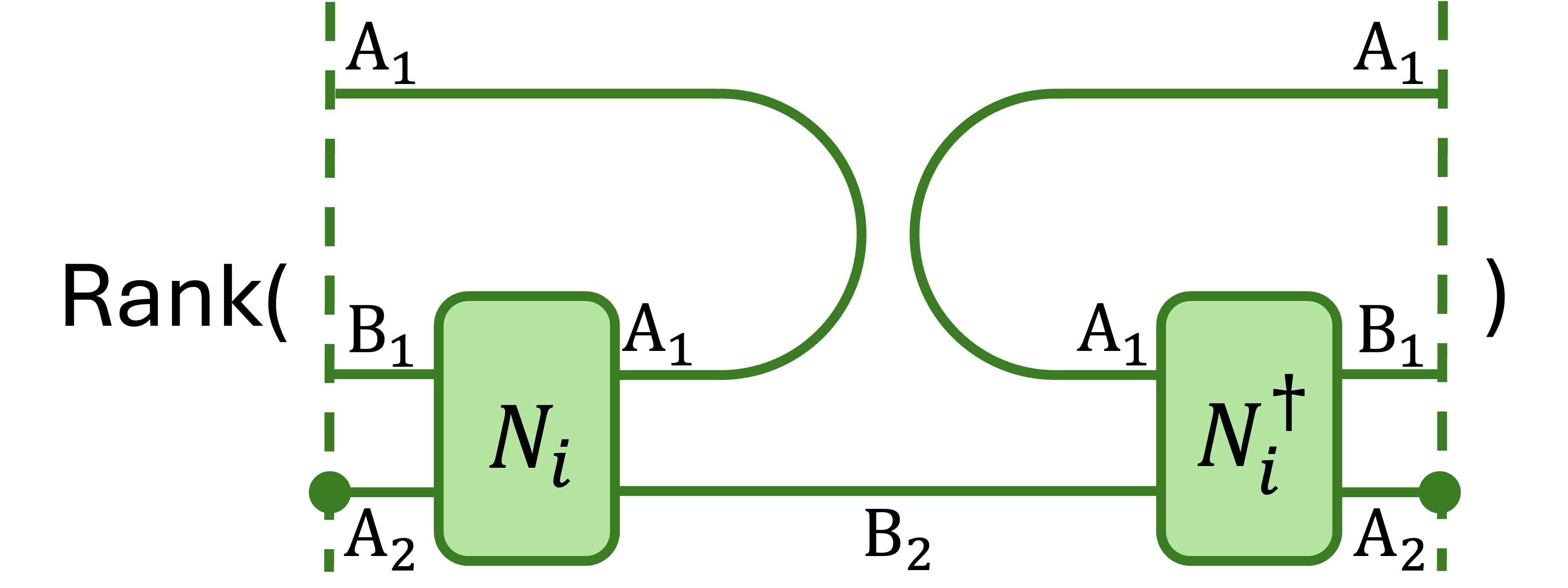}}
\end{align}
\end{cor}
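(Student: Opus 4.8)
The plan is to bootstrap off the memory-cost formula already established in Cor.~\ref{cor:MC}, which expresses $d_{\theta}$ as $\rank(\Tr_{A_2B_2}[J^{f_{\theta}^{\dag}}])$, and then translate the operational data of $f_{\theta}$ into the spectral data $\{N_i\}$ of $J^{\theta}$ using the dictionary supplied by Lem.~\ref{lem:Kraus_f_theta}. Concretely, Lem.~\ref{lem:Kraus_f_theta} identifies the Kraus operators of $f_{\theta}$ as $K_i=\tmat_{B_1}(\tvec_{A_1}(N_i))$ (see Eq.~\eqref{eq:K_N}), so everything entering the memory-cost formula can be rewritten in terms of $N_i$ and $N_i^{\dag}$ once the relevant legs are contracted. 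The corollary is thus not a new computation but a change of variables: replace the $K_i$-description of $f_{\theta}$ by the $N_i$-description intrinsic to $J^{\theta}$.

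First I would recall that the rank in Cor.~\ref{cor:MC} is precisely the Choi rank of the CPTP map $\mF_{\theta}$ defined in Eq.~\eqref{eq:mF_theta}, whose Kraus operators are read off from $f_{\theta}^{\dag}$. Substituting $f_{\theta}^{\dag}(\cdot)=\sum_i K_i^{\dag}\cdot K_i$ into Eq.~\eqref{eq:mF_theta}, and inserting $K_i=\tmat_{B_1}(\tvec_{A_1}(N_i))$, turns the defining diagram for $\mF_{\theta}$ into one built solely from the $N_i$-blocks. The partial trace over $A_2$ in Eq.~\eqref{eq:mF_theta}, together with the trace over $B_2$ present in Eq.~\eqref{eq:memory_cost}, contract exactly the output legs $A_2$ and $B_2$ of those blocks, while the factor $\1_{B_2}$ fed in by $\mF_{\theta}$ collapses through the snake identity of Sec.~\ref{subsec:Id&Tr}. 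The legs that survive are only $A_1$ and $B_1$, reproducing the tensor network displayed in Eq.~\eqref{TN:Memory_Cost_2}, whose rank is then read off directly.

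The verification then reduces to checking that the hidden vectorization $\tvec_{A_1}$ and matricization $\tmat_{B_1}$ inside each $K_i$ are undone consistently by the contractions forced by these partial traces, which is a routine application of the transpose rule Eq.~\eqref{eq:T} and the snake equation. The overall scalar $1/d_{A_2}$ carried from Eq.~\eqref{eq:mF_theta} is immaterial, since it cannot alter the rank of the resulting operator, as already noted below Eq.~\eqref{TN:Memory_Cost}.

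The main obstacle I anticipate is purely one of diagrammatic bookkeeping rather than conceptual depth. One must track with care how the adjoint $f_{\theta}^{\dag}$, rather than $f_{\theta}$ itself, reshuffles the roles of the input and output legs, and ensure that the traces over $A_2$ and $B_2$ act on the correct legs of the $N_i$-blocks after $\tvec_{A_1}$ and $\tmat_{B_1}$ have been expanded. Keeping the orientation of each curved arc straight, so that no stray transpose or SWAP survives the contraction, is the delicate part of the argument; once the systems are aligned, the collapse to Eq.~\eqref{TN:Memory_Cost_2} is immediate.
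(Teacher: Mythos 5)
Your proposal matches the paper's own route: the paper presents Cor.~\ref{cor:MC_2} precisely as a by-product of combining Cor.~\ref{cor:MC} with Lem.~\ref{lem:Kraus_f_theta}, i.e., substituting the correspondence $K_i=\tmat_{B_1}(\tvec_{A_1}(N_i))$ into the rank formula $d_{\theta}=\rank(\Tr_{A_2B_2}[J^{f_{\theta}^{\dag}}])$ and contracting the $A_2$ and $B_2$ legs to obtain the diagram of Eq.~\eqref{TN:Memory_Cost_2}, with the scalar $1/d_{A_2}$ discarded as rank-irrelevant. Your diagrammatic bookkeeping of the vectorization/matricization steps is exactly the change of variables the paper intends, so the proposal is correct and essentially identical in approach.
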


We proceed to the fourth and final representation of superchannels: 
the Liouville superoperator (see Sec.~\ref{subsec:Liouville}).
Upon vectorizing the Choi operator of $\theta(\mE)$, we derive
\begin{align}
    \tvec(J^{\theta(\mE)})=
    \left(\sum_i \overline{K}_i\otimes K_i\right)\,
    \tvec(J^{\mE}).
\end{align}
Introducing
\begin{align}\label{eq:K}
    K:= \sum_i \overline{K}_i\otimes K_i,
\end{align}
we arrive at the operator that captures the superchannel in its Liouville representation, namely, the Liouville superoperator associated with superchannel $\theta$.
Formally, the construction can be summarized in the following theorem

\begin{thm}
[\bf{Superchannel Liouville Superoperator}]
\label{thm:Superchannel_Liouville}
Given a superchannel $\theta$ and a quantum channel $\mE$, the action of $\theta$ yields a new channel $\theta(\mE)$ (see Fig.~\ref{fig:Superchannel}).
In the Liouville representation, the vectorized Choi operators of these channels are connected through the Liouville superoperator $K$, defined in Eq.~\eqref{eq:K}, as
\begin{align}\label{eq:Superchannel_Liouville}
    \tvec(J^{\theta(\mE)})=K\cdot\tvec(J^{\mE}).
\end{align}
\end{thm}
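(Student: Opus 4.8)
The plan is to reduce the statement to the standard fact that the Liouville (natural) representation of any completely positive map with Kraus operators $\{K_i\}$ is the operator $\sum_i\overline{K}_i\otimes K_i$, and then to apply this to the specific map $f_{\theta}$. The crucial input is already in hand: Lem.~\ref{lem:Kraus_f_theta} supplies the Kraus decomposition $J^{\theta(\mE)}=f_{\theta}(J^{\mE})=\sum_i K_i J^{\mE} K_i^{\dag}$ (see Eq.~\eqref{eq:f_theta_Kraus}), with the $K_i$ constructed from the spectral data of $J^{\theta}$ through Eq.~\eqref{eq:K_N}. Because the realization theorem (Thm.~\ref{thm:RT}) guarantees that $\theta(\mE)$ is a genuine channel, its Choi operator $J^{\theta(\mE)}$ exists and can be vectorized, so the left-hand side of Eq.~\eqref{eq:Superchannel_Liouville} is well defined.

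First I would vectorize the identity of Lem.~\ref{lem:Kraus_f_theta} term by term. For each $i$ the sandwich $K_i J^{\mE} K_i^{\dag}$ matches the pattern $Y M X^{\T}$ of Eq.~\eqref{eq:T} upon setting $M=J^{\mE}$, $Y=K_i$, and $X^{\T}=K_i^{\dag}$; solving the last relation gives $X=(K_i^{\dag})^{\T}=\overline{K}_i$. The transpose identity Eq.~\eqref{eq:T} then converts each term into $\tvec(K_i J^{\mE} K_i^{\dag})=(\overline{K}_i\otimes K_i)\,\tvec(J^{\mE})$, and summing over $i$ with linearity of $\tvec$ collapses the result into
\begin{align}
\tvec(J^{\theta(\mE)})=\Big(\sum_i\overline{K}_i\otimes K_i\Big)\tvec(J^{\mE})=K\cdot\tvec(J^{\mE}),
\end{align}
which is precisely Eq.~\eqref{eq:Superchannel_Liouville} with $K$ as defined in Eq.~\eqref{eq:K}.

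I expect the only genuine subtlety to be the conjugate-versus-transpose bookkeeping in the step above: one must verify that the left tensor factor is the entrywise conjugate $\overline{K}_i$ rather than $K_i^{\T}$ or $K_i^{\dag}$, which hinges on reading Eq.~\eqref{eq:T} with the correct placement of the transpose. A secondary point worth addressing is well-definedness: the Kraus set $\{K_i\}$ of $f_{\theta}$ is not unique, but any two such sets are related by an isometry, and a short computation shows this leaves $\sum_i\overline{K}_i\otimes K_i$ invariant, so $K$ in Eq.~\eqref{eq:K} depends only on $f_{\theta}$ (equivalently on $\theta$) and not on the chosen decomposition. Beyond these two checks the argument is a direct, routine application of the vectorization calculus of Sec.~\ref{subsec:Vec}.
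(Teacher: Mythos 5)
Your proposal is correct and follows essentially the same route as the paper: both start from the Kraus form $f_{\theta}(J^{\mE})=\sum_i K_i J^{\mE} K_i^{\dag}$ supplied by Lem.~\ref{lem:Kraus_f_theta} and vectorize it, the paper doing so diagrammatically in Eq.~\eqref{TN:Superchannel_Liouville} while you apply the transpose identity Eq.~\eqref{eq:T} algebraically, with the correct bookkeeping $X=(K_i^{\dag})^{\T}=\overline{K}_i$. Your added check that $\sum_i\overline{K}_i\otimes K_i$ is invariant under isometric changes of the Kraus set is a nice touch the paper leaves implicit, but it does not change the substance of the argument.
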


Once again, the proof admits a transparent tensor-network representation, shown below

\begin{widetext}
\begin{align}\label{TN:Superchannel_Liouville}
    \raisebox{0ex}{\includegraphics[height=16em]{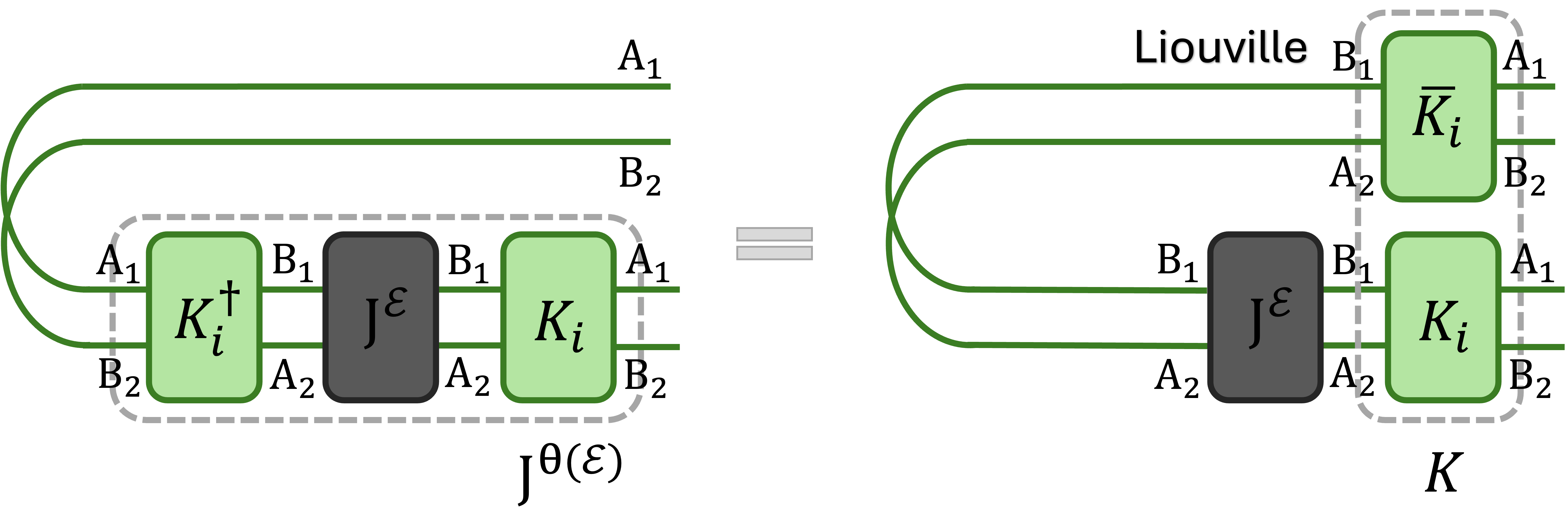}}
\end{align}
\end{widetext}

Altogether, these constructions show that the higher-order structure of a superchannel can be unpacked with the same clarity and completeness as that of quantum channels. 
Starting from its Choi operator (see Eq.~\eqref{TN:Rep_Superchannel}), each subsequent representation, the Kraus decomposition (see Eq.~\eqref{TN:Q_i}), the Stinespring dilation (see Eq.~\eqref{TN:Superchannel_Stinespring_V}), and the Liouville superoperator (see Eq.~\eqref{TN:Superchannel_Liouville}), arises naturally through the tensor-networks, revealing complementary aspects of the same underlying transformation. 
What may at first seem like distinct mathematical frameworks thus converge into a unified operational picture: 
a superchannel reshapes the Choi operator of the input channel in a manner dictated entirely by its causal architecture (see Fig.~\ref{fig:Superchannel}). 
With this, the second major gap in the existing framework of superchannel -- the absence of a complete set of structural representations -- is closed.
This synthesis positions superchannels squarely within the broader theoretical landscape of higher-order quantum theory and provides a coherent foundation for exploring their structural and operational implications.

%%%%%%%%%%%%%%%%%%%%%%%%%%%%%%%%%%%%%%%%%%%%%%%%%%%%%%%%%%%%%%%%%%%%%%%
%%%%%%%%%%%%%%%%%%%%%%%%%%%%%%%%%%%%%%%%%%%%%%%%%%%%%%%%%%%%%%%%%%%%%%%

\section{Correlation and Causality Breaking}
\label{sec:EB}

Entanglement-breaking (EB) channels offer a clear benchmark for understanding when a quantum process loses the ability to sustain non-classical correlations~\cite{RevModPhys.81.865}. 
Once a channel becomes EB, all entanglement across its input-output interface is irreversibly lost, and the resulting output states can be fully simulated using only local operations and classical communication (LOCC)~\cite{Chitambar2014}. 
Examining this boundary reveals when and why quantum correlations disappear, clarifies the mechanisms through which quantum advantages degrade, and identifies the regimes in which quantum resources can, or cannot, be preserved and leveraged~\cite{RevModPhys.91.025001}. 
In this sense, EB channels mark a fundamental limit of quantum processes and serve as indispensable tools for analyzing both the structure of entanglement and the operational significance of its loss.

From an operational standpoint, establishing that a physical process is not EB is the most basic criterion for its use as a quantum memory or for any task that relies on entanglement as a resource~\cite{PhysRevX.8.021033}. 
Yet much of the existing literature concentrates on EB behavior for quantum states -- an idealization that overlooks the temporal structure inherent in realistic quantum devices.
In practice, quantum systems interact with their environments over extended periods, are queried or driven repeatedly, and may serve as storage elements across multiple rounds of use.
In such settings, the relevant question is no longer whether a single channel breaks entanglement, but whether the entire temporal process -- with its history, internal correlations, and memory effects -- admits only classical explanations.

It is this broader and more experimentally relevant temporal perspective that motivates a closer examination of EB superchannels. 
Since superchannels admit multiple complementary viewpoints, the notion of ``entanglement breaking'' at the higher-order level is not predetermined to be unique. 
The perspective i outlined in Sec.~\ref{subsec:Bi_Channels} underpins the definition proposed in Ref.~\cite{Chen2020entanglement}, but this naturally prompts several questions: Is that construction the only coherent way to define EB for superchannels? 
Can the alternative viewpoint ii in Sec.~\ref{subsec:Bi_Channels} lead to a different, yet still meaningful, formulation of an EB superchannel? And if multiple definitions exist, do they correspond to distinct physical behaviors, or does an alternative formulation reveal aspects of temporal entanglement loss that the original one overlooks? 
In what follows, we examine these issues systematically and introduce an operational formulation of EB superchannels that aligns with the structure of multi-time quantum dynamics encountered in realistic scenarios.
Alongside quantum correlations, we also examine superchannels that destroy quantum common causes, leading naturally to the notion of {\it common cause breaking superchannels}. 
These developments underscore a unifying theme: the intrinsically multi-time nature of superchannels reveals a richer landscape of correlation and casualty breaking mechanisms than is apparent from single-step transformations alone.

%%%%%%%%%%%%%%%%%%%%%%%%%%%%%%%%%%%%%%%%%%%%%%%%%%%%%%%%%%%%%%%%%%%%%%%
%%%%%%%%%%%%%%%%%%%%%%%%%%%%%%%%%%%%%%%%%%%%%%%%%%%%%%%%%%%%%%%%%%%%%%%

\subsection{Entanglement Breaking Channels}
\label{subsec:EBC}

Before turning to the main discussion, it is helpful to briefly recall the notion of EB and the key results surrounding EB channels. 
Consider a quantum channel $\mE:A\to B$. 
We say that $\mE$ is EB if, for every ancillary system $R$ and joint input state $\rho_{RA}$, the output $\id_{R\to R}\otimes\mE_{A\to B}(\rho_{RA})$ is always separable across the bipartition $R\,|\,B$, as depicted in Fig.~\ref{fig:EBC}.
In other words, no matter what state is fed into the channel, even one that is maximally entangled, the channel invariably destroys all entanglement.
The following theorem provides a necessary and sufficient condition for a channel to be EB, formulated directly in terms of its Choi operator (see Sec.~\ref{subsec:Choi}).

\begin{figure}[t]
\centering   
\includegraphics[width=0.34\textwidth]{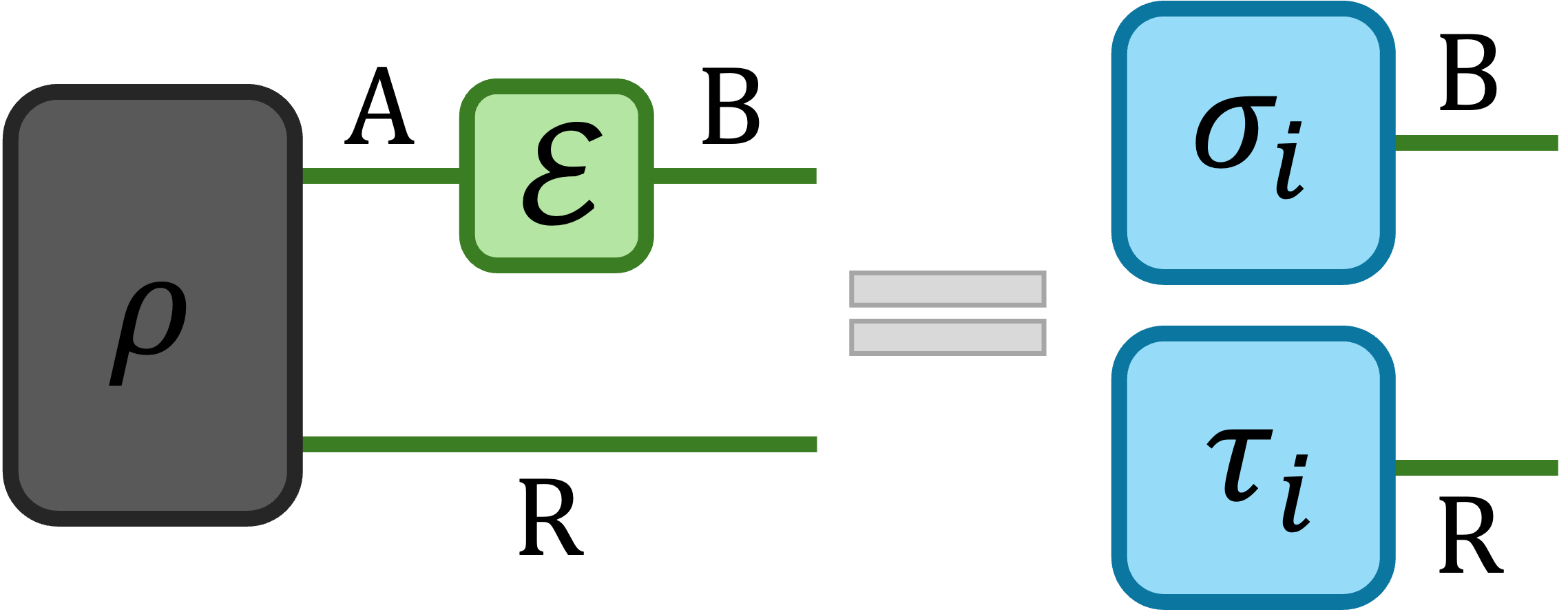}
\caption{(Color online) 
\textbf{EB Channel}.  
A channel $\mE:A\to B$ is entanglement breaking if, for every ancillary system $R$ and every joint state $\rho_{RA}$, the output $\id\otimes\mE(\rho)$ is always separable across $R\,|\,B$, i.e., $\id\otimes\mE(\rho)=\sum_i p_i\sigma_i\otimes\tau_i$ some probability distribution $p_i$, with $\sigma_i\geqslant0$ and $\tau_i\geqslant0$. 
In the diagrammatic representation, the blue shading of $\sigma_i$ and $\tau_i$ signals that they are summed over a common index $i$ in accordance with our tensor-network convention.
For simplicity, the probability distributions are omitted in the diagram.
}
\label{fig:EBC}
\end{figure}

\begin{thm}
[\bf{EB Channel}~\cite{Horodecki2023Entanglement}]
\label{thm:EB_Channel}
A channel $\mE:A\to B$ is EB if and only if its Choi operator is separable across $A\,|\,B$; that is,
\begin{align}\label{eq:EB_Channel}
    J^{\mE}_{AB}=\sum_i X_i\otimes Y_i,
\end{align}
with $X_i\geqslant0$ and $Y_i\geqslant0$ acting on $A$ and $B$, respectively.
\end{thm}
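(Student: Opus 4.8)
The plan is to prove both directions via the Choi--Jamio\l kowski isomorphism, exploiting the fact that separability of the Choi operator is a state-level property while the entanglement-breaking condition is a channel-level property, and that the isomorphism translates faithfully between them.

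First I would establish the easy direction: assume $J^{\mE}_{AB}=\sum_i X_i\otimes Y_i$ with $X_i,Y_i\geqslant0$. The goal is to show that for every ancilla $R$ and every input $\rho_{RA}$, the output $\id\otimes\mE(\rho_{RA})$ is separable across $R\,|\,B$. I would write the channel's action in terms of its Choi operator via the link product, $\id\otimes\mE(\rho_{RA})=\Tr_{A}[\rho_{RA}^{\T_A}\cdot J^{\mE}_{AB}]$ (see Eq.~\eqref{eq:Link_Product}), and then substitute the separable decomposition. Since each $Y_i$ acts only on $B$ and commutes out of the partial trace over $A$, the output becomes $\sum_i \Tr_{A}[\rho_{RA}^{\T_A}\cdot(X_i\otimes\1_B)]\otimes Y_i$, which is manifestly a sum of positive operators on $R$ tensored with positive operators on $B$. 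Tracking positivity of the $R$-factor is routine once the partial transpose and partial trace are handled carefully, and this directly exhibits the $R\,|\,B$ separable form shown in Fig.~\ref{fig:EBC}.

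For the converse, I would assume $\mE$ is EB and deduce separability of $J^{\mE}_{AB}$. The natural move is to feed in the distinguished state: take $R\cong A$ and let the input be (proportional to) the unnormalized maximally entangled state $\Gamma_{AA}$, since by definition (see Eq.~\eqref{eq:Choi}) the output $\id\otimes\mE(\Gamma_{AA})$ is exactly $J^{\mE}_{AB}$ up to the $R\leftrightarrow A$ relabeling. The EB hypothesis guarantees this particular output is separable across $R\,|\,B$, and after identifying $R$ with the input-label copy of $A$ this is precisely a separable decomposition of $J^{\mE}_{AB}$ across $A\,|\,B$, giving the claimed form $\sum_i X_i\otimes Y_i$ with both factors positive.

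The main obstacle I anticipate is purely bookkeeping rather than conceptual: the EB definition quantifies over all ancillas $R$, yet separability of the single Choi operator must be shown to capture this universal statement. The forward direction handles universality cleanly because the separable Choi form propagates to every input through the link product; the subtle point is ensuring that the positivity of the induced $R$-factor $\Tr_{A}[\rho_{RA}^{\T_A}\cdot(X_i\otimes\1_B)]$ is genuine, which follows from complete positivity of the map $\rho_{RA}\mapsto\Tr_A[\rho_{RA}^{\T_A}(X_i\otimes\1_B)]$ associated with each positive $X_i$. In the converse, the delicate step is confirming that the maximally entangled input is the \emph{extremal} witness --- that separability on this single input forces separability of the whole Choi operator --- which is exactly what the isomorphism delivers, so no stronger input need be checked. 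I would keep these two observations explicit to close the logical loop between the quantified channel-level definition and the single operator-level criterion.
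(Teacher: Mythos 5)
Your proof is correct and matches the paper's approach: the paper cites this channel-level result from the literature rather than proving it, but its proof of the superchannel analogue (Thm.~\ref{thm:EB_Superchannel}) proceeds exactly as you do --- specializing the EB condition to the maximally entangled input to obtain separability of the Choi operator, and conversely substituting the separable decomposition $\sum_i X_i\otimes Y_i$ into the link product to propagate separability to every input (the paper's measure-and-prepare realization after the theorem is the physical counterpart of this direction). The only blemishes are bookkeeping slips (e.g.\ $X_i\otimes\1_B$ should read $\1_R\otimes X_i$ inside the partial trace over $A$), which do not affect the argument.
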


Equation~\eqref{eq:EB_Channel} makes clear that every EB channel can be implemented by a measure-and-prepare scheme: the input is measured, and the outcome determines the result state. 
Indeed,
\begin{align}
    J^{\mE}\star\rho
    =
    \sum_i\Tr[X_i^{\T}\rho]Y_i
    =
    \sum_i\Tr[\left(\Tr[Y_i]X_i^{\T}\right)\rho]
    \frac{Y_i}{\Tr[Y_i]},
\end{align}
where 
\begin{align}
    M_i:=\Tr[Y_i]X_i,
\end{align}
defines a positive operator-valued measure (POVM) and each
\begin{align}
    \sigma_i:=\frac{Y_i}{\Tr[Y_i]},
\end{align}
is a quantum state.
In this way, the action of $\mE$ can be represented diagrammatically as a measurement $\{M_i\}$ followed by the preparation of $\sigma_i$ conditioned on outcome $i$.

\begin{align}\label{TN:EBC_Realization}
    \raisebox{0ex}{\includegraphics[height=6em]{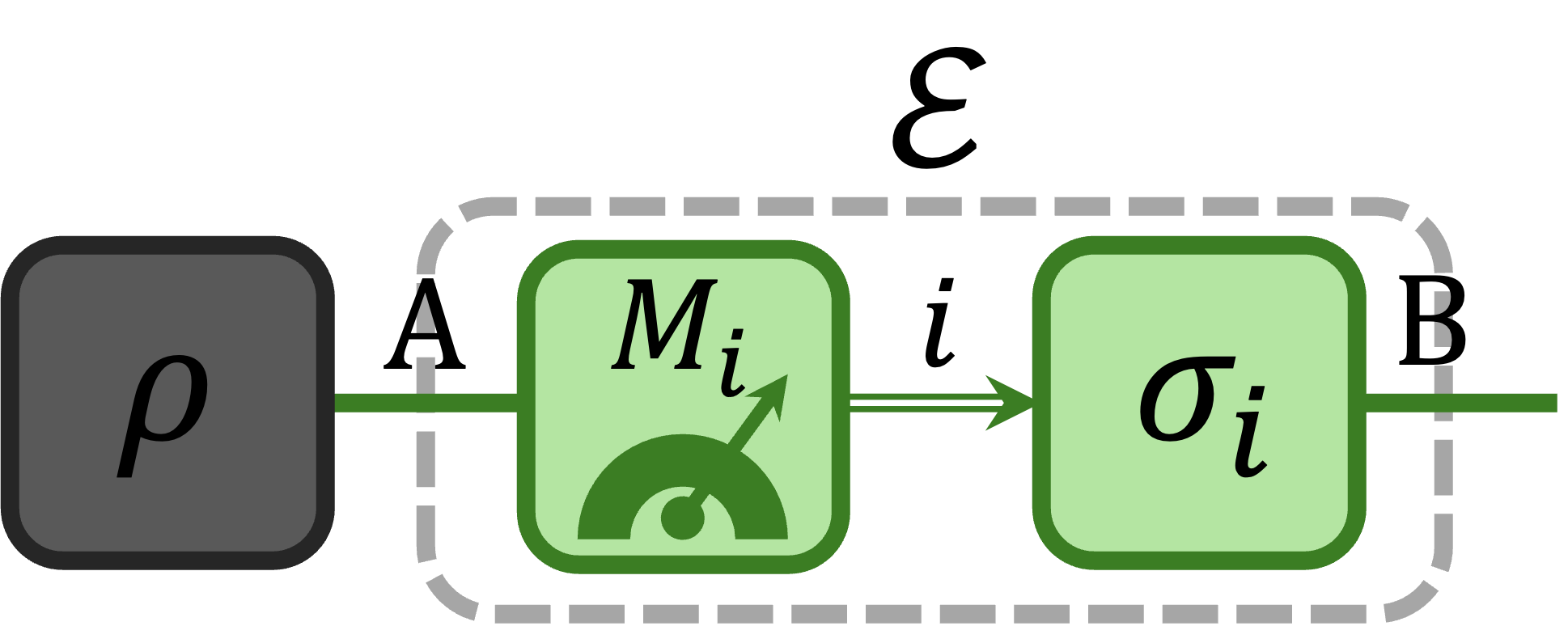}}
\end{align}

The theory of EB channels is well established, and every such channel admits a deterministic measurement-and-preparation realization, as shown in Eq.~\eqref{TN:EBC_Realization}. 
In contrast, once we move to higher-order quantum processes, particularly those exhibiting non-Markovian dynamics, such deterministic implementations are no longer guaranteed to exist.

%%%%%%%%%%%%%%%%%%%%%%%%%%%%%%%%%%%%%%%%%%%%%%%%%%%%%%%%%%%%%%%%%%%%%%%
%%%%%%%%%%%%%%%%%%%%%%%%%%%%%%%%%%%%%%%%%%%%%%%%%%%%%%%%%%%%%%%%%%%%%%%

\subsection{Multipartite Entanglement Breaking Channels}
\label{subsec:MEBC}

To motivate our notion of EB superchannels, it is helpful to examine how the standard definition of an EB channel (see Fig.~\ref{fig:EBC}) extends from the point-to-point setting to the multipartite one. 
This perspective is consistent with our guiding philosophy of treating a superchannel as a bipartite channel (see Fig.~\ref{fig:Bipartite_Channel}), in line with the generalized Occam's razor (see Sec.~\ref{subsec:Bi_Channels}).

\begin{figure}[t]
\centering   
\includegraphics[width=0.4\textwidth]{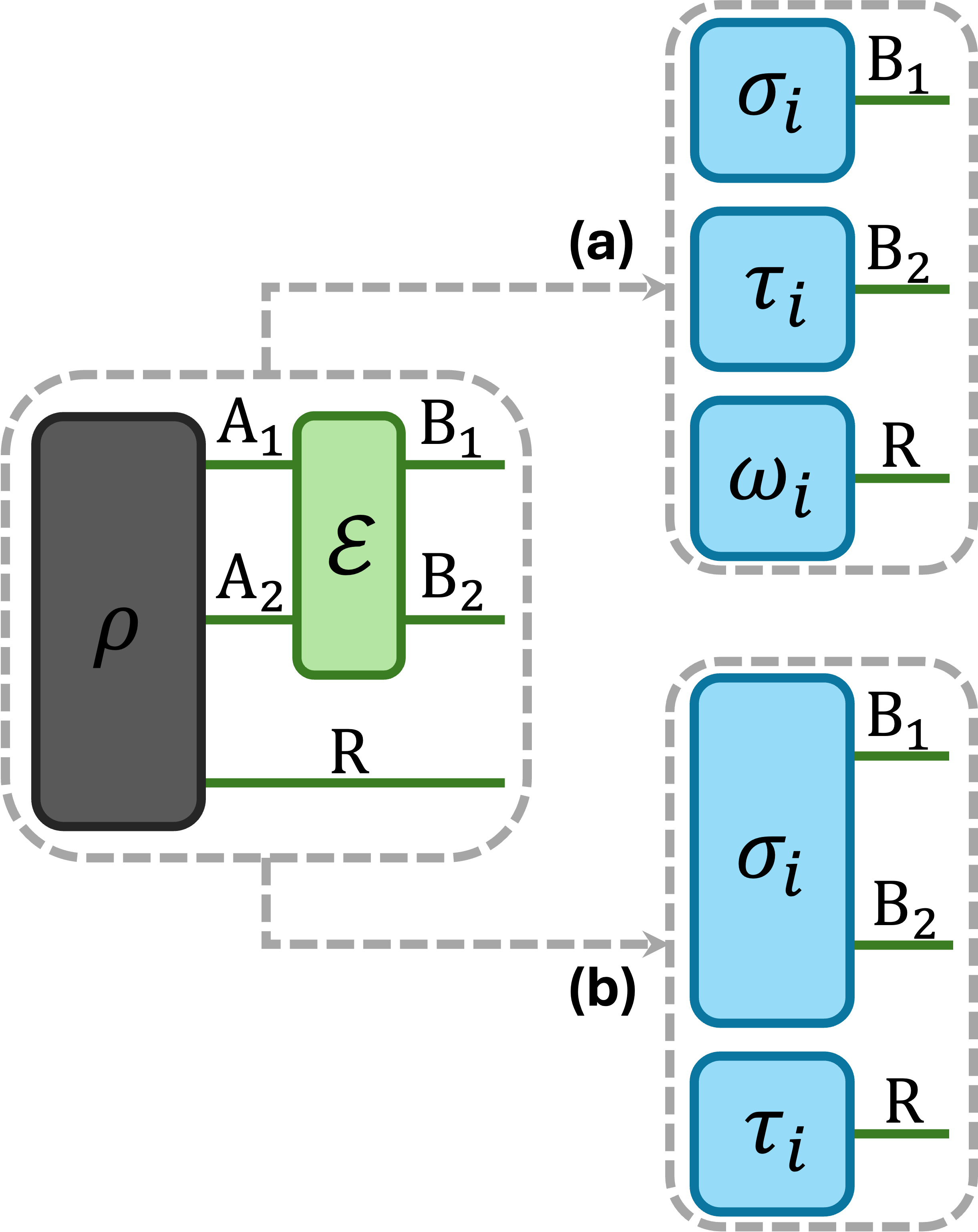}
\caption{(Color online) 
\textbf{Multipartite EB Channel}.  
A tripartite state $\rho_{RA_1A_2}$ is prepared, and the subsystems $A_1A_2$ are sent through a bipartite channel $\mE:A_1A_2\to B_1B_2$.
(a) Completely EB: the output is fully separable across all subsystems, and all quantum correlations are removed.
(b) EB: the subsystems $B_1B_2$ emerging from the channel $\mE$ is required to be separable from the reference system $R$, while correlations within $B_1B_2$ may still remain.
}
\label{fig:EBC_Multi}
\end{figure}

Consider the configuration shown in Fig.~\ref{fig:EBC_Multi}: 
a tripartite state $\rho_{RA_1A_2}$ is prepared, and the subsystem $A_1A_2$ is transmitted through a bipartite channel $\mE:A_1A_2\to B_1B_2$.
In this multipartite setting, two different notions of entanglement breaking arise.
The first -- completely entanglement breaking -- requires the output to be fully separable across all subsystems, ensuring that every quantum correlation is eliminated (see Fig.~\ref{fig:EBC_Multi}(a)). 
The second, which is operationally more meaningful and forms the basis of our development, only demands that the part of the state acted on by the channel become separable from the untouched reference system (see Fig.~\ref{fig:EBC_Multi}(b)). 
Under this weaker but practically relevant requirement, the processed subsystem may still retain internal correlations, but any entanglement with the ancillary system is necessarily destroyed.
In what follows, we take this second notion as the operative definition of entanglement breaking in multipartite settings, and use it as the basis for extending EB channels to superchannels.

Remark that, throughout this work, including Fig.~\ref{fig:EBC_Multi} and our subsequent subsections, we adopt the convention in which overall probability distributions are omitted.
For example, the separable state illustrated in Fig.~\ref{fig:EBC_Multi}(a), $\sum_i p_i\sigma_i\otimes\tau_i\otimes\omega_i$, is drawn without the explicit coefficients $p_i$, showing only the associated quantum states. 
All such diagrams implicitly correspond to properly normalized objects. 
This convention leaves all separability considerations intact, as separability is invariant under positive scalar rescaling. 
We therefore suppress these scalar factors to keep the diagrams visually clean and focused on their essential structure.

%%%%%%%%%%%%%%%%%%%%%%%%%%%%%%%%%%%%%%%%%%%%%%%%%%%%%%%%%%%%%%%%%%%%%%%
%%%%%%%%%%%%%%%%%%%%%%%%%%%%%%%%%%%%%%%%%%%%%%%%%%%%%%%%%%%%%%%%%%%%%%%

\subsection{Entanglement Breaking Superchannels}
\label{subsec:EBS}

The idea of an EB superchannel was first formulated in Ref.~\cite{Chen2020entanglement}, following the framework of Ref.~\cite{8678741} that treats a superchannel as a map taking one quantum channel to another. 
This viewpoint is mathematically precise and captures how a superchannel reshapes quantum dynamics. 
But it does not fully reflect the physical picture. 
A superchannel can also be seen as a bipartite process spread over two time steps, where an experimenter may interact with it more than once by sending different parts of a multipartite state at different rounds. 
From this operational perspective, the usual definition of an EB superchannel can behave in a counter-intuitive way: even when the superchannel is labeled EB in the channel-to-channel sense, some quantum correlations may still survive through multi-round interactions. 
This observation prompts us to revisit what EB should mean for superchannels in a physical and practically relevant way. 
The rest of this subsection develops this refined viewpoint, namely viewpoint ii in Sec.~\ref{subsec:Bi_Channels}.

We begin with the notion of an EB superchannel associated with viewpoint i in Sec.~\ref{subsec:Bi_Channels}, originally introduced in Ref.~\cite{Chen2020entanglement}.
For ease of reference, we refer to this class as the {\it Chen–Chitambar EB superchannels}.
Here the object under consideration is a bipartite channel $\mE:B_1R_1\to A_2R_2$, and the defining requirement is that the superchannel $\theta:A_1A_2\to B_1B_2$ must send any such $\mE$ to a bipartite channel that contains no dynamical entanglement~\cite{bauml2019resourcetheoryentanglementbipartite,PhysRevLett.125.180505,PhysRevA.103.062422}.
In this formulation, the output takes the form
\begin{align}\label{eq:EBS_CC}
    \theta(\mE)=\sum_i\mF_i\otimes\mG_i,
\end{align}
where the completely positive maps $\mF_i:B_1\to A_2$ and $\mG_i:R_1\to R_2$ jointly specify a valid channel $\theta(\mE)$.
Fig.~\ref{fig:CC_EBS} illustrates this picture: Chen–Chitambar EB is imposed directly on the bipartite dynamics, reflecting a channel-to-channel perspective.

\begin{figure}[t]
\centering   
\includegraphics[width=0.48\textwidth]{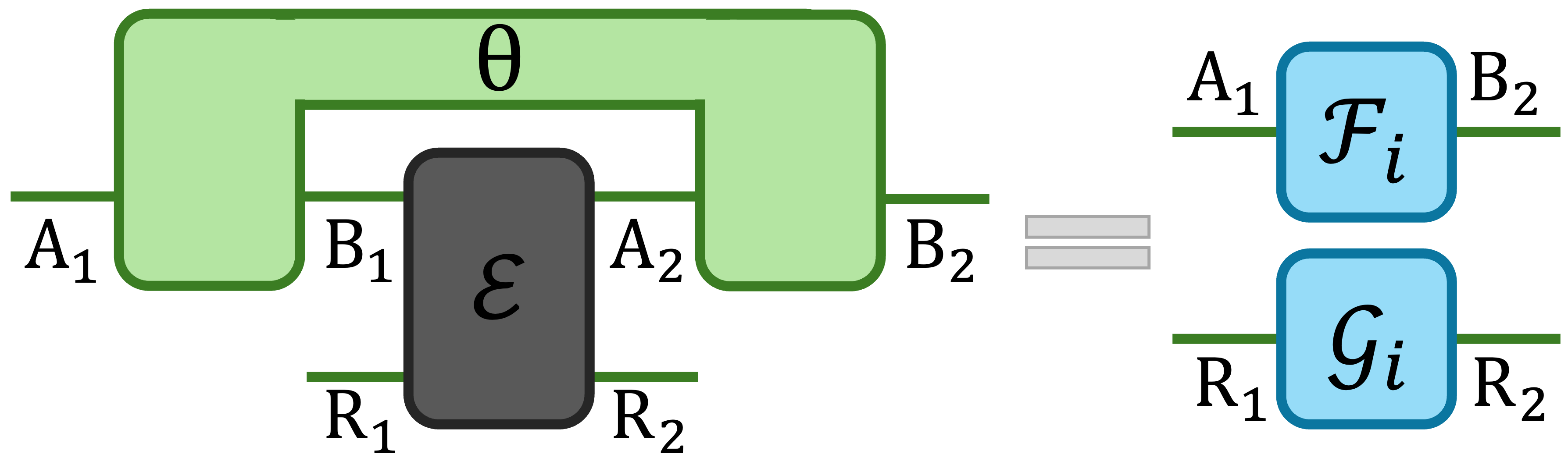}
\caption{(Color online) 
\textbf{Chen-Chitambar EB Superchannel}.  
This notion of EB superchannel ensures that any bipartite channel is mapped to a separable one. 
In particular, the image takes the form of Eq.~\eqref{eq:EBS_CC}.
In the depiction, the blue highlighting of 
$\mF_i$ and $\mG_i$ denotes a shared summation index $i$, consistent with our diagrammatic convention.
}
\label{fig:CC_EBS}
\end{figure}

By parallel with the EB channel case, a superchannel is Chen–Chitambar EB in this formulation exactly when its Choi operator is separable across the prescribed bipartition. 
Thus, Choi operator separability provides the necessary and sufficient characterization.

\begin{thm}
[\bf{Chen-Chitambar EB Superchannel}~\cite{Chen2020entanglement}]
\label{thm:EB_Superchannel_CC}
Let $\theta:A_1A_2\to B_1B_2$ be a superchannel and let $\mE:B_1R_1\to A_2R_2$ be any bipartite channel. 
The result channel $\theta(\mE)$ is separable across the bipartition $A_1B_2\,|\,R_1R_2$ for all admissible $\mE$ (see Fig.~\ref{fig:CC_EBS}) if and only if the Choi operator $J^{\theta}$ of the superchannel is separable across the partition $A_1B_2\,|\,B_1A_2$.
\end{thm}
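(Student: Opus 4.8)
The plan is to recast the channel-level separability condition as a Choi-operator condition and exploit the link-product identity $J^{\theta(\mE)}=J^{\theta}\star J^{\mE}$ (see Eq.~\eqref{eq:Channels_Composition}). Since $\mE:B_1R_1\to A_2R_2$ shares only the systems $B_1A_2$ with $\theta$, the link product contracts precisely those systems, so $J^{\theta(\mE)}$ lives on $A_1B_2R_1R_2$ and the target bipartition $A_1B_2\,|\,R_1R_2$ is the natural image of the partition $A_1B_2\,|\,B_1A_2$ of $J^{\theta}$. Throughout I will use the bipartite analogue of Thm.~\ref{thm:EB_Channel}: a bipartite channel is separable across a cut (i.e. of the form $\sum_i\mF_i\otimes\mG_i$) exactly when its Choi operator is separable across that cut, which holds because the Choi operator of a product map is a tensor product.

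For the sufficiency direction ($\Leftarrow$), I would assume $J^{\theta}=\sum_i X_i^{A_1B_2}\otimes Z_i^{B_1A_2}$ with $X_i,Z_i\geqslant0$ and distribute the link product over the sum. Because the contraction touches only $B_1A_2$, this yields $J^{\theta(\mE)}=\sum_i X_i^{A_1B_2}\otimes\bigl(Z_i\star J^{\mE}\bigr)_{R_1R_2}$. The one point requiring care is that each partial link $Z_i\star J^{\mE}=\Tr_{B_1A_2}[Z_i^{\T_{B_1A_2}}J^{\mE}]$ (see Eq.~\eqref{eq:Link_Product}) is positive semidefinite on $R_1R_2$: writing $Z_i^{\T_{B_1A_2}}=\sum_a\ket{\phi_a}\!\bra{\phi_a}$ with $\ket{\phi_a}\in B_1A_2$, the partial link equals $\sum_a\bra{\phi_a}J^{\mE}\ket{\phi_a}$, and each term is positive because $J^{\mE}\geqslant0$. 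Hence $J^{\theta(\mE)}$ is a sum of positive product operators across $A_1B_2\,|\,R_1R_2$ for every admissible $\mE$, establishing separability of the output channel.

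For the necessity direction ($\Rightarrow$), I would construct a single probe channel that reconstructs $J^{\theta}$ from $J^{\theta(\mE)}$. Taking $R_1\cong A_2$ and $R_2\cong B_1$, let $\mE$ be the pair of identity channels $\id_{B_1\to R_2}\otimes\id_{R_1\to A_2}$, whose Choi operator is $\Gamma_{B_1R_2}\otimes\Gamma_{A_2R_1}$; this is CPTP since $\Tr_{R_2}[\Gamma_{B_1R_2}]=\1_{B_1}$ and $\Tr_{A_2}[\Gamma_{A_2R_1}]=\1_{R_1}$. The key computation is that linking $J^{\theta}$ against a maximally entangled state merely relabels the contracted system, so $J^{\theta}\star(\Gamma_{B_1R_2}\otimes\Gamma_{A_2R_1})$ equals $J^{\theta}$ under the substitution $B_1\mapsto R_2$ and $A_2\mapsto R_1$. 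Therefore $J^{\theta(\mE)}$ is a relabeled copy of $J^{\theta}$, and the hypothesized separability of $J^{\theta(\mE)}$ across $A_1B_2\,|\,R_1R_2$ transports back, under $R_1\mapsto A_2$, $R_2\mapsto B_1$, to separability of $J^{\theta}$ across $A_1B_2\,|\,B_1A_2$.

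The main obstacle I anticipate is the necessity direction, specifically verifying that the probe $\mE$ faithfully reproduces $J^{\theta}$ without introducing spurious transposes or permutations. The relabeling identity $J^{\theta}\star\Gamma_{B_1R_2}=J^{\theta}\big|_{B_1\mapsto R_2}$ must be applied with attention to the partial-transpose convention built into the definition of $\star$ (see Eq.~\eqref{eq:Link_Product}); once this bookkeeping is settled and the positivity of the partial link $Z_i\star J^{\mE}$ is confirmed, both directions close cleanly.
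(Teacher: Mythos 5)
Your proof is correct. For comparison: the paper does not actually prove Thm.~\ref{thm:EB_Superchannel_CC} itself; it cites Ref.~\cite{Chen2020entanglement}, appeals to the parallel with the EB-channel characterization (Thm.~\ref{thm:EB_Channel}), and uses Thm.~\ref{thm:ET} to translate the original Gour-operator formulation into the Choi-operator language. Your argument is instead a self-contained proof, and its two prongs are precisely the technique the paper deploys for the analogous Thm.~\ref{thm:EB_Superchannel}: sufficiency by distributing the link product over a separable decomposition of $J^{\theta}$, and necessity by probing with a maximally entangled object so that the output is a relabelled copy of $J^{\theta}$ itself. Your swap probe $\id_{B_1\to R_2}\otimes\id_{R_1\to A_2}$, whose Choi operator is $\Gamma_{B_1R_2}\otimes\Gamma_{R_1A_2}$, is the channel-level counterpart of the state-level probe $\Gamma_{A_1A_1}\otimes\Gamma_{A_2A_2}$ used in that proof, and your relabelling identity introduces no residual transpose because the partial transpose built into Eq.~\eqref{eq:Link_Product} cancels against the transpose arising in $\Tr_{B_1}[Y^{\T_{B_1}}\Gamma_{B_1R_2}]=Y_{R_2}$. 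The one step worth making fully explicit -- positivity of the partial link $Z_i\star J^{\mE}$ -- you handle correctly via the partial contraction $\bra{\phi_a}J^{\mE}\ket{\phi_a}\geqslant 0$; combined with the termwise Choi--Jamio\l kowski correspondence between PSD product terms of the Choi operator and tensor products of CP maps, both directions close cleanly. The net effect is that your proposal supplies the detailed proof the paper delegates to the literature, using exactly the paper's own proof style for its sibling theorem.
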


Chen-Chitambar EB superchannels were originally characterized through the structure of the Gour operator in Eq.~\eqref{eq:Superchannel_Gour}.
Thm.~\ref{thm:ET} shows that this characterization is equivalently expressed at the level of the Choi operator, allowing us to analyze all Chen–Chitambar EB behavior entirely within the Choi representation (see Sec.~\ref{subsec:SC_Choi}). 
In this abstracted viewpoint, we classify a superchannel as {\it Type-I EB} when its Choi operator is separable across the bipartition $A_1B_2\,|\,B_1A_2$, as demonstrated in Fig.~\ref{fig:EBS_Type_1}.

\begin{figure}[t]
\centering   
\includegraphics[width=0.38\textwidth]{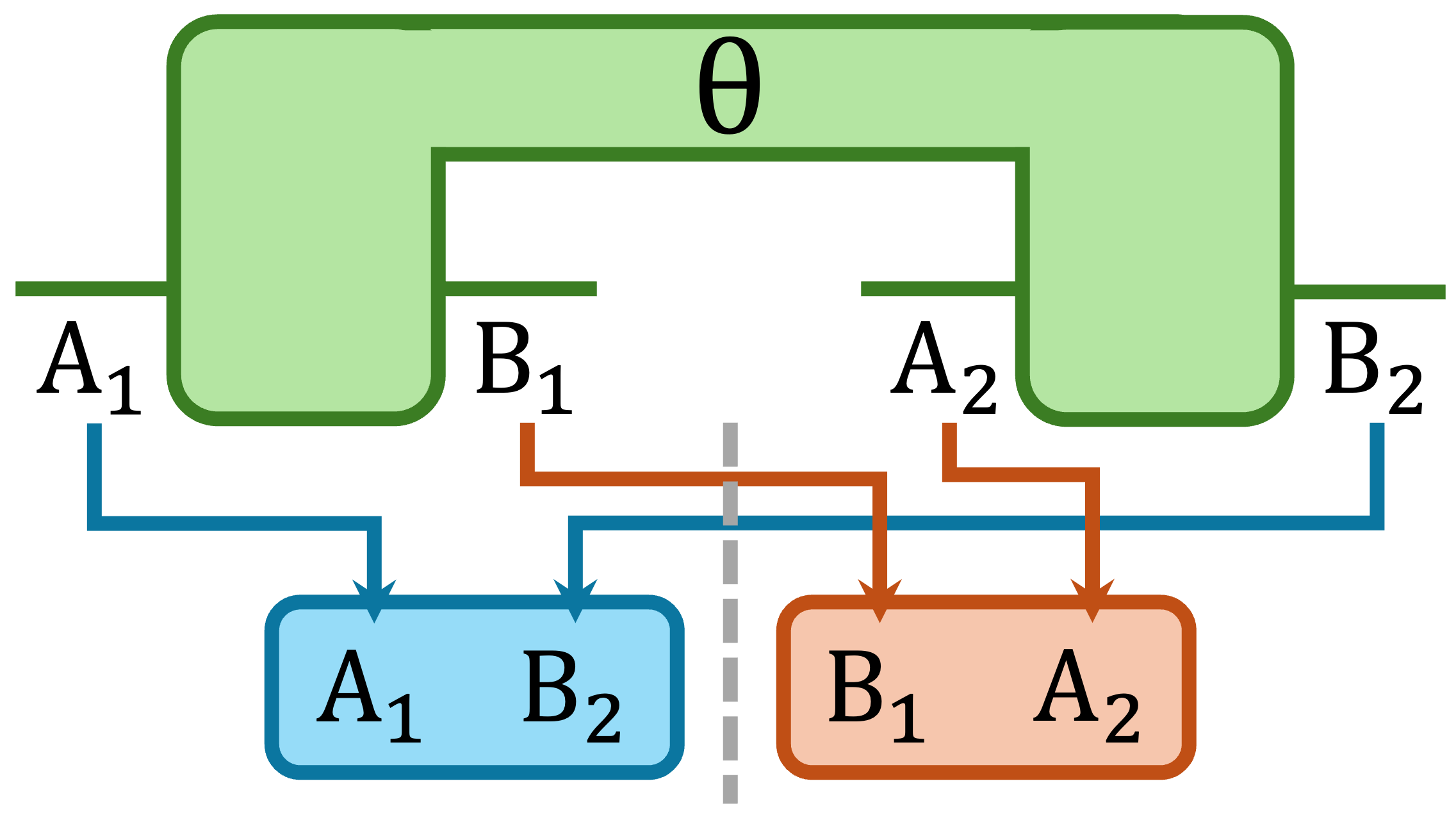}
\caption{(Color online) 
\textbf{Type-I EB Superchannel}.  
A superchannel $\theta$ is termed Type-I EB when its Choi operator is separable across the bipartition $A_1B_2\,|\,B_1A_2$.
}
\label{fig:EBS_Type_1}
\end{figure}

Let us now proceed to our notion of an EB superchannel by framing it through a scenario that closely parallels the verification of a practical quantum memory device~\cite{PhysRevX.8.021033}. 
Alice acts as the prospective buyer, while Bob, the service provider, claims that his device is capable of preserving quantum correlations across time. 
Before accepting this claim, Alice performs a two-step verification sketched in Fig.~\ref{fig:EBS}. 
She begins by preparing a multipartite state and sending subsystem $A_1$ into Bob's device. 
At a later time $t_{A_2}$, she sends another subsystem $A_2$. 
For the device to function as a genuine quantum memory, the final state held by Bob must remain entangled with her reference system $R$ after both interactions.
Equivalently, the output must be non-separable across the bipartition $R\,|\,B_1B_2$. 
Observing entanglement across this cut is precisely what certifies that Bob's device is capable of preserving quantum correlations over multiple time points.
In contrast, if every such two-step verification yields an output that is separable across $R\,|\,B_1B_2$, then the device operates as an EB superchannel -- a temporal analogue of the multipartite EB behavior shown in Fig.~\ref{fig:EBC_Multi}(b).

This operational test immediately highlights a fundamental difference between our notion of EB superchannel (see Fig.~\ref{fig:EBS}) and the Chen-Chitambar EB introduced in Ref.~\cite{Chen2020entanglement} (see Fig.~\ref{fig:CC_EBS}). 
Consider a superchannel built from a noiseless channel $\id_{A_1\to B_2}$, a fixed state preparation on $B_1$, and a trace over $A_2$. 
\begin{align}\label{TN:EBS_Ex_1}
    \raisebox{0ex}{\includegraphics[height=5em]{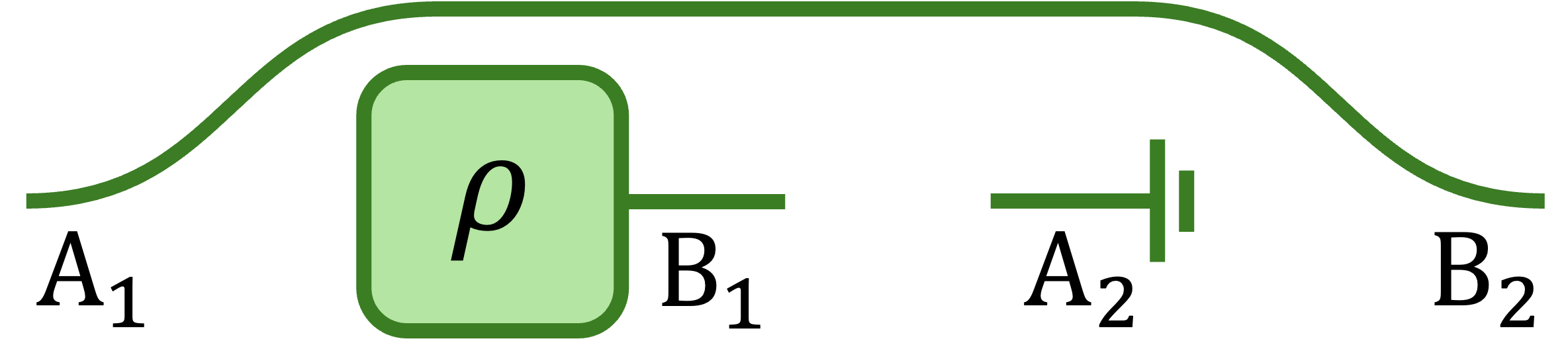}}
\end{align}
In the framework of Ref.~\cite{Chen2020entanglement}, such a construction is classified as Chen-Chitambar EB;
indeed, it satisfies the Type-I EB condition by design (see Fig.~\ref{fig:EBS_Type_1}). 
Under our operational criterion, however, the same superchannel is not EB. 
By preparing a maximally entangled state on $RA_1$ and an arbitrary state on $A_2$, Alice can ensure that the correlations between $R$ and $A_1$ reappear perfectly at the output. 
In other words, despite being Type-I EB from the channel-to-channel viewpoint (see Sec.~\ref{subsec:Bi_Channels}), this superchannel demonstrably preserves entanglement and would be certified as a functioning quantum memory. 
This discrepancy motivates our refined definition, which captures the physically relevant notion of EB superchannel in multi-round quantum interactions.

\begin{figure}[t]
\centering   
\includegraphics[width=0.48\textwidth]{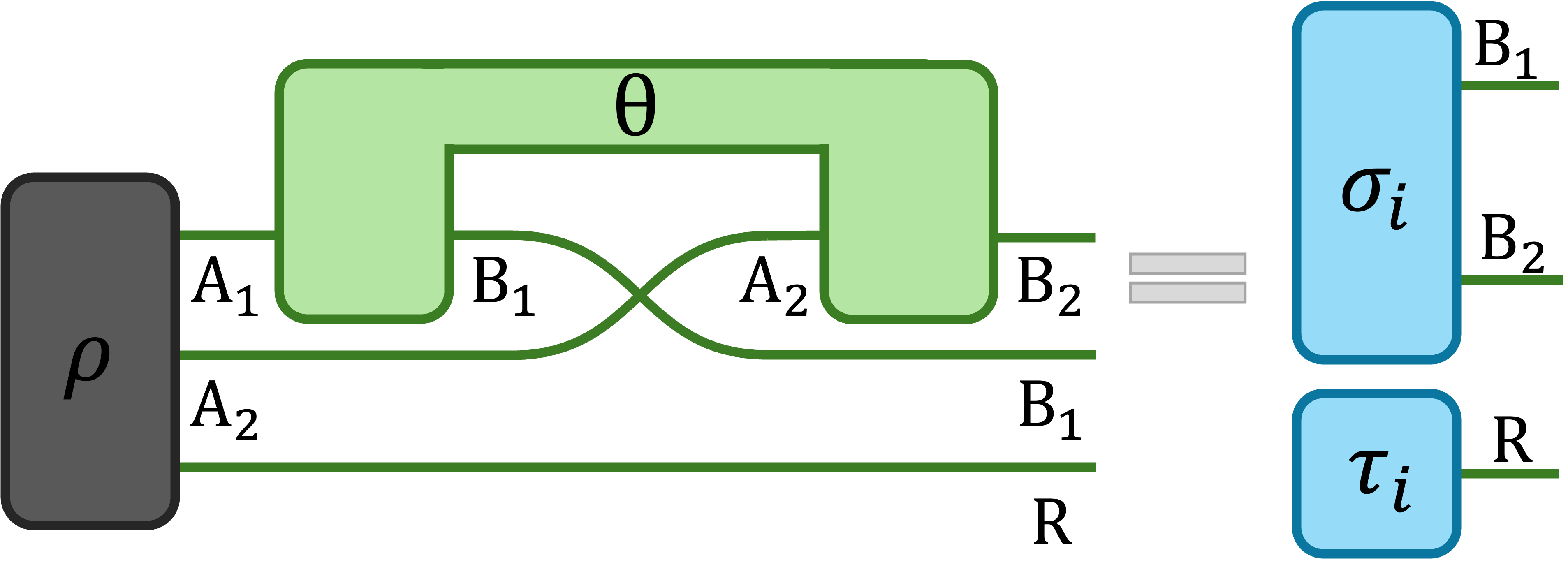}
\caption{(Color online) 
\textbf{EB Superchannel}.  
Our notion of an EB superchannel is motivated by the two-step verification of quantum-memory devices. 
Alice prepares a tripartite state in which two subsystems are used to probe the device operated by Bob, while a reference system is retained to test whether quantum correlations survive across the two rounds of interaction.
In the first round, she sends subsystem $A_1$ into Bob's device, and in the second round she transmits subsystem $A_2$. 
Bob's device, in this setting, is described by a superchannel $\theta$. 
We say that $\theta$ is EB if the resulting output state is always separable across the bipartition $R\,|\,B_1B_2$.
Otherwise, if entanglement with $R$ can be preserved for some input state, Bob's device qualifies as a genuine quantum memory.
}
\label{fig:EBS}
\end{figure}

Analogous to the formulation of Chen-Chitambar EB superchannels in Fig.~\ref{fig:CC_EBS}, our refined notion admits a complete characterization directly in terms of the Choi operator of the superchannel. 
Formally, we have

\begin{thm}
[\bf{EB Superchannel}]
\label{thm:EB_Superchannel}
Let $\theta:A_1A_2\to B_1B_2$ be a superchannel, and let $\rho_{RA_1A_2}$ be any tripartite state in which the subsystems $A_1$ and $A_2$ interact with $\theta$ in two successive rounds.
The resulting state is separable across the bipartition $R\,|\,B_1B_2$ for all inputs $\rho$ (see Fig.~\ref{fig:EBS}) if and only if the Choi operator $J^{\theta}$ is separable across the partition $A_1A_2\,|\,B_1B_2$.
\end{thm}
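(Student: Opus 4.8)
The plan is to recognize that the two-round verification of Fig.~\ref{fig:EBS} is nothing more than the entanglement-breaking condition for $\theta$ read as an ordinary bipartite channel $A_1A_2\to B_1B_2$, so that the statement follows by a direct application of Thm.~\ref{thm:EB_Channel}. The generalized Occam's razor of Sec.~\ref{subsec:Bi_Channels} licenses treating $\theta$ as a genuine CPTP map on the lumped input $A_1A_2$; hence, when Alice prepares $\rho_{RA_1A_2}$ and feeds $A_1$ and $A_2$ into the device over the two rounds, the state she retains is $\sigma_{RB_1B_2}=(\id_{R\to R}\otimes\theta_{A_1A_2\to B_1B_2})(\rho_{RA_1A_2})$, expressible through the Choi operator as $\sigma_{RB_1B_2}=\Tr_{A_1A_2}[(\rho_{RA_1A_2}^{\T_{A_1A_2}}\otimes\1_{B_1B_2})(\1_R\otimes J^{\theta})]$. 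Demanding that $\sigma_{RB_1B_2}$ be separable across $R\,|\,B_1B_2$ for every ancilla $R$ and every joint input is, verbatim, the defining property of $\theta$ being EB with input space $A_1A_2$ and output space $B_1B_2$, matching the multipartite picture of Fig.~\ref{fig:EBC_Multi}(b).

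For completeness I would unpack both directions of the underlying EB argument rather than merely cite it. For sufficiency, suppose $J^{\theta}=\sum_i X_i\otimes Y_i$ with $X_i\geqslant0$ on $A_1A_2$ and $Y_i\geqslant0$ on $B_1B_2$. Substituting into the formula above and letting each $Y_i$ pass through the partial trace gives $\sigma_{RB_1B_2}=\sum_i\tau_i\otimes Y_i$, where $\tau_i:=\Tr_{A_1A_2}[\rho_{RA_1A_2}(\1_R\otimes X_i^{\T})]$ after moving the partial transpose onto $X_i$. Since $X_i^{\T}\geqslant0$ and $\rho\geqslant0$, each $\tau_i$ is a positive operator on $R$ (it is the partial trace of $(\1_R\otimes\sqrt{X_i^{\T}})\,\rho\,(\1_R\otimes\sqrt{X_i^{\T}})$), so $\sigma_{RB_1B_2}$ is manifestly separable across $R\,|\,B_1B_2$.

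For necessity, I would invoke the standard Choi-reconstruction trick: take $R\cong A_1A_2$ and feed in the maximally entangled input $\rho_{RA_1A_2}=\Gamma_{RA_1A_2}$, for which $\sigma_{RB_1B_2}=(\id_R\otimes\theta)(\Gamma)=J^{\theta}$ with $R$ simply relabeling the input systems. Separability of $\sigma_{RB_1B_2}$ across $R\,|\,B_1B_2$ is then literally separability of $J^{\theta}$ across $A_1A_2\,|\,B_1B_2$, closing the equivalence.

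The main obstacle is conceptual rather than computational: one must confirm that the genuinely sequential character of the protocol -- $A_1$ entering before $B_1$ is emitted, and $A_2$ entering only afterward -- introduces no constraint beyond the one-shot bipartite channel action. The causal, no-signaling structure of $\theta$ encoded in Eq.~\eqref{eq:Superchannel_Choi_NS} constrains the superchannel itself but places no restriction on which tripartite inputs $\rho_{RA_1A_2}$ Alice may prepare; consequently the quantifier ``for all $\rho$'' ranges over exactly the same set of states as in the EB-channel definition, and the reduction to Thm.~\ref{thm:EB_Channel} is faithful. A secondary subtlety to check is that the bipartition appearing in the verification scenario, $R\,|\,B_1B_2$, is precisely the input-versus-output cut of the lumped channel, so that no spurious partition involving $A_1$ or $A_2$ enters the characterization.
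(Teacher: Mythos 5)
Your proposal is correct and takes essentially the same approach as the paper: one direction by feeding in the maximally entangled input so that the output state \emph{is} $J^{\theta}$, the other by substituting the separable decomposition $J^{\theta}=\sum_i X_i\otimes Y_i$ into the output formula, which is exactly the paper's argument written with explicit partial traces and transposes instead of the link product. Your additional framing as a direct instance of Thm.~\ref{thm:EB_Channel} for the lumped bipartite channel $A_1A_2\to B_1B_2$, together with the remark that the sequential structure places no constraint on the admissible inputs $\rho_{RA_1A_2}$, is a faithful packaging of the same content.
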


\begin{proof}
We begin with the sufficiency.
Assume that superchannel $\theta$ is EB. 
By definition (see Eq.~\eqref{eq:Superchannel_Choi_1}), applying $\theta$ to the maximally entangled input $\Gamma_{A_1A_1}\otimes\Gamma_{A_2A_2}$ produces its Choi operator $J^{\theta}$ that is separable across the bipartition $A_1A_2\,|\,B_1B_2$.
This establishes the ``if'' direction.
We now turn to the necessity.
Assume that the Choi operator $J^{\theta}$ can be written in a separable form $J^{\theta}_{A_1A_2B_1B_2}=\sum_iX_i\otimes Y_i$, where operators $X_i$ and $Y_i$ are acting on systems $A_1A_2$ and $B_1B_2$, respectively. 
Substituting this decomposition into $J^{\theta}\star\rho_{RA_1A_2}$ gives $\sum_iX_i\star\rho_{RA_1A_2}\otimes Y_i$ making it clear that the final state is separable across the bipartition $R\,|\,B_1B_2$.
This establishes the ``only-if'' direction and completes the proof.
\end{proof}

Comparing Thms.~\ref{thm:EB_Superchannel_CC} and~\ref{thm:EB_Superchannel} highlights the core physical distinction between the Chen-Chitambar EB superchannel defined in Ref.~\cite{Chen2020entanglement}, based on the higher-order transformation viewpoint (perspective i in Sec.~\ref{subsec:Bi_Channels}), and the definition we adopt here, rooted in the bipartite channel perspective (perspective ii in Sec.~\ref{subsec:Bi_Channels}).
The distinction stems from the fact that the Choi operator of a superchannel can be separable with respect to different bipartitions. 
In this work, we refer to those superchannels whose Choi operator is separable across the partition $A_1A_2\,|\,B_1B_2$ as {\it Type-II EB}, as illustrated in Fig.~\ref{fig:EBS_Type_2}.

More importantly, although a superchannel is formally defined as a map from channels to channels, this does not imply that an experimenter must probe it by preparing a quantum quantum channel and feeding it in as the input. 
In real experiments, especially when investigating non-Markovian dynamics, one learns about the process by sending in quantum states step by step. 
This practical mode of access naturally gives rise to alternative, and operationally distinct, notions of when a superchannel should be regarded as EB.

\begin{figure}[t]
\centering   
\includegraphics[width=0.38\textwidth]{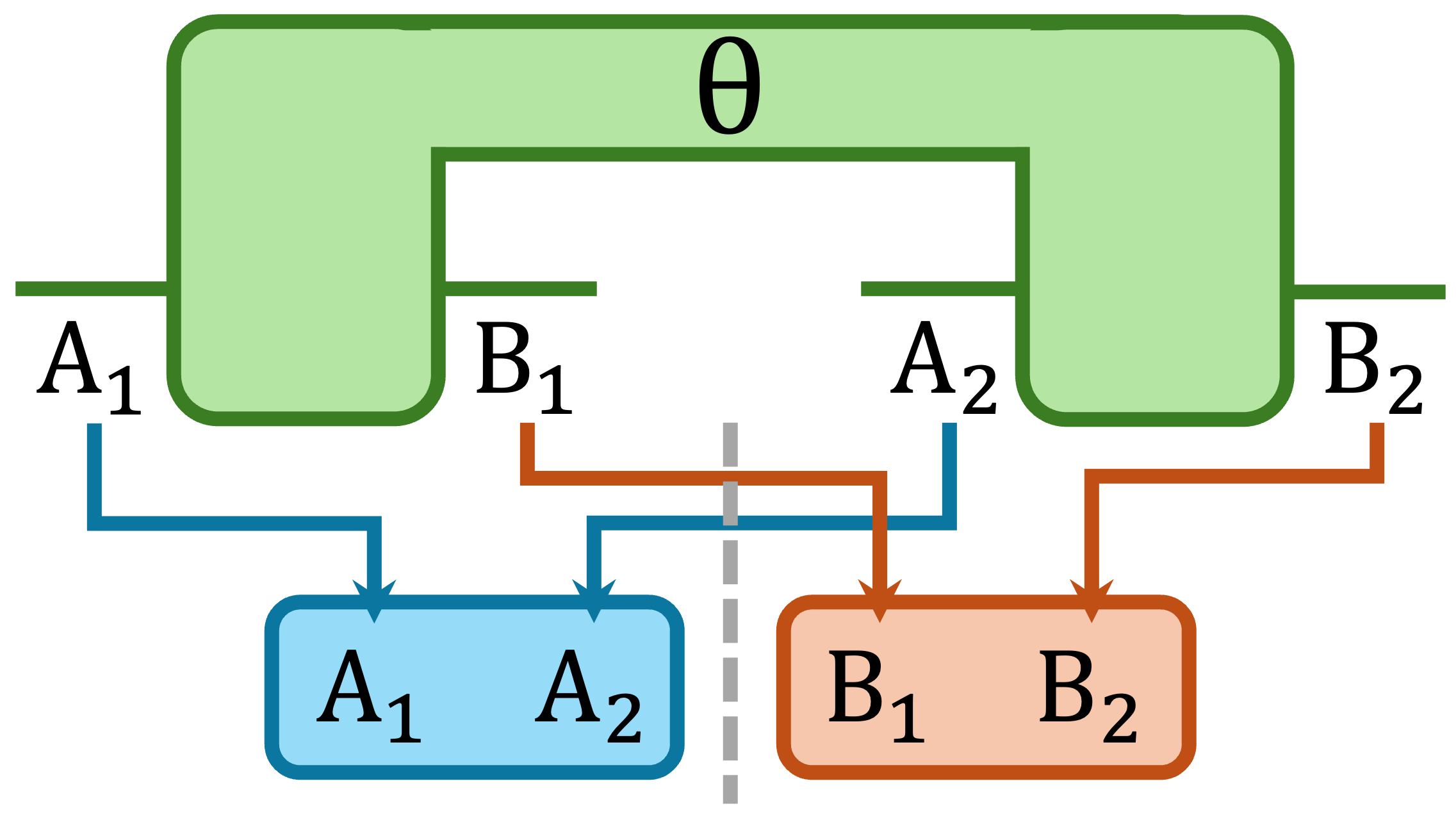}
\caption{(Color online) 
\textbf{Type-II EB Superchannel}.  
We call a superchannel $\theta$ Type-II EB when its Choi operator is separable across the bipartition $A_1A_2\,|\,B_1B_2$.
}
\label{fig:EBS_Type_2}
\end{figure}

Here we are not only concerned with introducing a new mathematical definition of an EB superchannel (see Fig.~\ref{fig:EBS}); 
we also want to understand how such a transformation can be implemented in practice. 
Given the Choi operator of the superchannel $\theta$, the framework developed in this work allows us to extract its Kraus representation and thereby determine the induced map 
$f_{\theta}$ (see Eq.~\eqref{TN:RT_1}).
This, in turn, specifies the corresponding pre-processing and post-processing channels shown in Fig.~\ref{fig:Superchannel}.

Beyond this structural route, one may also seek an explicit implementation directly from a separable decomposition of the Choi operator, much like the channel construction in Eq.~\eqref{TN:EBC_Realization}. 
Without loss of generality, suppose the EB superchannel $\theta$ has a Choi operator of the form (see Thm.~\ref{thm:EB_Superchannel})
\begin{align}
    J^{\theta}_{A_1A_2B_1B_2}=\sum_iX_i\otimes Y_i,
\end{align}
where $X_i$ acts on $A_1A_2$ and $Y_i$ acts on $B_1B_2$.
As a first step toward a physical realization, we reorganize the Choi operator into a form that makes its structure more transparent. 
For each term, we introduce a normalized state $\sigma_i$
\begin{align}
    \sigma_i:=\frac{Y_i}{\Tr[Y_i]},
\end{align} 
and define the associated operator $M_i$ accordingly:
\begin{align}
    M_i:=\Tr[Y_i]X_i.
\end{align}
With these definitions, the Choi operator becomes
\begin{align}
    J^{\theta}_{A_1A_2B_1B_2}=\sum_i M_i\otimes\sigma_i,
\end{align}
and, thanks to the trace-preserving property of $\theta$, the operators $\{M_i\}$ form a POVM acting on systems $A_1A_2$. 

Based on the measurement $\{M_i\}$ and states $\sigma_i$, we obtain the circuit that realizes the superchannel, 
\begin{align}\label{TN:EBS_Realization}
    \raisebox{0ex}{\includegraphics[height=9em]{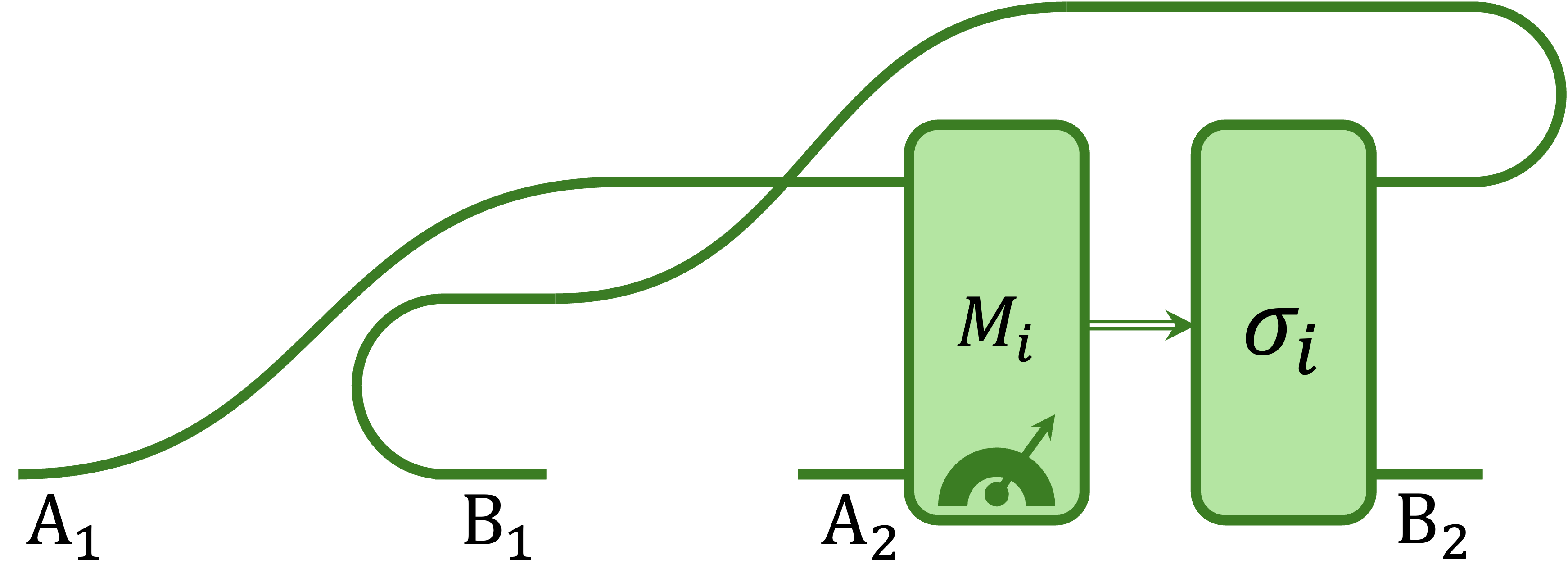}}
\end{align}
extending the channel-level construction in Eq.~\eqref{TN:EBC_Realization} to the multiple time points setting.
In this structure, the post-processing stage is implemented through Bell measurements.
Achieving the same behavior through a fully deterministic operation, i.e., a CPTP map without post-selection, would require additional ingredients that fall outside the scope of this work.
We therefore leave the development of such deterministic realizations to future investigations.

To this point, we have introduced a new way to define EB superchannels, which serves as a temporal counterpart of the multipartite EB structure in Fig.~\ref{fig:EBC_Multi}(b). The construction in Fig.~\ref{fig:EBC_Multi}(a) also suggests a stronger notion: a completely EB superchannel, one that always yields a state separable across all systems, independent of the state Alice prepares in the verification test. 
Looking beyond entanglement breaking, it is natural to consider {\it entanglement annihilation}~\cite{Moravcikova_2010,PhysRevA.88.032316,Aubrun2023} in processes that unfold over multiple time steps, pointing to a richer structure of temporal quantum correlations that calls for deeper examination.

%%%%%%%%%%%%%%%%%%%%%%%%%%%%%%%%%%%%%%%%%%%%%%%%%%%%%%%%%%%%%%%%%%%%%%%
%%%%%%%%%%%%%%%%%%%%%%%%%%%%%%%%%%%%%%%%%%%%%%%%%%%%%%%%%%%%%%%%%%%%%%%

\subsection{Common Cause Breaking Superchannels}
\label{subsec:CCBS}

``Correlation does not imply causation'' ~\cite{Pearl_2009}. 
In quantum theory, correlations can exhibit phenomena with no classical counterpart, most notably entanglement~\cite{RevModPhys.81.865}, Einstein-Podolsky-Rosen steering~\cite{RevModPhys.92.015001}, and Bell nonlocality~\cite{RevModPhys.86.419}. 
Causality itself is also richer in the quantum domain~\cite{Ried2015,Fitzsimons2015}. 
For instance, whereas direct causes and common causes can only be combined probabilistically in classical systems, quantum mechanics allows them to be superposed coherently~\cite{MacLean2017}. 
Quantum theory further provides tools that have no classical analogue for probing causal structure, such as uncertainty relations tailored to causal inference~\cite{PhysRevLett.130.240201}. 
Earlier studies have largely focused on how to break quantum correlations; a natural question is whether similar manipulations can break quantum causal structure.
In this section, paralleling our earlier discussions in Secs.~\ref{subsec:EBC},~\ref{subsec:MEBC}, and~\ref{subsec:EBS}, we introduce the notion of a common cause breaking superchannel.

\begin{figure}[t]
\centering   
\includegraphics[width=0.48\textwidth]{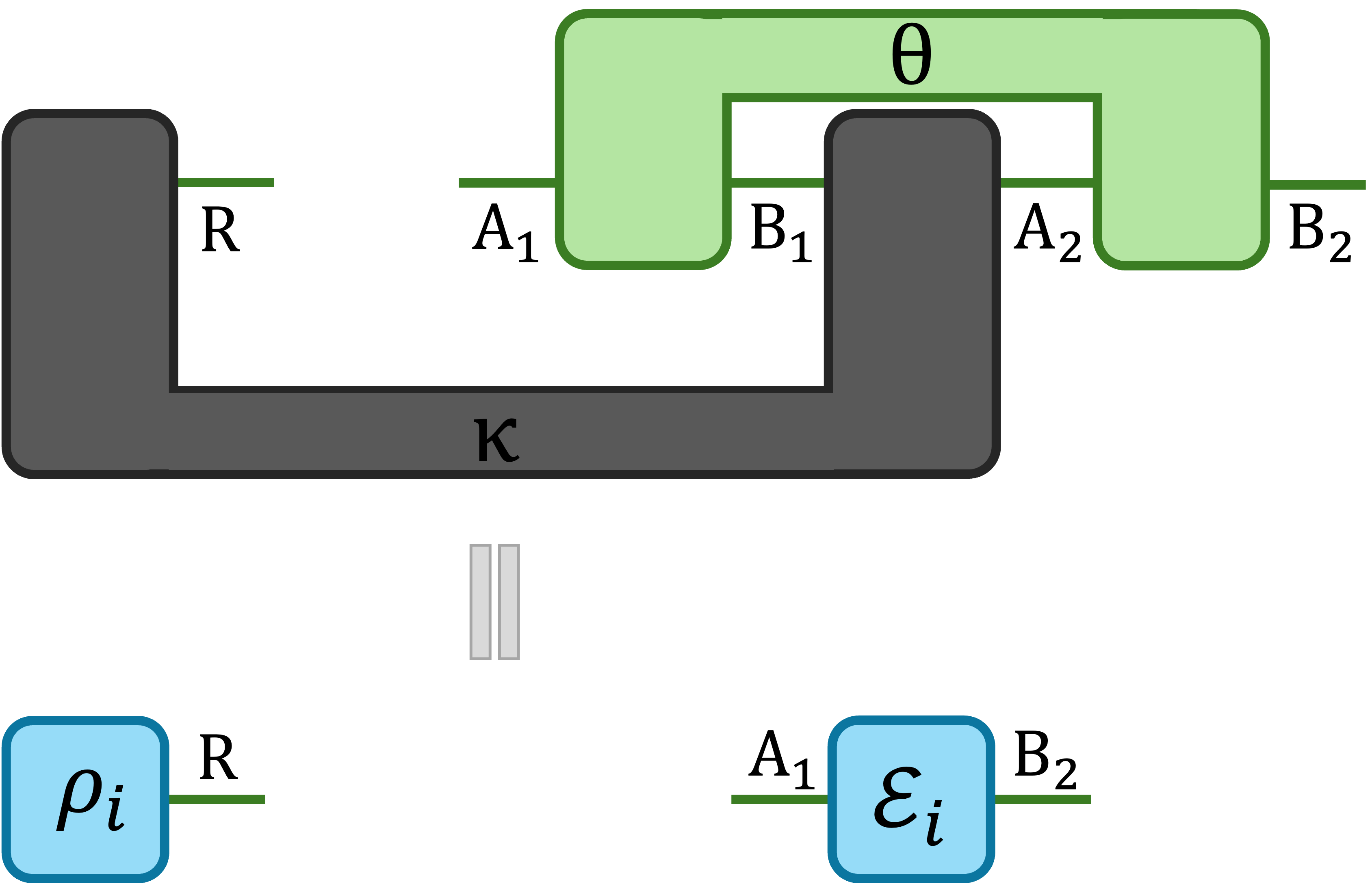}
\caption{(Color online) 
\textbf{Common Cause Breaking Superchannel}.  
A common cause breaking superchannel $\theta$ converts any causal map $\kappa$ into a state preparation on $R$ followed by a channel from $A_1$ to $B_2$, with the two stages linked solely by classical memory. 
The resulting structure is precisely the one shown in Eq.~\eqref{eq:EBS_CC}. In the diagram, the blue marking of 
$\rho_i$ and $\mE_i$ indicates that they share the same classical index $i$, following our diagrammatic convention.
}
\label{fig:CCBS}
\end{figure}

The object of interest is the causal map $\kappa$, depicted in black on the left side of Fig.~\ref{fig:CCBS}. 
Mathematically, $\kappa$ is a special instance of a superchannel in which the pre-processing stage has a trivial input.
We may further process $\kappa$ with another superchannel $\theta$ acting on its second quantum process. 
When $\theta$ is common cause breaking, it removes the quantum linkage between the two stages of $\kappa$: after its action, the global transformation is separable across 
$R\,|\,A_1B_2$ and effectively reduces to preparing a state $\rho_i$ on $R$ followed by a channel $\mE_i$ from 
$A_1$ to $B_2$, correlated only through a classical label; that is
\begin{align}
    \theta(\kappa)=\sum_ip_i\rho_i\otimes\mE_i.
\end{align}
In this work we restrict attention to breaking the quantum component of the common cause; 
the remaining classical index connecting the prepared state and the subsequent channel reflects a purely classical common cause structure.

The common cause breaking condition admits a direct characterization in terms of the superchannel's Choi operator $J^{\theta}$.
We state the resulting theorem below; its proof follows the same structural pattern as that of Thm.~\ref{thm:EB_Superchannel}.

\begin{thm}
[\bf{Common Cause Breaking Superchannel}]
\label{thm:CCBS}
Consider a superchannel $\theta:A_1A_2\to B_1B_2$ and an arbitrary causal map $\kappa:B_1\to RA_2$ through which the subsystems $B_1$ and $A_2$ interact with $\theta$.
The superchannel $\theta$ is said to be common cause breaking, meaning that the resulting dynamics is always separable across the bipartition $R\,|\,A_1B_2$ for every input $\kappa$ (see Fig.~\ref{fig:CCBS}) if and only if its Choi operator $J^{\theta}$ is separable across the partition $A_1B_2\,|\,B_1A_2$.
\end{thm}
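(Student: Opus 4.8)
The plan is to mirror the proof of Thm.~\ref{thm:EB_Superchannel}, handling the composition $\theta(\kappa)$ entirely at the level of Choi operators through the link product. Writing $J^{\theta(\kappa)}=J^{\theta}\star J^{\kappa}$, where the link is taken over the shared systems $B_1A_2$, the resulting operator lives on $RA_1B_2$, and common cause breaking is precisely the statement that $J^{\theta(\kappa)}$ is separable across $R\,|\,A_1B_2$ for every admissible causal map $\kappa$. I would establish the two implications separately: sufficiency, that separability of $J^{\theta}$ forces the output to be separable for all $\kappa$; and necessity, that common cause breaking forces $J^{\theta}$ to be separable.

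For sufficiency, suppose $J^{\theta}=\sum_iP_i\otimes S_i$ with $P_i\geqslant0$ supported on $A_1B_2$ and $S_i\geqslant0$ supported on $B_1A_2$. Substituting into the link product gives
\begin{align}
    J^{\theta(\kappa)}=\sum_iP_i\otimes\left(S_i\star J^{\kappa}\right),
\end{align}
where $S_i\star J^{\kappa}=\Tr_{B_1A_2}[(S_i^{\T_{B_1A_2}}\otimes\1_R)\,J^{\kappa}]$ acts on $R$. The key observation, identical to the one underlying Thm.~\ref{thm:EB_Superchannel}, is that $S_i$ is supported entirely on the contracted systems $B_1A_2$, so the partial transpose in the link product is a full transpose and hence preserves positivity, $S_i^{\T_{B_1A_2}}\geqslant0$. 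Since $J^{\kappa}\geqslant0$, each $S_i\star J^{\kappa}$ is a positive operator on $R$, and the displayed decomposition exhibits $J^{\theta(\kappa)}$ as separable across $R\,|\,A_1B_2$. As this holds for every causal map, $\theta$ is common cause breaking.

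For necessity, I would extract $J^{\theta}$ by feeding $\theta$ the distinguished causal map $\kappa^{*}$ whose Choi operator is $\Gamma_{B_1\tilde B_1}\otimes\Gamma_{A_2\tilde A_2}$, with the reference set to $R=\tilde B_1\tilde A_2$; this is a legitimate causal map, assembled from an identity channel $B_1\to\tilde B_1$ together with a maximally entangled preparation on $A_2\tilde A_2$. The computation already carried out in Eq.~\eqref{eq:Gour_Choi} shows that $J^{\theta}\star(\Gamma_{B_1\tilde B_1}\otimes\Gamma_{A_2\tilde A_2})=G^{\theta}$, so $J^{\theta(\kappa^{*})}$ coincides with the Gour operator, which by Thm.~\ref{thm:ET} is $J^{\theta}$ up to a permutation of subsystems. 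Invoking the common cause breaking hypothesis for this particular $\kappa^{*}$ forces $G^{\theta}$ to be separable across $R\,|\,A_1B_2=\tilde B_1\tilde A_2\,|\,A_1B_2$; undoing the relabeling $\tilde B_1\to B_1$, $\tilde A_2\to A_2$ then yields separability of $J^{\theta}$ across $B_1A_2\,|\,A_1B_2$, which is exactly the asserted partition $A_1B_2\,|\,B_1A_2$.

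The main obstacle I anticipate lies in the necessity direction, specifically in verifying that $\kappa^{*}$ qualifies as an admissible causal map in the sense of Fig.~\ref{fig:CCBS} (trivial pre-input, correct ordering between $R$ and $A_2$), together with careful bookkeeping of how the bipartition $R\,|\,A_1B_2$ is transported back through the permutation relating $G^{\theta}$ and $J^{\theta}$. By contrast, the positivity step in sufficiency is routine once the full-transpose observation is in place, so the conceptual weight of the argument rests on selecting the extracting map $\kappa^{*}$ correctly and confirming its physical validity.
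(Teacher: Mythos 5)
Your sufficiency half is correct, and it is exactly the substitution argument the paper has in mind (the paper's proof is declared to follow the pattern of Thm.~\ref{thm:EB_Superchannel}, and this is the corresponding direction of that pattern). The genuine gap is in the necessity half, at precisely the point you flagged and then asserted away: $\kappa^{*}$ is \emph{not} an admissible causal map under the paper's definition. A causal map is a superchannel whose pre-processing has trivial input (Sec.~\ref{subsec:CCBS}): it first prepares a state $\rho_{RE}$, releasing $R$ before $B_1$ even exists, and only afterwards applies its second process $N:EB_1\to A_2$, which is the process $\theta$ acts on. Consequently every admissible $\kappa$ obeys the no-signaling constraint $\Tr_{A_2}[J^{\kappa}_{B_1RA_2}]=\1_{B_1}\otimes\sigma_R$ with $\sigma_R=\Tr_{E}[\rho_{RE}]$: the reference output $R$ cannot depend on $B_1$. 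Your $\kappa^{*}$, which contains an identity channel $B_1\to\tilde{B}_1$ with $\tilde{B}_1$ a part of $R$, violates this maximally, since $\Tr_{A_2}[J^{\kappa^{*}}]=\Gamma_{B_1\tilde{B}_1}\otimes\1_{\tilde{A}_2}$. So the Choi-extraction step, which carries the entire weight of the necessity direction, is simply not available; your worry about ``correct ordering between $R$ and $A_2$'' was the fatal issue, not a bookkeeping point.

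Moreover, under this reading the gap cannot be repaired by a cleverer test map. Consider the superchannel $\theta_0$ that routes $A_1$ identically into $B_1$, discards $A_2$, and prepares a fixed state $\tau$ on $B_2$, so that $J^{\theta_0}=\Gamma_{A_1B_1}\otimes\1_{A_2}\otimes\tau_{B_2}$ (one checks Eqs.~\eqref{eq:Superchannel_Choi_CP}--\eqref{eq:Superchannel_Choi_NS} directly). Using $(\Gamma_{A_1B_1})^{\T_{B_1}}=\swap_{A_1B_1}$, $\Tr_{B_1}[\swap_{A_1B_1}]=\1_{A_1}$, and the no-signaling constraint above, one finds $J^{\theta_0(\kappa)}=J^{\theta_0}\star J^{\kappa}=\1_{A_1}\otimes\sigma_R\otimes\tau_{B_2}$ for \emph{every} admissible causal map $\kappa$ --- always a product across $R\,|\,A_1B_2$ --- even though $J^{\theta_0}$ is entangled across $A_1B_2\,|\,B_1A_2$. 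Hence, with causal maps structured as the paper describes them ($R$ emitted by the trivial-input preparation), common cause breaking does not force Type-I separability at all, and no argument of this shape can close the ``only if'' direction. Your proof becomes correct only if the causal map is given the opposite internal ordering --- the preparation emits $A_2$, and $R$ is released only after $B_1$ arrives --- for then $\kappa^{*}$ is admissible (up to the normalization $\Gamma_{A_2\tilde{A}_2}/d_{A_2}$), $J^{\theta(\kappa^{*})}$ equals $G^{\theta}$ by the computation in Eq.~\eqref{eq:Gour_Choi}, and your permutation bookkeeping via Thm.~\ref{thm:ET} finishes the argument. A complete proof must therefore begin by pinning down this definitional point; as written, your proposal neither verifies admissibility nor notices that, under the paper's stated convention for $\kappa$, it fails.
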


These observations place common cause breaking superchannels within our Type-I EB framework (see Fig.~\ref{fig:EBS_Type_1}), providing a direct bridge between quantum causal inference and the dynamical entanglement viewpoint developed here.
We continue to use the term ``causal map'' to follow the original terminology introduced in Ref.~\cite{Ried2015}, where this form of quantum dynamics was first studied. 
For readers seeking a fuller account of causal maps and their relation to quantum entanglement, Ref.~\cite{Milz2021} provides an in-depth discussion.

%%%%%%%%%%%%%%%%%%%%%%%%%%%%%%%%%%%%%%%%%%%%%%%%%%%%%%%%%%%%%%%%%%%%%%%
%%%%%%%%%%%%%%%%%%%%%%%%%%%%%%%%%%%%%%%%%%%%%%%%%%%%%%%%%%%%%%%%%%%%%%%

\section{Discussions}
\label{sec:Discussions}

Quantum channels offer a unified framework for describing state preparation, gate implementation, and measurement readout, supported by a well-established suite of analytical and computational tools. 
The Choi–Jamio\l kowski isomorphism (see Sec.~\ref{subsec:Choi}) streamlines optimization tasks, Kraus operators (see Sec.~\ref{subsec:Kraus}) capture physical noise across hardware platforms, and Stinespring dilation (see Sec.~\ref{subsec:Stinespring}) underlie master equations and non-Markovian dynamics. 
The Liouville superoperator (see Sec.~\ref{subsec:Liouville}) is equally essential for simulating noisy circuits, and extracting decoherence rates.
Together, these representations form a powerful toolkit for quantum information processing. 
A qualitative shift arises, however, when we move from analyzing channels to manipulating them: superchannels operate at a higher tier of dynamics, where the familiar tools must be extended and reinterpreted.

In this work, we resolve the inconsistency between competing Choi representations of superchannels (see Thm.~\ref{thm:ET}) and use this resolution as the foundation for a complete and operationally grounded toolkit for higher-order quantum dynamics, encompassing the Kraus decomposition (see Thm.~\ref{thm:Superchannel_Krasu}), Stinespring dilation (see Thm.~\ref{thm:Superchannel_Stinespring}), and Liouville superoperator (see Thm.~\ref{thm:Superchannel_Liouville}) for superchannels. 
Building on this framework, we develop new approaches to characterizing entanglement breaking and common cause breaking superchannels, exposing structural features that were previously inaccessible. 
Our guiding principle is the generalized Occam's razor introduced in Sec.~\ref{subsec:Bi_Channels}, which favors formulations that remain minimal while fully consistent with physical constraints and established theories. 
These advances place superchannels on a coherent footing and open the way to a systematic study of adaptive and multi-time quantum dynamics. 
As our framework includes the standard toolkit for quantum channels as a special case, we expect it to have broad implications for understanding memory effects and other non-Markovian phenomena in quantum systems.

Technically, our results are grounded in a tensor-network formulation that not only yields new structural insights into superchannels, but also provides an alternative and markedly simplified proof of the realization theorem. 
This framework further offers a clear and intuitive way to determine the minimal memory required to simulate a given superchannel. 
Because the essential object in our construction is a linear transformation from CP maps to CP maps, the same techniques extend naturally to settings with multiple time steps and higher-order transformation hierarchies, offering a coherent path toward understanding general non-Markovian dynamics~\cite{PhysRevLett.101.150402,RevModPhys.89.015001,PhysRevA.97.012127,LI20181,PRXQuantum.2.030201} and even processes with indefinite causal order~\cite{Oreshkov2012,PhysRevA.88.022318}. 
At the same time, tensor networks occupy a central place in quantum many-body physics~\cite{Verstraete01032008,ORUS2014117,Orus2019,RevModPhys.93.045003}, numerical simulation~\cite{Markov2008,PhysRevLett.128.030501,PhysRevLett.129.090502}, and quantum error correction~\cite{PhysRevLett.113.030501,PhysRevLett.119.040502,PhysRevLett.127.040507,PRXQuantum.3.020332}; our formulation points to dynamical analogues in these areas, where memory effects become a defining structural feature rather than an incidental complication. 
Taken together, these observations reveal a broader landscape that is only beginning to be explored and that our results help to open.

%%%%%%%%%%%%%%%%%%%%%%%%%%%%%%%%%%%%%%%%%%%%%%%%%%%%%%%%%%%%%%%%%%%%%
% \noindent \textbf{Acknowledgments}--
\section*{Acknowledgments}

Yunlong Xiao acknowledges support from A*STAR under its Career Development Fund (C243512002).

%%%%%%%%%%%%%%%%%%%%%%%%%%%%%%%%%%%%%%%%%%%%%%%%%%%%%%%%%%%%%%%%%%%%%

\bibliography{Bib}

@misc{wood2015tensornetworksgraphicalcalculus,
      title={Tensor networks and graphical calculus for open quantum systems}, 
      author={Christopher J. Wood and Jacob D. Biamonte and David G. Cory},
      year={2015},
      eprint={1111.6950},
      archivePrefix={arXiv},
      primaryClass={quant-ph},
      url={https://arxiv.org/abs/1111.6950}, 
}

@article{Bridgeman_2017,
doi = {10.1088/1751-8121/aa6dc3},
url = {https://doi.org/10.1088/1751-8121/aa6dc3},
year = {2017},
month = {may},
publisher = {IOP Publishing},
volume = {50},
number = {22},
pages = {223001},
author = {Bridgeman, Jacob C and Chubb, Christopher T},
title = {Hand-waving and interpretive dance: an introductory course on tensor networks},
journal = {Journal of Physics A: Mathematical and Theoretical},
abstract = {The curse of dimensionality associated with the Hilbert space of spin systems provides a significant obstruction to the study of condensed matter systems. Tensor networks have proven an important tool in attempting to overcome this difficulty in both the numerical and analytic regimes.

These notes form the basis for a seven lecture course, introducing the basics of a range of common tensor networks and algorithms. In particular, we cover: introductory tensor network notation, applications to quantum information, basic properties of matrix product states, a classification of quantum phases using tensor networks, algorithms for finding matrix product states, basic properties of projected entangled pair states, and multiscale entanglement renormalisation ansatz states.

The lectures are intended to be generally accessible, although the relevance of many of the examples may be lost on students without a background in many-body physics/quantum information. For each lecture, several problems are given, with worked solutions in an ancillary file.}
}

@misc{biamonte2020lecturesquantumtensornetworks,
      title={Lectures on Quantum Tensor Networks}, 
      author={Jacob Biamonte},
      year={2020},
      eprint={1912.10049},
      archivePrefix={arXiv},
      primaryClass={quant-ph},
      url={https://arxiv.org/abs/1912.10049}, 
}

@book{Collura_2024,
   title={Tensor Network Techniques for Quantum Computation},
   ISBN={9788898587049},
   url={http://dx.doi.org/10.22323/9788898587049},
   DOI={10.22323/9788898587049},
   publisher={SISSA Medialab s.r.l.},
   author={Collura, Mario and Lami, Guglielmo and Ranabhat, Nishan and Santini, Alessandro},
   year={2024},
   month=dec }

@book{Watrous_2018, place={Cambridge}, title={The Theory of Quantum Information}, publisher={Cambridge University Press}, url={https://www.cambridge.org/core/books/theory-of-quantum-information/AE4AA5638F808D2CFEB070C55431D897}, author={Watrous, John}, year={2018}}

@book{Sober_2015, place={Cambridge}, title={Ockham’s Razors: A User’s Manual}, publisher={Cambridge University Press}, url={https://www.cambridge.org/core/books/ockhams-razors/0DEEC46B665F579A23B6000AD006E616}, author={Sober, Elliott}, year={2015}}

@article{z2pr-zbwl,
  title = {No Practical Quantum Broadcasting: Even Virtually},
  author = {Xiao, Yunlong and Liu, Xiangjing and Liu, Zhenhuan},
  journal = {Phys. Rev. Lett.},
  volume = {135},
  issue = {9},
  pages = {090202},
  numpages = {6},
  year = {2025},
  month = {Aug},
  publisher = {American Physical Society},
  doi = {10.1103/z2pr-zbwl},
  url = {https://link.aps.org/doi/10.1103/z2pr-zbwl}
}

@article{8g6j-w7ld,
  title = {No practical quantum broadcasting: General framework},
  author = {Xiao, Yunlong and Liu, Xiangjing and Liu, Zhenhuan},
  journal = {Phys. Rev. Res.},
  volume = {7},
  issue = {3},
  pages = {033194},
  numpages = {24},
  year = {2025},
  month = {Aug},
  publisher = {American Physical Society},
  doi = {10.1103/8g6j-w7ld},
  url = {https://link.aps.org/doi/10.1103/8g6j-w7ld}
}

@article{Chiribella_2008,
doi = {10.1209/0295-5075/83/30004},
url = {https://doi.org/10.1209/0295-5075/83/30004},
year = {2008},
month = {jul},
publisher = {},
volume = {83},
number = {3},
pages = {30004},
author = {Chiribella, G. and D'Ariano, G. M. and Perinotti, P.},
title = {Transforming quantum operations: Quantum supermaps},
journal = {Europhysics Letters},
abstract = {We introduce the concept of quantum supermap, describing the most general transformation that maps an input quantum operation into an output quantum operation. Since quantum operations include as special cases quantum states, effects, and measurements, quantum supermaps describe all possible transformations between elementary quantum objects (quantum systems as well as quantum devices). After giving the axiomatic definition of supermap, we prove a realization theorem, which shows that any supermap can be physically implemented as a simple quantum circuit. Applications to quantum programming, cloning, discrimination, estimation, information-disturbance trade-off, and tomography of channels are outlined.}
}

@article{PhysRevLett.101.060401,
  title = {Quantum Circuit Architecture},
  author = {Chiribella, G. and D'Ariano, G. M. and Perinotti, P.},
  journal = {Phys. Rev. Lett.},
  volume = {101},
  issue = {6},
  pages = {060401},
  numpages = {4},
  year = {2008},
  month = {Aug},
  publisher = {American Physical Society},
  doi = {10.1103/PhysRevLett.101.060401},
  url = {https://link.aps.org/doi/10.1103/PhysRevLett.101.060401}
}

@ARTICLE{8678741,
  author={Gour, Gilad},
  journal={IEEE Transactions on Information Theory}, 
  title={Comparison of Quantum Channels by Superchannels}, 
  year={2019},
  volume={65},
  number={9},
  pages={5880-5904},
  keywords={Quantum mechanics;Entropy;Programming;Testing;Indexes;Heart;Uncertainty;Quantum hypothesis testing;comparison of quantum channels;extended conditional min-entropy},
  doi={10.1109/TIT.2019.2907989}}

@article{PhysRevA.80.022339,
  title = {Theoretical framework for quantum networks},
  author = {Chiribella, Giulio and D'Ariano, Giacomo Mauro and Perinotti, Paolo},
  journal = {Phys. Rev. A},
  volume = {80},
  issue = {2},
  pages = {022339},
  numpages = {20},
  year = {2009},
  month = {Aug},
  publisher = {American Physical Society},
  doi = {10.1103/PhysRevA.80.022339},
  url = {https://link.aps.org/doi/10.1103/PhysRevA.80.022339}
}

@article{Chen2020entanglement,
  doi = {10.22331/q-2020-07-16-299},
  url = {https://doi.org/10.22331/q-2020-07-16-299},
  title = {Entanglement-breaking superchannels},
  author = {Chen, Senrui and Chitambar, Eric},
  journal = {{Quantum}},
  issn = {2521-327X},
  publisher = {{Verein zur F{\"{o}}rderung des Open Access Publizierens in den Quantenwissenschaften}},
  volume = {4},
  pages = {299},
  month = jul,
  year = {2020}
}

@article{Chitambar2014,
	abstract = {In this paper we study the subset of generalized quantum measurements on finite dimensional systems known as local operations and classical communication (LOCC). While LOCC emerges as the natural class of operations in many important quantum information tasks, its mathematical structure is complex and difficult to characterize. Here we provide a precise description of LOCC and related operational classes in terms of quantum instruments. Our formalism captures both finite round protocols as well as those that utilize an unbounded number of communication rounds. While the set of LOCC is not topologically closed, we show that finite round LOCC constitutes a compact subset of quantum operations. Additionally we show the existence of an open ball around the completely depolarizing map that consists entirely of LOCC implementable maps. Finally, we demonstrate a two-qubit map whose action can be approached arbitrarily close using LOCC, but nevertheless cannot be implemented perfectly.},
	author = {Chitambar, Eric and Leung, Debbie and Man{\v c}inska, Laura and Ozols, Maris and Winter, Andreas},
	date = {2014/05/01},
	date-added = {2025-11-24 19:41:27 +0800},
	date-modified = {2025-11-24 19:41:27 +0800},
	doi = {10.1007/s00220-014-1953-9},
	id = {Chitambar2014},
	isbn = {1432-0916},
	journal = {Communications in Mathematical Physics},
	number = {1},
	pages = {303--326},
	title = {Everything You Always Wanted to Know About LOCC (But Were Afraid to Ask)},
	url = {https://doi.org/10.1007/s00220-014-1953-9},
	volume = {328},
	year = {2014},
	bdsk-url-1 = {https://doi.org/10.1007/s00220-014-1953-9}}

@article{RevModPhys.81.865,
  title = {Quantum entanglement},
  author = {Horodecki, Ryszard and Horodecki, Pawe\l{} and Horodecki, Micha\l{} and Horodecki, Karol},
  journal = {Rev. Mod. Phys.},
  volume = {81},
  issue = {2},
  pages = {865--942},
  numpages = {0},
  year = {2009},
  month = {Jun},
  publisher = {American Physical Society},
  doi = {10.1103/RevModPhys.81.865},
  url = {https://link.aps.org/doi/10.1103/RevModPhys.81.865}
}

@article{PhysRevX.8.021033,
  title = {Resource Theory of Quantum Memories and Their Faithful Verification with Minimal Assumptions},
  author = {Rosset, Denis and Buscemi, Francesco and Liang, Yeong-Cherng},
  journal = {Phys. Rev. X},
  volume = {8},
  issue = {2},
  pages = {021033},
  numpages = {15},
  year = {2018},
  month = {May},
  publisher = {American Physical Society},
  doi = {10.1103/PhysRevX.8.021033},
  url = {https://link.aps.org/doi/10.1103/PhysRevX.8.021033}
}

@article{Horodecki2023Entanglement,
author = {Horodecki, Michael and Shor, Peter W. and Ruskai, Mary Beth},
title = {Entanglement Breaking Channels},
journal = {Reviews in Mathematical Physics},
volume = {15},
number = {06},
pages = {629-641},
year = {2003},
doi = {10.1142/S0129055X03001709},

URL = { 
    
        https://doi.org/10.1142/S0129055X03001709
    
    

},
abstract = { This paper studies the class of stochastic maps, or channels, for which (I⊗Φ)(Γ) is always separable (even for entangled Γ). Such maps are called entanglement breaking, and can always be written in the form Φ(ρ)=∑kRkTrFkρ where each Rk is a density matrix and Fk>0. If, in addition, Φ is trace-preserving, the {Fk} must form a positive operator valued measure (POVM). Some special classes of these maps are considered and other characterizations given. Since the set of entanglement-breaking trace-preserving maps is convex, it can be characterized by its extreme points. The only extreme points of the set of completely positive trace preserving maps which are also entanglement breaking are those known as classical-quantum or CQ. However, for d≥3, the set of entanglement breaking maps has additional extreme points which are not extreme CQ maps. }
}

@article{PhysRevLett.125.180505,
  title = {Dynamical Entanglement},
  author = {Gour, Gilad and Scandolo, Carlo Maria},
  journal = {Phys. Rev. Lett.},
  volume = {125},
  issue = {18},
  pages = {180505},
  numpages = {6},
  year = {2020},
  month = {Oct},
  publisher = {American Physical Society},
  doi = {10.1103/PhysRevLett.125.180505},
  url = {https://link.aps.org/doi/10.1103/PhysRevLett.125.180505}
}

@book{Pearl_2009, place={Cambridge}, edition={2}, title={Causality}, publisher={Cambridge University Press}, author={Pearl, Judea}, url={https://www.cambridge.org/core/books/causality/B0046844FAE10CBF274D4ACBDAEB5F5B}, year={2009}}

@article{RevModPhys.92.015001,
  title = {Quantum steering},
  author = {Uola, Roope and Costa, Ana C. S. and Nguyen, H. Chau and G\"uhne, Otfried},
  journal = {Rev. Mod. Phys.},
  volume = {92},
  issue = {1},
  pages = {015001},
  numpages = {40},
  year = {2020},
  month = {Mar},
  publisher = {American Physical Society},
  doi = {10.1103/RevModPhys.92.015001},
  url = {https://link.aps.org/doi/10.1103/RevModPhys.92.015001}
}

@article{RevModPhys.86.419,
  title = {Bell nonlocality},
  author = {Brunner, Nicolas and Cavalcanti, Daniel and Pironio, Stefano and Scarani, Valerio and Wehner, Stephanie},
  journal = {Rev. Mod. Phys.},
  volume = {86},
  issue = {2},
  pages = {419--478},
  numpages = {60},
  year = {2014},
  month = {Apr},
  publisher = {American Physical Society},
  doi = {10.1103/RevModPhys.86.419},
  url = {https://link.aps.org/doi/10.1103/RevModPhys.86.419}
}

@article{Ried2015,
	abstract = {The problem of inferring causal relations from observed correlations is relevant to a wide variety of scientific disciplines. Yet given the correlations between just two classical variables, it is impossible to determine whether they arose from a causal influence of one on the other or a common cause influencing both. Only a randomized trial can settle the issue. Here we consider the problem of causal inference for quantum variables. We show that the analogue of a randomized trial, causal tomography, yields a complete solution. We also show that, in contrast to the classical case, one can sometimes infer the causal structure from observations alone. We implement a quantum-optical experiment wherein we control the causal relation between two optical modes, and two measurement schemes---with and without randomization---that extract this relation from the observed correlations. Our results show that entanglement and quantum coherence provide an advantage for causal inference.},
	author = {Ried, Katja and Agnew, Megan and Vermeyden, Lydia and Janzing, Dominik and Spekkens, Robert W. and Resch, Kevin J.},
	date = {2015/05/01},
	date-added = {2025-11-26 22:07:08 +0800},
	date-modified = {2025-11-26 22:07:08 +0800},
	doi = {10.1038/nphys3266},
	id = {Ried2015},
	isbn = {1745-2481},
	journal = {Nature Physics},
	number = {5},
	pages = {414--420},
	title = {A quantum advantage for inferring causal structure},
	url = {https://doi.org/10.1038/nphys3266},
	volume = {11},
	year = {2015},
	bdsk-url-1 = {https://doi.org/10.1038/nphys3266}}

@article{Fitzsimons2015,
	abstract = {In ordinary, non-relativistic, quantum physics, time enters only as a parameter and not as an observable: a state of a physical system is specified at a given time and then evolved according to the prescribed dynamics. While the state can and usually does, extend across all space, it is only defined at one instant of time. Here we ask what would happen if we defined the notion of the quantum density matrix for multiple spatial and temporal measurements. We introduce the concept of a pseudo-density matrix (PDM) which treats space and time indiscriminately. This matrix in general fails to be positive for measurement events which do not occur simultaneously, motivating us to define a measure of causality that discriminates between spatial and temporal correlations. Important properties of this measure, such as monotonicity under local operations, are proved. Two qubit NMR experiments are presented that illustrate how a temporal pseudo-density matrix approaches a genuinely allowed density matrix as the amount of decoherence is increased between two consecutive measurements.},
	author = {Fitzsimons, Joseph F. and Jones, Jonathan A. and Vedral, Vlatko},
	date = {2015/12/17},
	date-added = {2025-11-26 22:10:26 +0800},
	date-modified = {2025-11-26 22:10:26 +0800},
	doi = {10.1038/srep18281},
	id = {Fitzsimons2015},
	isbn = {2045-2322},
	journal = {Scientific Reports},
	number = {1},
	pages = {18281},
	title = {Quantum correlations which imply causation},
	url = {https://doi.org/10.1038/srep18281},
	volume = {5},
	year = {2015},
	bdsk-url-1 = {https://doi.org/10.1038/srep18281}}

@article{MacLean2017,
	abstract = {Understanding the causal influences that hold among parts of a system is critical both to explaining that system's natural behaviour and to controlling it through targeted interventions. In a quantum world, understanding causal relations is equally important, but the set of possibilities is far richer. The two basic ways in which a pair of time-ordered quantum systems may be causally related are by a cause-effect mechanism or by a common-cause acting on both. Here we show a coherent mixture of these two possibilities. We realize this nonclassical causal relation in a quantum optics experiment and derive a set of criteria for witnessing the coherence based on a quantum version of Berkson's effect, whereby two independent causes can become correlated on observation of their common effect. The interplay of causality and quantum theory lies at the heart of challenging foundational puzzles, including Bell's theorem and the search for quantum gravity.},
	author = {MacLean, Jean-Philippe W. and Ried, Katja and Spekkens, Robert W. and Resch, Kevin J.},
	date = {2017/05/09},
	date-added = {2025-11-26 22:11:32 +0800},
	date-modified = {2025-11-26 22:11:32 +0800},
	doi = {10.1038/ncomms15149},
	id = {MacLean2017},
	isbn = {2041-1723},
	journal = {Nature Communications},
	number = {1},
	pages = {15149},
	title = {Quantum-coherent mixtures of causal relations},
	url = {https://doi.org/10.1038/ncomms15149},
	volume = {8},
	year = {2017},
	bdsk-url-1 = {https://doi.org/10.1038/ncomms15149}}

@article{PhysRevLett.130.240201,
  title = {Quantum Uncertainty Principles for Measurements with Interventions},
  author = {Xiao, Yunlong and Yang, Yuxiang and Wang, Ximing and Liu, Qing and Gu, Mile},
  journal = {Phys. Rev. Lett.},
  volume = {130},
  issue = {24},
  pages = {240201},
  numpages = {7},
  year = {2023},
  month = {Jun},
  publisher = {American Physical Society},
  doi = {10.1103/PhysRevLett.130.240201},
  url = {https://link.aps.org/doi/10.1103/PhysRevLett.130.240201}
}

@book{Coecke_Kissinger_2017, place={Cambridge}, title={Picturing Quantum Processes: A First Course in Quantum Theory and Diagrammatic Reasoning}, publisher={Cambridge University Press}, author={Coecke, Bob and Kissinger, Aleks}, url={https://www.cambridge.org/core/books/picturing-quantum-processes/1119568B3101F3A685BE832FEEC53E52}, year={2017}}

@Article{Milz2021,
	title={{Genuine multipartite entanglement in time}},
	author={Simon Milz and Cornelia Spee and Zhen-Peng Xu and Felix A. Pollock and Kavan Modi and Otfried Gühne},
	journal={SciPost Phys.},
	volume={10},
	pages={141},
	year={2021},
	publisher={SciPost},
	doi={10.21468/SciPostPhys.10.6.141},
	url={https://scipost.org/10.21468/SciPostPhys.10.6.141},
}

@ARTICLE{959270,
  author={Rains, E.M.},
  journal={IEEE Transactions on Information Theory}, 
  title={A semidefinite program for distillable entanglement}, 
  year={2001},
  volume={47},
  number={7},
  pages={2921-2933},
  keywords={Protocols},
  doi={10.1109/18.959270}}

@book{Nielsen_Chuang_2010, place={Cambridge}, title={Quantum Computation and Quantum Information: 10th Anniversary Edition}, publisher={Cambridge University Press}, author={Nielsen, Michael A. and Chuang, Isaac L.}, url={https://www.cambridge.org/highereducation/books/quantum-computation-and-quantum-information/01E10196D0A682A6AEFFEA52D53BE9AE#overview}, year={2010}}

@book{Wilde_2017, place={Cambridge}, edition={2}, title={Quantum Information Theory}, publisher={Cambridge University Press}, author={Wilde, Mark M.}, url={https://www.cambridge.org/core/books/quantum-information-theory/247A740E156416531AA8CB97DFDAE438}, year={2017}}

@article{RevModPhys.86.1203,
  title = {Quantum channels and memory effects},
  author = {Caruso, Filippo and Giovannetti, Vittorio and Lupo, Cosmo and Mancini, Stefano},
  journal = {Rev. Mod. Phys.},
  volume = {86},
  issue = {4},
  pages = {1203--1259},
  numpages = {57},
  year = {2014},
  month = {Dec},
  publisher = {American Physical Society},
  doi = {10.1103/RevModPhys.86.1203},
  url = {https://link.aps.org/doi/10.1103/RevModPhys.86.1203}
}

@misc{khatri2024principlesquantumcommunicationtheory,
      title={Principles of Quantum Communication Theory: A Modern Approach}, 
      author={Sumeet Khatri and Mark M. Wilde},
      year={2024},
      eprint={2011.04672},
      archivePrefix={arXiv},
      primaryClass={quant-ph},
      url={https://arxiv.org/abs/2011.04672}, 
}

@misc{wolf2012quantum,
  title={Quantum channels and operations-guided tour},
  author={Wolf, Michael M},
  url={https://mediatum.ub.tum.de/node?id=1701036},
  year={2012}
}

@article{PhysRevA.51.2738,
  title = {Quantum coding},
  author = {Schumacher, Benjamin},
  journal = {Phys. Rev. A},
  volume = {51},
  issue = {4},
  pages = {2738--2747},
  numpages = {0},
  year = {1995},
  month = {Apr},
  publisher = {American Physical Society},
  doi = {10.1103/PhysRevA.51.2738},
  url = {https://link.aps.org/doi/10.1103/PhysRevA.51.2738}
}

@article{PhysRevLett.69.2881,
  title = {Communication via one- and two-particle operators on Einstein-Podolsky-Rosen states},
  author = {Bennett, Charles H. and Wiesner, Stephen J.},
  journal = {Phys. Rev. Lett.},
  volume = {69},
  issue = {20},
  pages = {2881--2884},
  numpages = {0},
  year = {1992},
  month = {Nov},
  publisher = {American Physical Society},
  doi = {10.1103/PhysRevLett.69.2881},
  url = {https://link.aps.org/doi/10.1103/PhysRevLett.69.2881}
}

@article{PhysRevLett.70.1895,
  title = {Teleporting an unknown quantum state via dual classical and Einstein-Podolsky-Rosen channels},
  author = {Bennett, Charles H. and Brassard, Gilles and Cr\'epeau, Claude and Jozsa, Richard and Peres, Asher and Wootters, William K.},
  journal = {Phys. Rev. Lett.},
  volume = {70},
  issue = {13},
  pages = {1895--1899},
  numpages = {0},
  year = {1993},
  month = {Mar},
  publisher = {American Physical Society},
  doi = {10.1103/PhysRevLett.70.1895},
  url = {https://link.aps.org/doi/10.1103/PhysRevLett.70.1895}
}

@misc{xing2023fundamentallimitationscommunicationquantum,
      title={Fundamental Limitations on Communication over a Quantum Network}, 
      author={Junjing Xing and Tianfeng Feng and Zhaobing Fan and Haitao Ma and Kishor Bharti and Dax Enshan Koh and Yunlong Xiao},
      year={2023},
      eprint={2306.04983},
      archivePrefix={arXiv},
      primaryClass={quant-ph},
      url={https://arxiv.org/abs/2306.04983}, 
}

@article{glc7-xy8t,
  title = {Temporal Asymmetry in Entanglement Distillation},
  author = {Li, Yuhang and Xing, Junjing and Qu, Dengke and Gao, Huixia and Xiao, Lei and Liu, Jin-Ming and Xiao, Yunlong and Xue, Peng},
  journal = {Phys. Rev. Lett.},
  volume = {135},
  issue = {17},
  pages = {170801},
  numpages = {7},
  year = {2025},
  month = {Oct},
  publisher = {American Physical Society},
  doi = {10.1103/glc7-xy8t},
  url = {https://link.aps.org/doi/10.1103/glc7-xy8t}
}

@article{PhysRevLett.79.321,
  title = {Programmable Quantum Gate Arrays},
  author = {Nielsen, M. A. and Chuang, Isaac L.},
  journal = {Phys. Rev. Lett.},
  volume = {79},
  issue = {2},
  pages = {321--324},
  numpages = {0},
  year = {1997},
  month = {Jul},
  publisher = {American Physical Society},
  doi = {10.1103/PhysRevLett.79.321},
  url = {https://link.aps.org/doi/10.1103/PhysRevLett.79.321}
}

@article{PhysRevLett.122.080505,
  title = {Resource Quantification for the No-Programing Theorem},
  author = {Kubicki, Aleksander M. and Palazuelos, Carlos and P\'erez-Garc\'{\i}a, David},
  journal = {Phys. Rev. Lett.},
  volume = {122},
  issue = {8},
  pages = {080505},
  numpages = {6},
  year = {2019},
  month = {Feb},
  publisher = {American Physical Society},
  doi = {10.1103/PhysRevLett.122.080505},
  url = {https://link.aps.org/doi/10.1103/PhysRevLett.122.080505}
}

@article{PhysRevLett.125.210501,
  title = {Optimal Universal Programming of Unitary Gates},
  author = {Yang, Yuxiang and Renner, Renato and Chiribella, Giulio},
  journal = {Phys. Rev. Lett.},
  volume = {125},
  issue = {21},
  pages = {210501},
  numpages = {5},
  year = {2020},
  month = {Nov},
  publisher = {American Physical Society},
  doi = {10.1103/PhysRevLett.125.210501},
  url = {https://link.aps.org/doi/10.1103/PhysRevLett.125.210501}
}

@article{Born1926,
	abstract = {Durch eine Untersuchung der Sto{\ss}vorg{\"a}nge wird die Auffassung entwickelt, da{\ss}die Quantenmechanik in der Schr{\"o}dingerschen Form nicht nur die station{\"a}ren Zust{\"a}nde, sondern auch die Quantenspr{\"u}nge zu beschreiben gestattet.},
	author = {Born, Max},
	date = {1926/12/01},
	date-added = {2025-11-27 19:33:40 +0800},
	date-modified = {2025-11-27 19:33:40 +0800},
	doi = {10.1007/BF01397477},
	id = {Born1926},
	isbn = {0044-3328},
	journal = {Zeitschrift f{\"u}r Physik},
	number = {12},
	pages = {863--867},
	title = {Zur Quantenmechanik der Sto{\ss}vorg{\"a}nge},
	url = {https://doi.org/10.1007/BF01397477},
	volume = {37},
	year = {1926},
	bdsk-url-1 = {https://doi.org/10.1007/BF01397477}}

@book{peres2002quantum,
  title={Quantum theory: concepts and methods},
  author={Peres, Asher},
  year={2002},
  url={https://link.springer.com/book/10.1007/0-306-47120-5},
  publisher={Springer}
}

@article{Chiribella_2016,
doi = {10.1088/1367-2630/18/9/093053},
url = {https://doi.org/10.1088/1367-2630/18/9/093053},
year = {2016},
month = {sep},
publisher = {IOP Publishing},
volume = {18},
number = {9},
pages = {093053},
author = {Chiribella, Giulio and Ebler, Daniel},
title = {Optimal quantum networks and one-shot entropies},
journal = {New Journal of Physics},
abstract = {We develop a semidefinite programming method for the optimization of quantum networks, including both causal networks and networks with indefinite causal structure. Our method applies to a broad class of performance measures, defined operationally in terms of interative tests set up by a verifier. We show that the optimal performance is equal to a max relative entropy, which quantifies the informativeness of the test. Building on this result, we extend the notion of conditional min-entropy from quantum states to quantum causal networks. The optimization method is illustrated in a number of applications, including the inversion, charge conjugation, and controlization of an unknown unitary dynamics. In the non-causal setting, we show a proof-of-principle application to the maximization of the winning probability in a non-causal quantum game.}
}

@ARTICLE{6773024,
  author={Shannon, C. E.},
  journal={The Bell System Technical Journal}, 
  title={A mathematical theory of communication}, 
  year={1948},
  volume={27},
  number={3},
  pages={379-423},
  keywords={},
  doi={10.1002/j.1538-7305.1948.tb01338.x}}

@article{Bisio2019,
author = {Bisio, Alessandro  and Perinotti, Paolo },
title = {Theoretical framework for higher-order quantum theory},
journal = {Proceedings of the Royal Society A: Mathematical, Physical and Engineering Sciences},
volume = {475},
number = {2225},
pages = {20180706},
year = {2019},
doi = {10.1098/rspa.2018.0706},

URL = {https://royalsocietypublishing.org/doi/abs/10.1098/rspa.2018.0706},
abstract = { Higher-order quantum theory is an extension of quantum theory where one introduces transformations whose input and output are transformations, thus generalizing the notion of channels and quantum operations. The generalization then goes recursively, with the construction of a full hierarchy of maps of increasingly higher order. The analysis of special cases already showed that higher-order quantum functions exhibit features that cannot be tracked down to the usual circuits, such as indefinite causal structures, providing provable advantages over circuital maps. The present treatment provides a general framework where this kind of analysis can be carried out in full generality. The hierarchy of higher-order quantum maps is introduced axiomatically with a formulation based on the language of types of transformations. Complete positivity of higher-order maps is derived from the general admissibility conditions instead of being postulated as in previous approaches. The recursive characterization of convex sets of maps of a given type is used to prove equivalence relations between different types. The axioms of the framework do not refer to the specific mathematical structure of quantum theory, and can therefore be exported in the context of any operational probabilistic theory. }
}

@misc{taranto2025higherorderquantumoperations,
      title={Higher-Order Quantum Operations}, 
      author={Philip Taranto and Simon Milz and Mio Murao and Marco Túlio Quintino and Kavan Modi},
      year={2025},
      eprint={2503.09693},
      archivePrefix={arXiv},
      primaryClass={quant-ph},
      url={https://arxiv.org/abs/2503.09693}, 
}

@article{RevModPhys.91.025001,
  title = {Quantum resource theories},
  author = {Chitambar, Eric and Gour, Gilad},
  journal = {Rev. Mod. Phys.},
  volume = {91},
  issue = {2},
  pages = {025001},
  numpages = {48},
  year = {2019},
  month = {Apr},
  publisher = {American Physical Society},
  doi = {10.1103/RevModPhys.91.025001},
  url = {https://link.aps.org/doi/10.1103/RevModPhys.91.025001}
}

@article{PhysRevA.103.062422,
  title = {Entanglement of a bipartite channel},
  author = {Gour, Gilad and Scandolo, Carlo Maria},
  journal = {Phys. Rev. A},
  volume = {103},
  issue = {6},
  pages = {062422},
  numpages = {21},
  year = {2021},
  month = {Jun},
  publisher = {American Physical Society},
  doi = {10.1103/PhysRevA.103.062422},
  url = {https://link.aps.org/doi/10.1103/PhysRevA.103.062422}
}

@misc{bauml2019resourcetheoryentanglementbipartite,
      title={Resource theory of entanglement for bipartite quantum channels}, 
      author={Stefan Bäuml and Siddhartha Das and Xin Wang and Mark M. Wilde},
      year={2019},
      eprint={1907.04181},
      archivePrefix={arXiv},
      primaryClass={quant-ph},
      url={https://arxiv.org/abs/1907.04181}, 
}

@article{Moravcikova_2010,
doi = {10.1088/1751-8113/43/27/275306},
url = {https://doi.org/10.1088/1751-8113/43/27/275306},
year = {2010},
month = {jun},
publisher = {},
volume = {43},
number = {27},
pages = {275306},
author = {Moravčíková, Lenka and Ziman, Mário},
title = {Entanglement-annihilating and entanglement-breaking channels},
journal = {Journal of Physics A: Mathematical and Theoretical},
abstract = {We introduce and investigate a family of entanglement-annihilating channels. These channels are capable of destroying any quantum entanglement within the system they act on. We show that they are not necessarily entanglement breaking. In order to achieve this result we analyze the subset of locally entanglement-annihilating channels. In this case, the same local noise applied on each subsystem individually is less entanglement annihilating (with respect to multi-partite entanglement) as the number of subsystems is increasing. Therefore, the bipartite case provides restrictions on the set of local entanglement-annihilating channels for the multipartite case. The introduced concepts are illustrated on the family of single-qubit depolarizing channels.}
}

@article{PhysRevA.88.032316,
  title = {Bipartite entanglement-annihilating maps: Necessary and sufficient conditions},
  author = {Filippov, Sergey N. and Ziman, M\'ario},
  journal = {Phys. Rev. A},
  volume = {88},
  issue = {3},
  pages = {032316},
  numpages = {7},
  year = {2013},
  month = {Sep},
  publisher = {American Physical Society},
  doi = {10.1103/PhysRevA.88.032316},
  url = {https://link.aps.org/doi/10.1103/PhysRevA.88.032316}
}

@article{Aubrun2023,
	abstract = {Every multipartite entangled quantum state becomes fully separable after an entanglement breaking quantum channel acted locally on each of its subsystems. Whether there are other quantum channels with this property has been an open problem with important implications for entanglement theory (e.g., for the distillation problem and the PPT squared conjecture). We cast this problem in the general setting of proper convex cones in finite-dimensional vector spaces. The max-entanglement annihilating maps transform the k-fold maximal tensor product of a cone {\$}{\$}{\{}{$\backslash$}textsf{\{}C{\}}{\}}{\_}1{\$}{\$}into the k-fold minimal tensor product of a cone {\$}{\$}{\{}{$\backslash$}textsf{\{}C{\}}{\}}{\_}2{\$}{\$}, and the pair {\$}{\$}({\{}{$\backslash$}textsf{\{}C{\}}{\}}{\_}1,{\{}{$\backslash$}textsf{\{}C{\}}{\}}{\_}2){\$}{\$}is called resilient if all max-entanglement annihilating maps are entanglement breaking. Our main result is that {\$}{\$}({\{}{$\backslash$}textsf{\{}C{\}}{\}}{\_}1,{\{}{$\backslash$}textsf{\{}C{\}}{\}}{\_}2){\$}{\$}is resilient if either {\$}{\$}{\{}{$\backslash$}textsf{\{}C{\}}{\}}{\_}1{\$}{\$}or {\$}{\$}{\{}{$\backslash$}textsf{\{}C{\}}{\}}{\_}2{\$}{\$}is a Lorentz cone. Our proof exploits the symmetries of the Lorentz cones and applies two constructions resembling protocols for entanglement distillation: As a warm-up, we use the multiplication tensors of real composition algebras to construct a finite family of generalized distillation protocols for Lorentz cones, containing the distillation protocol for entangled qubit states by Bennett et al. (Phys Rev Lett 76(5):722, 1996) as a special case. Then, we construct an infinite family of protocols using solutions to the Hurwitz matrix equations. After proving these results, we focus on maps between cones of positive semidefinite matrices, where we derive necessary conditions for max-entanglement annihilation similar to the reduction criterion in entanglement distillation. Finally, we apply results from the theory of Banach space tensor norms to show that the Lorentz cones are the only cones with a symmetric base for which a certain stronger version of the resilience property is satisfied.},
	author = {Aubrun, Guillaume and M{\"u}ller-Hermes, Alexander},
	date = {2023/06/01},
	date-added = {2025-11-29 07:30:54 +0800},
	date-modified = {2025-11-29 07:30:54 +0800},
	doi = {10.1007/s00220-022-04621-5},
	id = {Aubrun2023},
	isbn = {1432-0916},
	journal = {Communications in Mathematical Physics},
	number = {2},
	pages = {931--976},
	title = {Annihilating Entanglement Between Cones},
	url = {https://doi.org/10.1007/s00220-022-04621-5},
	volume = {400},
	year = {2023},
	bdsk-url-1 = {https://doi.org/10.1007/s00220-022-04621-5}}

@article{CHOI1975285,
title = {Completely positive linear maps on complex matrices},
journal = {Linear Algebra and its Applications},
volume = {10},
number = {3},
pages = {285-290},
year = {1975},
issn = {0024-3795},
doi = {https://doi.org/10.1016/0024-3795(75)90075-0},
url = {https://www.sciencedirect.com/science/article/pii/0024379575900750},
author = {Man-Duen Choi},
abstract = {A linear map Φ from Mn to Mm is completely positive iff it admits an expression Φ(A)=ΣiV∗iAVi where Vi are n×m matrices.}
}

@article{JAMIOLKOWSKI1972275,
title = {Linear transformations which preserve trace and positive semidefiniteness of operators},
journal = {Reports on Mathematical Physics},
volume = {3},
number = {4},
pages = {275-278},
year = {1972},
issn = {0034-4877},
doi = {https://doi.org/10.1016/0034-4877(72)90011-0},
url = {https://www.sciencedirect.com/science/article/pii/0034487772900110},
author = {A. Jamiołkowski},
abstract = {This work may be considered a completion of the paper by J. de Pillis: Linear transformations which preserve Hermitian and positive semidefinite operators, published in 1967 [2]: necessary conditions have been formulated. Let A1 be the full algebra of linear operators on the n-dimensional Hilbert space H1, and let A2 be the full algebra of linear operators on the m-dimensional Hilbert space H2. Let L(A1,A2) denote the complex spaceof linear maps from A1 to A2 and S denotes the cone of all T ϵ L(A1,A2 which send positive semidefinite operators from A1 to positive semidefinite operators from A2. The aim of this paper is to present a necessary and sufficient condition for a transformation in L(A1, A2) to be in the cone S, and to preserve trace of the operators.}
}

@article{PhysRevA.97.012127,
  title = {Non-Markovian quantum processes: Complete framework and efficient characterization},
  author = {Pollock, Felix A. and Rodr\'{\i}guez-Rosario, C\'esar and Frauenheim, Thomas and Paternostro, Mauro and Modi, Kavan},
  journal = {Phys. Rev. A},
  volume = {97},
  issue = {1},
  pages = {012127},
  numpages = {13},
  year = {2018},
  month = {Jan},
  publisher = {American Physical Society},
  doi = {10.1103/PhysRevA.97.012127},
  url = {https://link.aps.org/doi/10.1103/PhysRevA.97.012127}
}

@article{PRXQuantum.2.030201,
  title = {Quantum Stochastic Processes and Quantum non-Markovian Phenomena},
  author = {Milz, Simon and Modi, Kavan},
  journal = {PRX Quantum},
  volume = {2},
  issue = {3},
  pages = {030201},
  numpages = {81},
  year = {2021},
  month = {Jul},
  publisher = {American Physical Society},
  doi = {10.1103/PRXQuantum.2.030201},
  url = {https://link.aps.org/doi/10.1103/PRXQuantum.2.030201}
}

@article{RevModPhys.89.015001,
  title = {Dynamics of non-Markovian open quantum systems},
  author = {de Vega, In\'es and Alonso, Daniel},
  journal = {Rev. Mod. Phys.},
  volume = {89},
  issue = {1},
  pages = {015001},
  numpages = {58},
  year = {2017},
  month = {Jan},
  publisher = {American Physical Society},
  doi = {10.1103/RevModPhys.89.015001},
  url = {https://link.aps.org/doi/10.1103/RevModPhys.89.015001}
}

@article{PhysRevLett.101.150402,
  title = {Assessing Non-Markovian Quantum Dynamics},
  author = {Wolf, M. M. and Eisert, J. and Cubitt, T. S. and Cirac, J. I.},
  journal = {Phys. Rev. Lett.},
  volume = {101},
  issue = {15},
  pages = {150402},
  numpages = {4},
  year = {2008},
  month = {Oct},
  publisher = {American Physical Society},
  doi = {10.1103/PhysRevLett.101.150402},
  url = {https://link.aps.org/doi/10.1103/PhysRevLett.101.150402}
}

@article{LI20181,
title = {Concepts of quantum non-Markovianity: A hierarchy},
journal = {Physics Reports},
volume = {759},
pages = {1-51},
year = {2018},
note = {Concepts of quantum non-Markovianity: A hierarchy},
issn = {0370-1573},
doi = {https://doi.org/10.1016/j.physrep.2018.07.001},
url = {https://www.sciencedirect.com/science/article/pii/S0370157318301601},
author = {Li Li and Michael J.W. Hall and Howard M. Wiseman},
keywords = {Open quantum systems, Quantum Markovianity, Quantum non-Markovianity, Quantum measurement, Quantum control},
abstract = {Markovian approximation is a widely-employed idea in descriptions of the dynamics of open quantum systems (OQSs). Although it is usually claimed to be a concept inspired by classical Markovianity, the term quantum Markovianity is used inconsistently and often unrigorously in the literature. In this report we compare the descriptions of classical stochastic processes and quantum stochastic processes (as arising in OQSs), and show that there are inherent differences that lead to the non-trivial problem of characterizing quantum non-Markovianity. Rather than proposing a single definition of quantum Markovianity, we study a host of Markov-related concepts in the quantum regime. Some of these concepts have long been used in quantum theory, such as quantum white noise, factorization approximation, divisibility, and GKS–Lindblad master equation. Others are first proposed in this report, including those we call past–future independence, no (quantum) information backflow, and composability. All of these concepts are defined under a unified framework, which allows us to rigorously build hierarchy relations among them. With various examples, we argue that the current most often used definitions of quantum Markovianity in the literature do not fully capture the memoryless property of OQSs. In fact, quantum non-Markovianity is highly context-dependent. The results in this report, summarized as a hierarchy figure, bring clarity to the nature of quantum non-Markovianity.}
}

@article{PhysRevA.88.022318,
  title = {Quantum computations without definite causal structure},
  author = {Chiribella, Giulio and D'Ariano, Giacomo Mauro and Perinotti, Paolo and Valiron, Benoit},
  journal = {Phys. Rev. A},
  volume = {88},
  issue = {2},
  pages = {022318},
  numpages = {15},
  year = {2013},
  month = {Aug},
  publisher = {American Physical Society},
  doi = {10.1103/PhysRevA.88.022318},
  url = {https://link.aps.org/doi/10.1103/PhysRevA.88.022318}
}

@article{Oreshkov2012,
	abstract = {The idea that events obey a definite causal order is deeply rooted in our understanding of the world and at the basis of the very notion of time. But where does causal order come from, and is it a necessary property of nature? Here, we address these questions from the standpoint of quantum mechanics in a new framework for multipartite correlations that does not assume a pre-defined global causal structure but only the validity of quantum mechanics locally. All known situations that respect causal order, including space-like and time-like separated experiments, are captured by this framework in a unified way. Surprisingly, we find correlations that cannot be understood in terms of definite causal order. These correlations violate a 'causal inequality' that is satisfied by all space-like and time-like correlations. We further show that in a classical limit causal order always arises, which suggests that space-time may emerge from a more fundamental structure in a quantum-to-classical transition.},
	author = {Oreshkov, Ognyan and Costa, Fabio and Brukner, {\v C}aslav},
	date = {2012/10/02},
	date-added = {2025-11-29 21:17:07 +0800},
	date-modified = {2025-11-29 21:17:07 +0800},
	doi = {10.1038/ncomms2076},
	id = {Oreshkov2012},
	isbn = {2041-1723},
	journal = {Nature Communications},
	number = {1},
	pages = {1092},
	title = {Quantum correlations with no causal order},
	url = {https://doi.org/10.1038/ncomms2076},
	volume = {3},
	year = {2012},
	bdsk-url-1 = {https://doi.org/10.1038/ncomms2076}}

@article{Orus2019,
	abstract = {Originally developed in the context of condensed-matter physics and based on renormalization group ideas, tensor networks have been revived thanks to quantum information theory and the progress in understanding the role of entanglement in quantum many-body systems. Moreover, tensor network states have turned out to play a key role in other scientific disciplines. In this context, here I provide an overview of the basic concepts and key developments in the field. I briefly discuss the most important tensor network structures and algorithms, together with an outline of advances related to global and gauge symmetries, fermions, topological order, classification of phases, entanglement Hamiltonians, holografic duality, artificial intelligence, the 2D Hubbard model, 2D quantum antiferromagnets, conformal field theory, quantum chemistry, disordered systems and many-body localization.},
	author = {Or{\'u}s, Rom{\'a}n},
	date = {2019/09/01},
	date-added = {2025-11-29 21:21:43 +0800},
	date-modified = {2025-11-29 21:21:43 +0800},
	doi = {10.1038/s42254-019-0086-7},
	id = {Or{\'u}s2019},
	isbn = {2522-5820},
	journal = {Nature Reviews Physics},
	number = {9},
	pages = {538--550},
	title = {Tensor networks for complex quantum systems},
	url = {https://doi.org/10.1038/s42254-019-0086-7},
	volume = {1},
	year = {2019},
	bdsk-url-1 = {https://doi.org/10.1038/s42254-019-0086-7}}

@article{Verstraete01032008,
author = {F. Verstraete and V. Murg and J.I. Cirac},
title = {Matrix product states, projected entangled pair states, and variational renormalization group methods for quantum spin systems},
journal = {Advances in Physics},
volume = {57},
number = {2},
pages = {143--224},
year = {2008},
publisher = {Taylor \& Francis},
doi = {10.1080/14789940801912366},


URL = { 
    
        https://doi.org/10.1080/14789940801912366
    
    

},
abstract = { This article reviews recent developments in the theoretical understanding and the numerical implementation of variational renormalization group methods using matrix product states and projected entangled pair states. }
}

@article{RevModPhys.93.045003,
  title = {Matrix product states and projected entangled pair states: Concepts, symmetries, theorems},
  author = {Cirac, J. Ignacio and P\'erez-Garc\'{\i}a, David and Schuch, Norbert and Verstraete, Frank},
  journal = {Rev. Mod. Phys.},
  volume = {93},
  issue = {4},
  pages = {045003},
  numpages = {65},
  year = {2021},
  month = {Dec},
  publisher = {American Physical Society},
  doi = {10.1103/RevModPhys.93.045003},
  url = {https://link.aps.org/doi/10.1103/RevModPhys.93.045003}
}

@article{ORUS2014117,
title = {A practical introduction to tensor networks: Matrix product states and projected entangled pair states},
journal = {Annals of Physics},
volume = {349},
pages = {117-158},
year = {2014},
issn = {0003-4916},
doi = {https://doi.org/10.1016/j.aop.2014.06.013},
url = {https://www.sciencedirect.com/science/article/pii/S0003491614001596},
author = {Román Orús},
keywords = {Tensor networks, MPS, PEPS, Entanglement},
abstract = {This is a partly non-technical introduction to selected topics on tensor network methods, based on several lectures and introductory seminars given on the subject. It should be a good place for newcomers to get familiarized with some of the key ideas in the field, specially regarding the numerics. After a very general introduction we motivate the concept of tensor network and provide several examples. We then move on to explain some basics about Matrix Product States (MPS) and Projected Entangled Pair States (PEPS). Selected details on some of the associated numerical methods for 1d and 2d quantum lattice systems are also discussed.}
}

@article{Markov2008,
author = {Markov, Igor L. and Shi, Yaoyun},
title = {Simulating Quantum Computation by Contracting Tensor Networks},
journal = {SIAM Journal on Computing},
volume = {38},
number = {3},
pages = {963-981},
year = {2008},
doi = {10.1137/050644756},

URL = { 
    
        https://doi.org/10.1137/050644756
    
    

},
abstract = { The treewidth of a graph is a useful combinatorial measure of how close the graph is to a tree. We prove that a quantum circuit with T gates whose underlying graph has a treewidth d can be simulated deterministically in \$T^{O(1)}\exp[O(d)]\$ time, which, in particular, is polynomial in T if \$d=O(\log T)\$. Among many implications, we show efficient simulations for log-depth circuits whose gates apply to nearby qubits only, a natural constraint satisfied by most physical implementations. We also show that one-way quantum computation of Raussendorf and Briegel (Phys. Rev. Lett., 86 (2001), pp. 5188–5191), a universal quantum computation scheme with promising physical implementations, can be efficiently simulated by a randomized algorithm if its quantum resource is derived from a small-treewidth graph with a constant maximum degree. (The requirement on the maximum degree was removed in [I. L. Markov and Y. Shi, preprint:quant-ph/0511069].) }
}

@article{PhysRevLett.128.030501,
  title = {Simulation of Quantum Circuits Using the Big-Batch Tensor Network Method},
  author = {Pan, Feng and Zhang, Pan},
  journal = {Phys. Rev. Lett.},
  volume = {128},
  issue = {3},
  pages = {030501},
  numpages = {6},
  year = {2022},
  month = {Jan},
  publisher = {American Physical Society},
  doi = {10.1103/PhysRevLett.128.030501},
  url = {https://link.aps.org/doi/10.1103/PhysRevLett.128.030501}
}

@article{PhysRevLett.129.090502,
  title = {Solving the Sampling Problem of the Sycamore Quantum Circuits},
  author = {Pan, Feng and Chen, Keyang and Zhang, Pan},
  journal = {Phys. Rev. Lett.},
  volume = {129},
  issue = {9},
  pages = {090502},
  numpages = {6},
  year = {2022},
  month = {Aug},
  publisher = {American Physical Society},
  doi = {10.1103/PhysRevLett.129.090502},
  url = {https://link.aps.org/doi/10.1103/PhysRevLett.129.090502}
}

@article{PhysRevLett.113.030501,
  title = {Tensor Networks and Quantum Error Correction},
  author = {Ferris, Andrew J. and Poulin, David},
  journal = {Phys. Rev. Lett.},
  volume = {113},
  issue = {3},
  pages = {030501},
  numpages = {5},
  year = {2014},
  month = {Jul},
  publisher = {American Physical Society},
  doi = {10.1103/PhysRevLett.113.030501},
  url = {https://link.aps.org/doi/10.1103/PhysRevLett.113.030501}
}

@article{PhysRevLett.119.040502,
  title = {Tensor-Network Simulations of the Surface Code under Realistic Noise},
  author = {Darmawan, Andrew S. and Poulin, David},
  journal = {Phys. Rev. Lett.},
  volume = {119},
  issue = {4},
  pages = {040502},
  numpages = {5},
  year = {2017},
  month = {Jul},
  publisher = {American Physical Society},
  doi = {10.1103/PhysRevLett.119.040502},
  url = {https://link.aps.org/doi/10.1103/PhysRevLett.119.040502}
}

@article{PhysRevLett.127.040507,
  title = {Tensor-Network Codes},
  author = {Farrelly, Terry and Harris, Robert J. and McMahon, Nathan A. and Stace, Thomas M.},
  journal = {Phys. Rev. Lett.},
  volume = {127},
  issue = {4},
  pages = {040507},
  numpages = {6},
  year = {2021},
  month = {Jul},
  publisher = {American Physical Society},
  doi = {10.1103/PhysRevLett.127.040507},
  url = {https://link.aps.org/doi/10.1103/PhysRevLett.127.040507}
}

@article{PRXQuantum.3.020332,
  title = {Quantum Lego: Building Quantum Error Correction Codes from Tensor Networks},
  author = {Cao, ChunJun and Lackey, Brad},
  journal = {PRX Quantum},
  volume = {3},
  issue = {2},
  pages = {020332},
  numpages = {37},
  year = {2022},
  month = {May},
  publisher = {American Physical Society},
  doi = {10.1103/PRXQuantum.3.020332},
  url = {https://link.aps.org/doi/10.1103/PRXQuantum.3.020332}
}

@article{PhysRevLett.54.857,
  title = {Quantum mechanics versus macroscopic realism: Is the flux there when nobody looks?},
  author = {Leggett, A. J. and Garg, Anupam},
  journal = {Phys. Rev. Lett.},
  volume = {54},
  issue = {9},
  pages = {857--860},
  numpages = {0},
  year = {1985},
  month = {Mar},
  publisher = {American Physical Society},
  doi = {10.1103/PhysRevLett.54.857},
  url = {https://link.aps.org/doi/10.1103/PhysRevLett.54.857}
}

@article{Emary_2014,
doi = {10.1088/0034-4885/77/1/016001},
url = {https://doi.org/10.1088/0034-4885/77/1/016001},
year = {2013},
month = {dec},
publisher = {IOP Publishing},
volume = {77},
number = {1},
pages = {016001},
author = {Emary, Clive and Lambert, Neill and Nori, Franco},
title = {Leggett–Garg inequalities},
journal = {Reports on Progress in Physics},
abstract = {In contrast to the spatial Bell's inequalities which probe entanglement between spatially separated systems, the Leggett–Garg inequalities test the correlations of a single system measured at different times. Violation of a genuine Leggett–Garg test implies either the absence of a realistic description of the system or the impossibility of measuring the system without disturbing it. Quantum mechanics violates the inequalities on both accounts and the original motivation for these inequalities was as a test for quantum coherence in macroscopic systems. The last few years has seen a number of experimental tests and violations of these inequalities in a variety of microscopic systems such as superconducting qubits, nuclear spins, and photons. In this article, we provide an introduction to the Leggett–Garg inequalities and review these latest experimental developments. We discuss important topics such as the significance of the non-invasive measurability assumption, the clumsiness loophole, and the role of weak measurements. Also covered are some recent theoretical proposals for the application of Leggett–Garg inequalities in quantum transport, quantum biology and nano-mechanical systems.}
}

@article{Zych2019,
	abstract = {Time has a fundamentally different character in quantum mechanics and in general relativity. In quantum theory events unfold in a fixed order while in general relativity temporal order is influenced by the distribution of matter. When matter requires a quantum description, temporal order is expected to become non-classical---a scenario beyond the scope of current theories. Here we provide a direct description of such a scenario. We consider a thought experiment with a massive body in a spatial superposition and show how it leads to entanglement of temporal orders between time-like events. This entanglement enables accomplishing a task, violation of a Bell inequality, that is impossible under local classical temporal order; it means that temporal order cannot be described by any pre-defined local variables. A classical notion of a causal structure is therefore untenable in any framework compatible with the basic principles of quantum mechanics and classical general relativity.},
	author = {Zych, Magdalena and Costa, Fabio and Pikovski, Igor and Brukner, {\v C}aslav},
	date = {2019/08/21},
	date-added = {2025-11-29 22:57:39 +0800},
	date-modified = {2025-11-29 22:57:39 +0800},
	doi = {10.1038/s41467-019-11579-x},
	id = {Zych2019},
	isbn = {2041-1723},
	journal = {Nature Communications},
	number = {1},
	pages = {3772},
	title = {Bell's theorem for temporal order},
	url = {https://doi.org/10.1038/s41467-019-11579-x},
	volume = {10},
	year = {2019},
	bdsk-url-1 = {https://doi.org/10.1038/s41467-019-11579-x}}

@misc{liu2025spatialincompatibilitywitnessesquantum,
      title={Spatial Incompatibility Witnesses for Quantum Temporal Correlations}, 
      author={Xiangjing Liu and Harshit Verma and Yunlong Xiao and Oscar Dahlsten and Mile Gu},
      year={2025},
      eprint={2511.01179},
      archivePrefix={arXiv},
      primaryClass={quant-ph},
      url={https://arxiv.org/abs/2511.01179}, 
}

@article{PhysRevA.72.062323,
  title = {Quantum channels with memory},
  author = {Kretschmann, Dennis and Werner, Reinhard F.},
  journal = {Phys. Rev. A},
  volume = {72},
  issue = {6},
  pages = {062323},
  numpages = {19},
  year = {2005},
  month = {Dec},
  publisher = {American Physical Society},
  doi = {10.1103/PhysRevA.72.062323},
  url = {https://link.aps.org/doi/10.1103/PhysRevA.72.062323}
}

@article{Taranto2024characterising,
  doi = {10.22331/q-2024-05-02-1328},
  url = {https://doi.org/10.22331/q-2024-05-02-1328},
  title = {Characterising the {H}ierarchy of {M}ulti-time {Q}uantum {P}rocesses with {C}lassical {M}emory},
  author = {Taranto, Philip and Quintino, Marco T{\'{u}}lio and Murao, Mio and Milz, Simon},
  journal = {{Quantum}},
  issn = {2521-327X},
  publisher = {{Verein zur F{\"{o}}rderung des Open Access Publizierens in den Quantenwissenschaften}},
  volume = {8},
  pages = {1328},
  month = may,
  year = {2024}
}

@ARTICLE{7115934,
  author={Leung, Debbie and Matthews, William},
  journal={IEEE Transactions on Information Theory}, 
  title={On the Power of PPT-Preserving and Non-Signalling Codes}, 
  year={2015},
  volume={61},
  number={8},
  pages={4486-4499},
  keywords={Niobium;Quantum entanglement;Upper bound;Noise measurement;Rain;Data communication;Channel coding;Quantum channels;Quantum codes;Block codes;Finite blocklength;Channel coding;quantum channels;quantum codes;block codes;finite blocklength},
  doi={10.1109/TIT.2015.2439953}}

@article{PhysRevLett.124.120502,
  title = {Application of the Resource Theory of Channels to Communication Scenarios},
  author = {Takagi, Ryuji and Wang, Kun and Hayashi, Masahito},
  journal = {Phys. Rev. Lett.},
  volume = {124},
  issue = {12},
  pages = {120502},
  numpages = {8},
  year = {2020},
  month = {Mar},
  publisher = {American Physical Society},
  doi = {10.1103/PhysRevLett.124.120502},
  url = {https://link.aps.org/doi/10.1103/PhysRevLett.124.120502}
}

@ARTICLE{4626055,
  author={Devetak, Igor and Harrow, Aram W. and Winter, Andreas J.},
  journal={IEEE Transactions on Information Theory}, 
  title={A Resource Framework for Quantum Shannon Theory}, 
  year={2008},
  volume={54},
  number={10},
  pages={4587-4618},
  keywords={Quantum mechanics;Codes;Quantum entanglement;Protocols;Channel coding;Information processing;Calculus;Mathematics;Hilbert space;Purification;Asymptotic resource inequalities;family of quantum protocols;resource calculus;tradeoff curves},
  doi={10.1109/TIT.2008.928980}}

@article{holevo1973bounds,
  title={Bounds for the quantity of information transmitted by a quantum communication channel},
  author={Holevo, Alexander Semenovich},
  journal={Problemy Peredachi Informatsii},
  volume={9},
  number={3},
  pages={3--11},
  year={1973},
  url={https://www.mathnet.ru/php/archive.phtml?wshow=paper&jrnid=ppi&paperid=903&option_lang=eng},
  publisher={Russian Academy of Sciences, Branch of Informatics, Computer Equipment and~…}
}

@article{PhysRevA.55.1613,
  title = {Capacity of the noisy quantum channel},
  author = {Lloyd, Seth},
  journal = {Phys. Rev. A},
  volume = {55},
  issue = {3},
  pages = {1613--1622},
  numpages = {0},
  year = {1997},
  month = {Mar},
  publisher = {American Physical Society},
  doi = {10.1103/PhysRevA.55.1613},
  url = {https://link.aps.org/doi/10.1103/PhysRevA.55.1613}
}

@article{PhysRevA.56.131,
  title = {Sending classical information via noisy quantum channels},
  author = {Schumacher, Benjamin and Westmoreland, Michael D.},
  journal = {Phys. Rev. A},
  volume = {56},
  issue = {1},
  pages = {131--138},
  numpages = {0},
  year = {1997},
  month = {Jul},
  publisher = {American Physical Society},
  doi = {10.1103/PhysRevA.56.131},
  url = {https://link.aps.org/doi/10.1103/PhysRevA.56.131}
}

@article{PhysRevLett.83.3081,
  title = {Entanglement-Assisted Classical Capacity of Noisy Quantum Channels},
  author = {Bennett, Charles H. and Shor, Peter W. and Smolin, John A. and Thapliyal, Ashish V.},
  journal = {Phys. Rev. Lett.},
  volume = {83},
  issue = {15},
  pages = {3081--3084},
  numpages = {0},
  year = {1999},
  month = {Oct},
  publisher = {American Physical Society},
  doi = {10.1103/PhysRevLett.83.3081},
  url = {https://link.aps.org/doi/10.1103/PhysRevLett.83.3081}
}

@misc{gottesman1998heisenbergrepresentationquantumcomputers,
      title={The Heisenberg Representation of Quantum Computers}, 
      author={Daniel Gottesman},
      year={1998},
      eprint={quant-ph/9807006},
      archivePrefix={arXiv},
      primaryClass={quant-ph},
      url={https://arxiv.org/abs/quant-ph/9807006}, 
}

@misc{gottesman1997stabilizercodesquantumerror,
      title={Stabilizer Codes and Quantum Error Correction}, 
      author={Daniel Gottesman},
      year={1997},
      eprint={quant-ph/9705052},
      archivePrefix={arXiv},
      primaryClass={quant-ph},
      url={https://arxiv.org/abs/quant-ph/9705052}, 
}

@article{Gisin2007,
	abstract = {Quantum communication, and indeed quantum information in general, has changed the way we think about quantum physics. In 1984 and 1991, the first protocol for quantum cryptography and the first application of quantum non-locality, respectively, attracted interest from a diverse field of researchers in theoretical and experimental physics, mathematics and computer science. Since then we have seen a fundamental shift in how we understand information when it is encoded in quantum systems. We review the current state of research and future directions in this field of science with special emphasis on quantum key distribution and quantum networks.},
	author = {Gisin, Nicolas and Thew, Rob},
	date = {2007/03/01},
	date-added = {2025-11-30 11:40:58 +0800},
	date-modified = {2025-11-30 11:40:58 +0800},
	doi = {10.1038/nphoton.2007.22},
	id = {Gisin2007},
	isbn = {1749-4893},
	journal = {Nature Photonics},
	number = {3},
	pages = {165--171},
	title = {Quantum communication},
	url = {https://doi.org/10.1038/nphoton.2007.22},
	volume = {1},
	year = {2007},
	bdsk-url-1 = {https://doi.org/10.1038/nphoton.2007.22}}

@INPROCEEDINGS{548464,
  author={Shor, P.W.},
  booktitle={Proceedings of 37th Conference on Foundations of Computer Science}, 
  title={Fault-tolerant quantum computation}, 
  year={1996},
  volume={},
  number={},
  pages={56-65},
  keywords={Quantum computing;Fault tolerance;Quantum mechanics;Mechanical factors;Polynomials;Computational modeling;Circuits;Error correction codes;Decoding;Interference},
  doi={10.1109/SFCS.1996.548464}}

@article{Steane1999,
	abstract = {Quantum computing1---the processing of information according to the fundamental laws of physics---offers a means to solve efficiently a small but significant set of classically intractable problems. Quantum computers are based on the controlled manipulation of entangled quantum states, which are extremely sensitive to noise and imprecision; active correction of errors must therefore be implemented without causing loss of coherence. Quantum error-correction theory2,3,4,5,6,7,8,9 has made great progress in this regard, by predicting error-correcting `codeword'quantum states. But the coding is inefficient and requires many quantum bits10,11,12, which results in physically unwieldy fault-tolerant quantum circuits10,11,12,13,14,15,16,17,18. Here I report a general technique for circumventing the trade-off between the achieved noise tolerance and the scale-up in computer size that is required to realize the error correction. I adapt the recovery operation (the process by which noise is suppressed through error detection and correction) to simultaneously correct errors and perform a useful measurement that drives the computation. The result is that a quantum computer need be only an order of magnitude larger than the logic device contained within it. For example, the physical scale-up factor10,11 required to factorize a thousand-digit number is reduced from 1,500 to 22, while preserving the original tolerated gate error rate (10−5) and memory noise per bit (10−7). The difficulty of realizing a useful quantum computer is therefore significantly reduced.},
	author = {Steane, Andrew M. },
	date = {1999/05/01},
	date-added = {2025-11-30 11:44:35 +0800},
	date-modified = {2025-11-30 11:44:35 +0800},
	doi = {10.1038/20127},
	id = {Steane1999},
	isbn = {1476-4687},
	journal = {Nature},
	number = {6732},
	pages = {124--126},
	title = {Efficient fault-tolerant quantum computing},
	url = {https://doi.org/10.1038/20127},
	volume = {399},
	year = {1999},
	bdsk-url-1 = {https://doi.org/10.1038/20127}}

@article{PRXQuantum.5.020101,
  title = {Early Fault-Tolerant Quantum Computing},
  author = {Katabarwa, Amara and Gratsea, Katerina and Caesura, Athena and Johnson, Peter D.},
  journal = {PRX Quantum},
  volume = {5},
  issue = {2},
  pages = {020101},
  numpages = {20},
  year = {2024},
  month = {Jun},
  publisher = {American Physical Society},
  doi = {10.1103/PRXQuantum.5.020101},
  url = {https://link.aps.org/doi/10.1103/PRXQuantum.5.020101}
}

@article{Montanaro2016,
	abstract = {Quantum computers are designed to outperform standard computers by running quantum algorithms. Areas in which quantum algorithms can be applied include cryptography, search and optimisation, simulation of quantum systems and solving large systems of linear equations. Here we briefly survey some known quantum algorithms, with an emphasis on a broad overview of their applications rather than their technical details. We include a discussion of recent developments and near-term applications of quantum algorithms.},
	author = {Montanaro, Ashley},
	date = {2016/01/12},
	date-added = {2025-11-30 11:48:25 +0800},
	date-modified = {2025-11-30 11:48:25 +0800},
	doi = {10.1038/npjqi.2015.23},
	id = {Montanaro2016},
	isbn = {2056-6387},
	journal = {npj Quantum Information},
	number = {1},
	pages = {15023},
	title = {Quantum algorithms: an overview},
	url = {https://doi.org/10.1038/npjqi.2015.23},
	volume = {2},
	year = {2016},
	bdsk-url-1 = {https://doi.org/10.1038/npjqi.2015.23}}

@article{childs2017lecture,
  title={Lecture notes on quantum algorithms},
  author={Childs, Andrew M},
  journal={Lecture notes at University of Maryland},
  volume={5},
  url={https://www.cs.umd.edu/~amchilds/qa/},
  year={2017}
}

@book{Dalzell2025, place={Cambridge}, title={Quantum Algorithms: A Survey of Applications and End-to-end Complexities}, publisher={Cambridge University Press}, author={Dalzell, Alexander M. and McArdle, Sam and Berta, Mario and Bienias, Przemyslaw and Chen, Chi-Fang and Gilyén, András and Hann, Connor T. and Kastoryano, Michael J. and Khabiboulline, Emil T. and Kubica, Aleksander and et al.}, url={https://www.cambridge.org/core/books/quantum-algorithms/EF7A52B88199BC0DA5A2CC99794A8C39}, year={2025}}

@article{RevModPhys.68.733,
  title = {Quantum computation and Shor's factoring algorithm},
  author = {Ekert, Artur and Jozsa, Richard},
  journal = {Rev. Mod. Phys.},
  volume = {68},
  issue = {3},
  pages = {733--753},
  numpages = {0},
  year = {1996},
  month = {Jul},
  publisher = {American Physical Society},
  doi = {10.1103/RevModPhys.68.733},
  url = {https://link.aps.org/doi/10.1103/RevModPhys.68.733}
}

@article{RevModPhys.82.1,
  title = {Quantum algorithms for algebraic problems},
  author = {Childs, Andrew M. and van Dam, Wim},
  journal = {Rev. Mod. Phys.},
  volume = {82},
  issue = {1},
  pages = {1--52},
  numpages = {0},
  year = {2010},
  month = {Jan},
  publisher = {American Physical Society},
  doi = {10.1103/RevModPhys.82.1},
  url = {https://link.aps.org/doi/10.1103/RevModPhys.82.1}
}

@article{Shor1997,
author = {Shor, Peter W.},
title = {Polynomial-Time Algorithms for Prime Factorization and Discrete Logarithms on a Quantum Computer},
journal = {SIAM Journal on Computing},
volume = {26},
number = {5},
pages = {1484-1509},
year = {1997},
doi = {10.1137/S0097539795293172},

URL = { 
    
        https://doi.org/10.1137/S0097539795293172
    
    

},
abstract = { A digital computer is generally believed to be an efficient universal computing device; that is, it is believed able to simulate any physical computing device with an increase in computation time by at most a polynomial factor. This may not be true when quantum mechanics is taken into consideration. This paper considers factoring integers and finding discrete logarithms, two problems which are generally thought to be hard on a classical computer and which have been used as the basis of several proposed cryptosystems. Efficient randomized algorithms are given for these two problems on a hypothetical quantum computer. These algorithms take a number of steps polynomial in the input size, e.g., the number of digits of the integer to be factored. }
}

@article{Simon1997,
author = {Simon, Daniel R.},
title = {On the Power of Quantum Computation},
journal = {SIAM Journal on Computing},
volume = {26},
number = {5},
pages = {1474-1483},
year = {1997},
doi = {10.1137/S0097539796298637},

URL = { 
    
        https://doi.org/10.1137/S0097539796298637
    
    

},
abstract = { The quantum model of computation is a model, analogous to the probabilistic Turing machine (PTM), in which the normal laws of chance are replaced by those obeyed by particles on a quantum mechanical scale, rather than the rules familiar to us from the macroscopic world. We present here a problem of distinguishing between two fairly natural classes of functions, which can provably be solved exponentially faster in the quantum model than in the classical probabilistic one, when the function is given as an oracle drawn equiprobably from the uniform distribution on either class. We thus offer compelling evidence that the quantum model may have significantly more complexity theoretic power than the PTM. In fact, drawing on this work, Shor has recently developed remarkable new quantum polynomial-time algorithms for the discrete logarithm and integer factoring problems. }
}

@inproceedings{Grover1996,
author = {Grover, Lov K.},
title = {A fast quantum mechanical algorithm for database search},
year = {1996},
isbn = {0897917855},
publisher = {Association for Computing Machinery},
address = {New York, NY, USA},
url = {https://doi.org/10.1145/237814.237866},
doi = {10.1145/237814.237866},
booktitle = {Proceedings of the Twenty-Eighth Annual ACM Symposium on Theory of Computing},
pages = {212–219},
numpages = {8},
location = {Philadelphia, Pennsylvania, USA},
series = {STOC '96}
}
\end{document}